\documentclass[a4paper,onecolumn,11pt]{quantumarticle}
\pdfoutput=1
\usepackage{graphicx,color}
\usepackage{amsmath}
\usepackage{amssymb}
\usepackage{amsthm}
\usepackage{revsymb}
\usepackage{bm}
\usepackage{hyperref}
\usepackage{cite}
\usepackage{enumitem}  
\newtheorem{thm}{Theorem}
\newtheorem{prop}[thm]{Proposition}
\newtheorem{lemma}[thm]{Lemma}
\newtheorem{corol}[thm]{Corollary}

\theoremstyle{definition}
\newtheorem{definition}[thm]{Definition}
\newtheorem{example}{Example}

\newtheorem{remark}{Remark}

\numberwithin{equation}{section}

\newcommand{\rmd}{\mathrm{d}}
\newcommand{\rmi}{\mathrm{i}}
\newcommand{\rme}{\mathrm{e}}

\newcommand{\Tr}{\operatorname{Tr}}
\renewcommand{\Re}{\operatorname{Re}}
\renewcommand{\Im}{\operatorname{Im}}
\newcommand{\rank}{\operatorname{rank}}
\newcommand{\diag}{\operatorname{diag}}

\newcommand{\T}{\operatorname{T}}

\newcommand{\ket}[1]{|{#1}\rangle}
\newcommand{\bra}[1]{\langle{#1}|}

\newcommand{\ii}{\mathrm{i}}
\newcommand{\ee}{\mathrm{e}}
\newcommand{\dd}{\mathrm{d}}

\definecolor{dgreen}{rgb}{0,0.5,0}

\definecolor{dblue}{rgb}{0,0,0.6}

\definecolor{dred}{rgb}{0.784,0,0}

\definecolor{dorange}{cmyk}{0,0.72,1,0.16}
\definecolor{dmagenta}{rgb}{0.847,0.149,0.490}

\definecolor{delete}{cmyk}{0.5,0,0,0}

\begin{document}

\title[RWA for Open Systems]{Rotating-Wave and Secular Approximations for Open Quantum Systems }
% \date{February 12, 2026}
\author{Daniel Burgarth}
%\email{daniel.burgarth@fau.de}
\orcid{0000-0003-4063-1264}
\affiliation{Department Physik, Friedrich-Alexander-Universität Erlangen-Nürnberg, Staudtstraße 7, 91058 Erlangen, Germany}
\author{Paolo Facchi}
%\email{paolo.facchi@ba.infn.it}
\orcid{0000-0001-9152-6515}
\affiliation{Dipartimento di Fisica, Universit\`a di Bari, I-70126 Bari, Italy}
\affiliation{INFN, Sezione di Bari, I-70126 Bari, Italy}
\author{Giovanni Gramegna}
%\email{giovanni.gramegna@units.it}
\orcid{0000-0001-7532-1704}
\affiliation{Dipartimento di Fisica, Universit\`a di Bari, I-70126 Bari, Italy}
\affiliation{INFN, Sezione di Bari, I-70126 Bari, Italy}
\author{Kazuya Yuasa}
%\email{yuasa@waseda.jp}
\orcid{0000-0001-5314-2780}
\affiliation{Department of Physics, Waseda University, Tokyo 169-8555, Japan}
%%%%%%%%%%%%%%%%%%%%%%%%%%%%%%%%%%%
\maketitle
\begin{abstract}
 We derive a nonperturbative bound on the distance between evolutions of open quantum systems described by time-dependent generators. 
 We show how this result can be employed to provide an explicit upper bound on the error of the rotating-wave approximation in the presence of dissipation and decoherence. 
 We apply the derived bound 
 to the strong-coupling limit in 
 open quantum systems 
 and to the secular approximation used to obtain 
 a master equation from the Redfield equation.
\end{abstract}
%\tableofcontents

\section{Introduction}
The rotating-wave approximation (RWA) is one of the most widely used tools in quantum physics, underpinning simplified and effective descriptions in quantum optics, condensed-matter physics, and quantum information science~\cite{bloch1940magnetic,shirley1965solution,haeberlen1968coherent,Agarwal,Nazir}. By neglecting rapidly oscillating terms in an interaction picture, the RWA allows one to replace a complicated time-dependent dynamics with a tractable effective generator, often capturing the essential long-time behavior of the system. Despite its ubiquity and practical success, the approximation is typically justified through heuristic arguments or perturbative reasoning, and rigorous quantitative error bounds are comparatively scarce.

Recently, explicit bounds for closed systems have been developed for a variety of models, often based on norm estimates or integration-by-parts techniques~\cite{burgarth2022one,burgarth2024taming,richter2024quantifying,heib2025bounding,dey2025error}. These results clarify the role of the separation of timescales and show how fast oscillations suppress certain contributions to the evolution. However, in many physical context where one would apply the RWA, the  relevant systems are intrinsically open, and hence it is necessary to extend such results to quantum dynamical evolutions described by possibly time-dependent generators of the Gorini-Kossakowski-Lindblad-Sudarshan (GKLS) form~\cite{lindblad1976generators,gorini1976completely,FloquetLindbladian}. Such an extension presents both conceptual and technical challenges.

From a conceptual point of view, it is natural to ask how the dissipative part of the generator of the evolution is affected when the RWA is applied to the Hamiltonian part of the generator~\cite{band2015open,fleming2010rotating}. Naively, one might apply the RWA to the Hamiltonian part and subsequently model the dissipation using a GKLS generator with fixed decay rates, without explicitly accounting for the effect of the transformation on the noise terms. However, a rigorous approach requires to deal with both the Hamiltonian  and the noise part of the evolution with a suitable rotating reference frame, which might affect the noise~\cite{fleming2010rotating,Kohler}. This also raises the question whether the resulting effective generator still has the GKLS form. 
In this regard, one of the paradigmatic examples in which the RWA is performed is the microscopic derivation of a GKLS generator from the Redfield equation, which is not in the GKLS form. Nevertheless, in some settings one works directly with the Redfield equation instead of its GKLS counterpart, despite the possibility of observing unphysical effects~\cite{Hartmann,benatti2022local,Trushechkin}. It is therefore natural to ask how close the resulting dynamics are, and under what conditions the two descriptions meaningfully differ.
Furthermore, one may ask whether the RWA can still be justified when the rapidly oscillating (``rotating'') components of the generator includes dissipation as well, rather than only the Hamiltonian term.

From the technical point of view, the proof strategy used in the unitary case hinges on explicitly inverting the evolution. While the evolution of an open quantum system is still mathematically invertible (as a linear map), its inverse is in general not contractive. In fact, the norm of the inverse map may grow with the relevant limiting parameter, which requires a careful choice of the operators to isolate in order to obtain a useful bound. Moreover, many techniques in the closed-system scenario revolve around the spectral representation of the generators, while generators of completely positive trace-preserving (CPTP) evolutions are not guaranteed to be diagonalizable.

In this work, we address these questions by deriving nonperturbative bounds, general structure theorems, and illustrating examples, furthermore unifying several known results in open quantum systems. As a main application, we use these bounds to obtain an explicit error estimate for the RWA in the presence of dissipation. This allows us to clarify how the dissipative part of the generator should be treated within the approximation and to quantify the regime in which the RWA remains valid. Moreover, our results provide an alternative route to strong-coupling limits and shed light on the secular approximation commonly used in the derivation of GKLS master equations.

\subsection{Summary of the Results and Outline of the Paper}
The rest of the paper is structured as follows.

In Section~\ref{sec:Preliminaries}, we introduce notation and basic concepts concerning evolution operators generated by time-dependent GKLS generators.

In Section~\ref{sec:IntPartLemma}, we provide the main technical tool of the paper, an integration-by-part lemma (Lemma~\ref{lem:IntParts}) that underlies our error estimates. The essential improvement provided by this result with respect to the standard Duhamel formula (Lemma~\ref{lem:GroupDiff}) is the introduction of a reference frame $\Lambda_0(t,s)$ (not necessarily unitary) and the isolation of the integral action $\mathcal{S}_{12}(t)$ in this reference frame.

In Section~\ref{sec:Application}, we apply this framework to derive bounds when the generator contains a strong part responsible for the fastly oscillating terms. Theorem~\ref{thm:ConstL0} is the main theorem of the paper and contains two error bounds: one on the error of the RWA~\eqref{eqn:BoundThm2}, and the other on the distance between the true evolution and the evolution projected on the peripheral subspace~\eqref{eqn:BoundThm2P}. We stress that in Theorem~\ref{thm:ConstL0} we do not assume the generator to have the GKLS form. In Remark~\ref{rmk:PhysicalBound}, we provide an improved version of the bounds which are much tighter if the generator of the evolution is endowed with an additional structure, e.g.~of the GKLS form. Corollary~\ref{cor:ConstL01} specializes Theorem~\ref{thm:ConstL0} to the RWA, and Corollary~\ref{cor:ConstL02} takes into account the possibility of having different timescales. In addition, Corollary~\ref{cor:StrongLimit} shows how the framework can be used to obtain a known result (the strong-coupling limit).

In Section~\ref{sec:Examples}, we illustrate our bounds with concrete examples and compare them with numerics.

Finally, in Section~\ref{sec:SecApp}, we apply the main result to provide an explicit bound on the distance between the Redfield evolution and the GKLS equation obtained in the secular approximation.

The technical results used to derive the bounds in the diamond norm are collected in Appendix~\ref{app:DiamondNorm}, while some elementary facts on superoperator algebra which are used in the examples are recalled in Appendix~\ref{app:ElementaryFacts}\@.

\section{Preliminaries on Evolution Operators}
\label{sec:Preliminaries}
Let us consider a norm-continuous time-dependent family $t \mapsto \mathcal{L}(t)$ of bounded operators on a Banach space, and denote with $\Lambda(t,s)$ the evolution operator they generate from time $s$ to time $t$ with $0\leq s\leq t$, which is the solution of the evolution equations
\begin{equation}\label{eq:evolution}
	\frac{\partial}{\partial t}\Lambda(t,s)
	=\mathcal{L}(t)\Lambda(t,s),\qquad
	\frac{\partial}{\partial s}\Lambda(t,s)
	=-\Lambda(t,s)\mathcal{L}(s),\qquad
	\Lambda(s,s)=1.
\end{equation}
We will often use the following notations, setting one of the two time arguments of $\Lambda(t,s)$ to $0$,
\begin{equation}
	\Lambda(t)=\Lambda(t,0),\qquad
	\Lambda^{-1}(t)=\Lambda(0,t).
\end{equation}
In order to derive our bounds, we will use the $L^\infty$ and $L^1$ norms, defined as follows. 
Given a continuous family $u\mapsto \mathcal{A}(u)$ of bounded operators on a Banach space, with $u\in[s,t]$, we define
\begin{equation}
\|\mathcal{A}\|_{\infty,[s,t]}:=\sup_{u\in[s,t]}\|\mathcal{A}(u)\|,\qquad
\|\mathcal{A}\|_{1,[s,t]}:=\int_s^t\rmd u\,\|\mathcal{A}(u)\|,
\end{equation}
for $t\ge s$, where the norm $\|\mathcal{A}(u)\|$ can be a general operator norm.
Among various operator norms, we will in particular use the diamond norm $\|\mathcal{A}(u)\|_\diamond$ for maps describing the evolutions of quantum systems, whose properties are recalled in Appendix~\ref{app:DiamondNorm}\@.
We will also use the simplified notations
\begin{equation}
\|\mathcal{A}\|_{\infty,t}=\|\mathcal{A}\|_{\infty,[0,t]},\qquad
\|\mathcal{A}\|_{1,t}=\|\mathcal{A}\|_{1,[0,t]},
\end{equation}
for $t\ge0$.

We are in particular interested in the physical evolutions of open quantum systems.
They are described by completely positive and trace-preserving (CPTP) maps~\cite{alicki2007quantum,nielsen2001quantum,chruscinski2017brief}\@.
The framework we are going to establish, however, does not really need the CPTP properties, but the uniform boundedness of evolution is enough.
Therefore, we will often focus on families $\{\mathcal{L}(t)\}_{0\leq t\leq T}$ of generators of contraction semigroups, satisfying $\|\rme^{s\mathcal{L}(t)}\|\le1$ for all $s\ge0$ and $t\in[0,T]$.
\begin{prop}[Ref.~\cite{reed1975ii}, Theorem~X.70\@. See also Refs.~\cite{kato1953integration,avron2012adiabatic}]\label{prop:Contraction}
	Let $t\in[0,T]\mapsto \mathcal{L}(t)$ be a continuous family of bounded generators of contraction semigroups. Then, the time propagator generated by $\mathcal{L}(t)$ according to~\eqref{eq:evolution} is a contraction, i.e.~$\|\Lambda(t,s)\|\le1$ for $0\le s\le t\le T$. 
\end{prop}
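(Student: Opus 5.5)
The plan is a product-integral (time-ordered exponential) approximation. The propagator is the norm limit of ordered products of semigroups frozen at the grid times, and each factor in such a product is a contraction.

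Fix $0\le s\le t\le T$ and a partition $s=t_0<t_1<\dots<t_n=t$ with mesh $\delta=\max_k(t_k-t_{k-1})$. Define the approximant
\begin{equation}
\Lambda_n(t,s)=\rme^{(t_n-t_{n-1})\mathcal{L}(t_{n-1})}\cdots\rme^{(t_1-t_0)\mathcal{L}(t_0)} .
\end{equation}
By hypothesis each factor satisfies $\|\rme^{\tau\mathcal{L}(t_k)}\|\le1$. Submultiplicativity of the operator norm then gives $\|\Lambda_n(t,s)\|\le1$ for every $n$.

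The main step is to show that $\Lambda_n(t,s)\to\Lambda(t,s)$ in norm as $\delta\to0$. Once this holds, the bound passes to the limit and gives $\|\Lambda(t,s)\|\le1$.

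To prove the convergence, I telescope:
\begin{equation}
\Lambda(t,s)-\Lambda_n(t,s)=\sum_{k=1}^n \Lambda(t,t_k)\bigl[\Lambda(t_k,t_{k-1})-\rme^{(t_k-t_{k-1})\mathcal{L}(t_{k-1})}\bigr]\Lambda_n(t_{k-1},s).
\end{equation}
Here $\Lambda_n(t_{k-1},s)$ denotes the partial product of the first $k-1$ factors, so its norm is at most $1$. For the other factors I only need crude bounds. Set $M=\sup_{u\in[0,T]}\|\mathcal{L}(u)\|$, which is finite by continuity on a compact interval. Then Gronwall applied to~\eqref{eq:evolution} gives $\|\Lambda(t,t_k)\|\le\rme^{MT}$.

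For the bracketed term, Duhamel's formula gives, with $\tau=t_k-t_{k-1}$,
\begin{equation}
\Lambda(t_k,t_{k-1})-\rme^{\tau\mathcal{L}(t_{k-1})}=\int_{t_{k-1}}^{t_k}\rmd u\,\rme^{(t_k-u)\mathcal{L}(t_{k-1})}\bigl[\mathcal{L}(u)-\mathcal{L}(t_{k-1})\bigr]\Lambda(u,t_{k-1}).
\end{equation}
Its norm is bounded by $\tau\,\rme^{MT}\omega(\delta)$, where $\omega$ is the modulus of continuity of $\mathcal{L}$ on $[0,T]$. Uniform continuity on the compact interval gives $\omega(\delta)\to0$ as $\delta\to0$. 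Summing over $k$ yields
\begin{equation}
\|\Lambda(t,s)-\Lambda_n(t,s)\|\le (t-s)\,\rme^{2MT}\omega(\delta)\to0 .
\end{equation}

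The expected obstacle is only the bookkeeping that keeps the non-contractive factor bounds uniform in $n$. The crude $\rme^{MT}$ estimates suffice because the total error is $O(\omega(\delta))$. Existence and uniqueness of $\Lambda$ are standard for norm-continuous bounded generators, via the Dyson series. An alternative would be a dissipativity argument with semi-inner products (Lumer–Phillips), but the product formula avoids Banach-space duality altogether.
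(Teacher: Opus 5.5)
Your proposal is correct and follows the same route as the result the paper cites. The paper gives no proof of its own, and Reed--Simon's Theorem~X.70 (Kato's theorem) constructs the propagator exactly as the limit of ordered products of frozen contraction semigroups; in your bounded, norm-continuous setting the telescoping identity plus the Duhamel estimate $\|\Lambda(t,s)-\Lambda_n(t,s)\|\le(t-s)\,\rme^{2MT}\omega(\delta)$ makes the convergence step elementary, so none of the common-domain and regularity hypotheses of that theorem are needed.
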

\begin{remark}
\label{rmk:Diamond}
If $\mathcal{L}(t)$ generates a CPTP map $\Lambda(t,s)$ for $t\ge s$, its diamond norm is $\|\Lambda(t,s)\|_\diamond=1$ for $t\ge s$. See Ref.~\cite{watrous2018theory} and Appendix~\ref{app:DiamondNorm}\@.
\end{remark}

A characterization of the generators $\mathcal{L}(t)$ of CPTP evolutions is given in the following.
\begin{prop}[Ref.~\cite{chruscinski2022dynamical}, Corollary~7]
	Let $\Lambda(t,s)$ be a family of evolution operators satisfying~\eqref{eq:evolution} for all $0\leq s\leq t\leq T$. Then, $\Lambda(t,s)$ is a CPTP map for all $0\leq s\leq t\leq T$, if and only if $\{\mathcal{L}(t)\}_{t\in[0,T]}$ are generators of the GKLS form, i.e.
	\begin{equation}
		\mathcal{L}(t)\varrho =-\rmi [H(t),\varrho]-\frac{1}{2}\sum_{k}\gamma_k(t)\,\Bigl(
		\{V_k(t)^\dagger V_k(t),\varrho\}
		-2V_k(t)\varrho V_k(t)^\dagger
		\Bigr),
	\end{equation}
	where $t\in[0,T]\rightarrow H(t)$ is a family of self-adjoint operators, $\gamma_k(t)\geq 0$ for all $k$ and $t\geq 0$, and $\{V_k(t)\}$ are called jump operators.
\end{prop}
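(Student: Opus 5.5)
We prove both directions of the equivalence by reducing to the time-independent GKLS theorem, applied at each fixed time.

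\emph{Sufficiency.} Assume each $\mathcal{L}(t)$ has the GKLS form. For fixed $t$, the Gorini--Kossakowski--Sudarshan--Lindblad theorem says that $\rme^{s\mathcal{L}(t)}$ is CPTP for all $s\ge0$. We then approximate the time-ordered propagator by a product formula:
\begin{equation*}
\Lambda(t,s)=\lim_{N\to\infty}\rme^{\delta\mathcal{L}(t_{N})}\cdots\rme^{\delta\mathcal{L}(t_{1})},\qquad \delta=\frac{t-s}{N},\quad t_k=s+k\delta .
\end{equation*}
The limit holds in norm. This follows from norm continuity of $t\mapsto\mathcal{L}(t)$ and boundedness of the generators, via a standard Duhamel/Gronwall estimate: the local error is $O(\delta^2)$ uniformly, and the propagators are uniformly bounded by $\rme^{\|\mathcal{L}\|_{\infty}(t-s)}$. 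Each factor in the product is CPTP. Finite products of CPTP maps are CPTP. The set of CPTP maps is closed in finite dimension, or in norm on the bounded maps. Hence $\Lambda(t,s)$ is CPTP.

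\emph{Necessity.} Assume $\Lambda(t,s)$ is CPTP for all $0\le s\le t\le T$. Since
\begin{equation*}
\mathcal{L}(t)=\lim_{\epsilon\downarrow0}\frac{\Lambda(t+\epsilon,t)-1}{\epsilon},
\end{equation*}
the generator is a derivative at the identity of a one-parameter family of CPTP maps. We then show that $\mathcal{L}(t)$ is conditionally completely positive and trace-annihilating.

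Trace preservation gives $\Tr\Lambda(t+\epsilon,t)\varrho=\Tr\varrho$, so $\Tr\mathcal{L}(t)\varrho=0$ and $\mathcal{L}(t)$ annihilates the trace. Hermiticity preservation passes to the limit in the same way.

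For conditional complete positivity, take the Choi matrix $C_\epsilon$ of $\Lambda(t+\epsilon,t)$, which is positive semidefinite. Let $P_\perp$ be the projector orthogonal to the maximally entangled vector $\ket{\Omega}$. Then $P_\perp C_\epsilon P_\perp\ge0$, and $P_\perp C_{\mathrm{id}}P_\perp=0$ because the Choi matrix of the identity is proportional to $\ket{\Omega}\bra{\Omega}$. Dividing by $\epsilon$ and letting $\epsilon\to0$ gives $P_\perp C_{\mathcal{L}(t)}P_\perp\ge0$.

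The Lindblad/GKS structure theorem, in Kossakowski's characterization, now applies. A Hermiticity-preserving, trace-annihilating, conditionally completely positive map can be written as
\begin{equation*}
-\rmi[H,\cdot]+\sum_k\gamma_k\Bigl(V_k\cdot V_k^\dagger-\tfrac12\{V_k^\dagger V_k,\cdot\}\Bigr),
\end{equation*}
with $H$ self-adjoint and $\gamma_k\ge0$. To obtain this, diagonalize the restricted Choi matrix to get the $\gamma_k$ and $V_k$. The leftover part of the form $K\cdot+\cdot K^\dagger$ is then fixed by trace annihilation to have $K=-\rmi H-\frac12\sum_k\gamma_kV_k^\dagger V_k$. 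Applying this pointwise in $t$ gives $H(t)$, $\gamma_k(t)$ and $V_k(t)$.

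\emph{Main obstacle.} The difficulty is in the necessity direction. We must justify that the one-sided derivative $\partial_t\Lambda(t,s)|_{s=t}$ exists and equals $\mathcal{L}(t)$, which follows from~\eqref{eq:evolution} and norm continuity. We must also check that positivity of the Choi matrix survives the division by $\epsilon$ only on the orthogonal complement of $\ket{\Omega}$. This is precisely why the conclusion is conditional complete positivity rather than complete positivity.

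Measurability or continuity of the chosen $H(t)$, $\gamma_k(t)$, $V_k(t)$ in $t$ is not required by the statement. It could be ensured, if desired, by a continuous choice of the Choi decomposition.
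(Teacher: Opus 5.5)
The paper does not prove this proposition. It quotes it from Ref.~\cite{chruscinski2022dynamical} (Corollary~7 there) as background, so there is no in-paper argument to compare against. Your proof is correct and follows the standard route.
For sufficiency you use a time-ordered product formula together with closedness of the CPTP set. For necessity you differentiate the CPTP family $\Lambda(t+\epsilon,t)$ at the identity, obtaining a Hermiticity-preserving, trace-annihilating, conditionally completely positive $\mathcal{L}(t)$, and then apply the Gorini--Kossakowski--Sudarshan normal form. Your recovery of $K=-\rmi H-\frac{1}{2}\sum_k\gamma_kV_k^\dagger V_k$ from the row and column of the coefficient matrix belonging to the basis element proportional to the identity is right.
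Compared with the bare citation, your version is self-contained and uses only tools already in the paper: Lemma~\ref{lem:GroupDiff} for the product formula, and the coefficient/Choi matrix of Proposition~\ref{prop:ChoiMatrix} for necessity. It also shows that finite dimension enters only through the Choi-matrix step.

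A few minor points to tighten:
\begin{itemize}
\item Under mere norm continuity of $t\mapsto\mathcal{L}(t)$, bounding the identity in Lemma~\ref{lem:GroupDiff} gives a local error of order $\delta\,\theta(\delta)$, where $\theta$ is the modulus of continuity of $\mathcal{L}$ on $[0,T]$. That is $o(\delta)$, not $O(\delta^2)$, which would require Lipschitz continuity. It still suffices, because the accumulated error is of order $(t-s)\,\theta(\delta)\to0$.
\item The one-sided derivative you list as the main obstacle follows immediately from \eqref{eq:evolution}. The only care needed is at $t=T$. There you should use $\Lambda(T,T-\epsilon)$, or appeal to continuity of $\mathcal{L}$ and closedness of the set of GKLS generators.
\item If you want $V_k(t)$ continuous in $t$, do not diagonalize, since eigenvectors can jump at eigenvalue crossings. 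Instead take $\gamma_k=1$ and $V_k(t)=\sum_i[c(t)^{1/2}]_{ik}F_i$. Here $c(t)^{1/2}$ is the positive square root of the restricted coefficient matrix, which depends continuously on $t$, and $H(t)$ is then continuous as well.
\end{itemize}
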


The time propagators defined in~\eqref{eq:evolution} can be expressed by the Dyson series
\begin{equation}
	\Lambda(t,s)
	%&
	=\T\exp\!\left(
	\int_s^t\rmd u\,\mathcal{L}(u)
	\right)
	%\nonumber\\
	%&
	=1+\sum_{n=1}^\infty\int_s^t\rmd u_1\cdots\int_s^{u_{n-1}}\rmd u_n\,
	\mathcal{L}(u_1)
	\cdots
	\mathcal{L}(u_n),
	\label{eqn:Dyson}
\end{equation}
whose convergence is guaranteed by bounding each term of the series as 
\begingroup
\allowdisplaybreaks
\begin{align}
\|\Lambda(t,s)\|
&\le1+\sum_{n=1}^\infty\int_s^t\rmd u_1\cdots\int_s^{u_{n-1}}\rmd u_n\,
\|\mathcal{L}\|_{\infty,[s,t]}^n
\nonumber\\
&=
1+\sum_{n=1}^\infty\frac{1}{n!}(t-s)^n
\|\mathcal{L}\|_{\infty,[s,t]}^n
=\rme^{(t-s)\|\mathcal{L}\|_{\infty,[s,t]}}.
\label{eqn:DysonBound}
\end{align}
\endgroup
In many situations, the bound~\eqref{eqn:DysonBound} is too loose. 
For example, by virtue of Proposition~\ref{prop:Contraction}, the evolution $\Lambda(t,s)$ generated by the generator $\mathcal{L}(u)$ of a contraction semigroup for $u\in[0,T]$ is actually bounded by $\|\Lambda(t,s)\|\le1$ for $0\le s\le t\le T$.

On the other hand, if the (possibly unbounded) generator $\mathcal{L}_0(u)$ of a contraction semigroup for $u\in[0,T]$ is perturbed by a bounded $\mathcal{D}(u)$, the evolution $\Lambda(t,s)$ generated by $\mathcal{L}(t)=\mathcal{L}_0(t)+\mathcal{D}(t)$ can be given by another Dyson series
\begin{align}
\Lambda(t,s)
=
\Lambda_0(t,s)
+\sum_{n=1}^\infty\int_s^t\rmd u_1
\int_s^{u_1}\rmd u_2
\cdots\int_s^{u_{n-1}}\rmd u_n\,
&
\Lambda_0(t,u_1)
\mathcal{D}(u_1)
\Lambda_0(u_1,u_2)
\mathcal{D}(u_2)
\cdots
\nonumber\\
&\qquad
{}\times
\Lambda_0(u_{n-1},u_n)
\mathcal{D}(u_n)
\Lambda_0(u_n,s),
\label{eqn:Dyson2}
\end{align}
where $\Lambda_0(t,s)=\T\rme^{\int_s^t\rmd u\,\mathcal{L}_0(u)}$. 
Here, the unperturbed evolution is a contraction bounded by $\|\Lambda_0(t,s)\|\le1$ for $0\le s\le t\le T$, and $\Lambda(t,s)$ can be bounded by the norm of the perturbation as
\begin{align}
\|\Lambda(t,s)\|
&\le
1
+\sum_{n=1}^\infty
\int_s^t\rmd u_1\cdots\int_s^{u_{n-1}}\rmd u_n\,
\|\mathcal{D}\|_{\infty,[s,t]}^n
\nonumber\\
&=
1
+\sum_{n=1}^\infty\frac{1}{n!}(t-s)^n
\|\mathcal{D}\|_{\infty,[s,t]}^n
=
\rme^{(t-s)\|\mathcal{D}\|_{\infty,[s,t]}},
\label{eqn:DysonBound2}
\end{align}
for $0\le s\le t\le T$.

\section{Integration-by-Part Lemma}
\label{sec:IntPartLemma}
The objective of this paper is to provide a framework that allows us to prove the RWA for open quantum systems.
In order to estimate the error of an approximation, we compare the approximate evolution with the true evolution.
We can compare the two evolutions generated by two time-dependent generators $\mathcal{L}_1(t)$ and $\mathcal{L}_2(t)$ by the following elementary lemma.
\begin{lemma}
\label{lem:GroupDiff}
Consider two continuous time-dependent bounded operators $t\mapsto \mathcal{L}_j(t)$ ($j=1,2$), and the time propagators generated by them,
\begin{equation}
\Lambda_j(t,s)
=\T\exp\!\left(
\int_s^t\rmd u\,\mathcal{L}_j(u)
\right)
\qquad(j=1,2).
\end{equation}
One has
\begin{equation}
\Lambda_1(t,s)-\Lambda_2(t,s)
=
\int_s^t\rmd u\,\Lambda_1(t,u)[\mathcal{L}_1(u)-\mathcal{L}_2(u)]\Lambda_2(u,s).
\label{eqn:GroupDiff}
\end{equation}
Moreover, if $\mathcal{L}_j(t)$ is the generator of a contraction semigroup for each $j=1,2$ and all $t\in[0,T]$, the distance between the evolutions is uniformly bounded as
\begin{equation}
	\|\Lambda_1(t)-\Lambda_2(t)\|
\leq
\|
	\mathcal{L}_1-\mathcal{L}_2\|_{1,T},
\label{eq:boundPropDiff}
\end{equation}
for each $t\in[0,T]$.
\end{lemma}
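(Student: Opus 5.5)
The identity~\eqref{eqn:GroupDiff} follows from the fundamental theorem of calculus applied to an interpolating function. For fixed $s\le t$ I will consider
\begin{equation*}
F(u)=\Lambda_1(t,u)\Lambda_2(u,s),\qquad u\in[s,t].
\end{equation*}
Its endpoint values are $F(t)=\Lambda_2(t,s)$ and $F(s)=\Lambda_1(t,s)$, because $\Lambda_j(u,u)=1$.

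The evolution equations~\eqref{eq:evolution} give $\partial_u\Lambda_1(t,u)=-\Lambda_1(t,u)\mathcal{L}_1(u)$ and $\partial_u\Lambda_2(u,s)=\mathcal{L}_2(u)\Lambda_2(u,s)$. The product rule therefore yields
\begin{equation*}
F'(u)=-\Lambda_1(t,u)[\mathcal{L}_1(u)-\mathcal{L}_2(u)]\Lambda_2(u,s).
\end{equation*}
The product rule is legitimate in operator norm because both factors are norm-differentiable, which is guaranteed by norm-continuity of the bounded generators and convergence of the Dyson series~\eqref{eqn:Dyson}. Integrating $F'$ from $s$ to $t$ gives $F(t)-F(s)=\Lambda_2(t,s)-\Lambda_1(t,s)$, which is~\eqref{eqn:GroupDiff} after a change of sign.

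For the bound~\eqref{eq:boundPropDiff}, I set $s=0$ in~\eqref{eqn:GroupDiff} and take norms inside the integral using submultiplicativity. Proposition~\ref{prop:Contraction} gives $\|\Lambda_1(t,u)\|\le1$ and $\|\Lambda_2(u,0)\|\le1$ for $0\le u\le t\le T$. Hence
\begin{equation*}
\|\Lambda_1(t)-\Lambda_2(t)\|\le\int_0^t\rmd u\,\|\mathcal{L}_1(u)-\mathcal{L}_2(u)\|\le\|\mathcal{L}_1-\mathcal{L}_2\|_{1,T},
\end{equation*}
where the last step uses that the integrand is nonnegative and $t\le T$.

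I do not expect a real obstacle. The only point needing care is the differentiability of $u\mapsto\Lambda_1(t,u)$ in its second argument. This is part of the defining property~\eqref{eq:evolution}, and for bounded continuous generators it can be checked directly from the Dyson series.
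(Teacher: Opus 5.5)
Your proof is correct and follows the paper's own argument: the paper writes $\Lambda_1(t,s)-\Lambda_2(t,s)=-\Lambda_1(t,u)\Lambda_2(u,s)\bigr|_{u=s}^{u=t}$, differentiates in $u$ with the product rule and the two evolution equations, integrates, and then bounds the result using contractivity of both propagators (Proposition~\ref{prop:Contraction}). Your remark on norm-differentiability in the second argument is a harmless extra justification that the paper takes for granted.
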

\begin{proof}
The difference between the two propagators $\Lambda_1(t,s)$ and $\Lambda_2(t,s)$ can be arranged as
\begingroup
\allowdisplaybreaks
\begin{align}
\Lambda_1(t,s)
-
\Lambda_2(t,s)
&=
-\Lambda_1(t,u)\Lambda_2(u,s)\biggr|_{u=s}^{u=t}
\nonumber\\
&=
-\int_s^t\rmd u\,\frac{\partial}{\partial u}[\Lambda_1(t,u)\Lambda_2(u,s)]
\nonumber\\
&
=\int_s^t\rmd u\,\Lambda_1(t,u)[\mathcal{L}_1(u)-\mathcal{L}_2(u)]\Lambda_2(u,s).
\end{align}
\endgroup
This is~(\ref{eqn:GroupDiff}). If $\mathcal{L}_j(t)$ generates a contraction semigroup for each $j=1,2$ and all $t\in[0,T]$, then $\|\Lambda_{j}(t,s)\|\leq 1$ for each $0\leq s\leq t \leq T$, and by taking the norm, \eqref{eq:boundPropDiff} follows.
\end{proof}

This lemma shows that the distance between the two evolutions $\Lambda_1(t,s)$ and $\Lambda_2(t,s)$ is small if the two generators $\mathcal{L}_1(t)$ and $\mathcal{L}_2(t)$ are close to each other.
What is nontrivial in the RWA, however, is the fact that the effective generator in the RWA is not really close to the original generator, but still the effective generator yields an evolution that is close to the true evolution.
Lemma~\ref{lem:GroupDiff} is not useful to prove the RWA\@.
Our key instrument, on the other hand, is the following integration-by-part lemma.
The basic idea is to go to an appropriate reference frame and to average a rapidly oscillating generator over time to get an effective generator.
The following integration-by-part lemma allows us to implement this idea, and to estimate the error of the RWA\@.
In contrast to the unitary case~\cite{burgarth2022one}, we need to be careful with the irreversibility of the evolutions in dealing with open quantum systems.
The evolutions of open quantum systems are uniformly bounded forward in time but are not backward.
We devise the instrument taking care of this fact.

\begin{lemma}[Integration-by-part lemma]
\label{lem:IntParts}
Consider three continuous time-dependent bound\-ed generators $t \mapsto \mathcal{L}_j(t)$ ($j=0,1,2$), and the propagators they generate,
\begin{equation}
\Lambda_j(t,s)
=\T\exp\!\left(
\int_s^t\rmd u\,\mathcal{L}_j(u)
\right)\qquad
(j=0,1,2),
\end{equation}
where $j=0$ plays the role of a reference propagator, while $j=1,2$ refer to the propagators to be compared.
Let the  propagator $\Lambda_2$ be split as
\begin{equation}
 \Lambda_2(t)
=\Lambda_0(t)\widetilde{\Lambda}_2(t),
\label{eq:divisible}
\end{equation}
where
\begin{equation}
 \widetilde{\Lambda}_2(t)
=\T\exp\!\left(\int_0^t\rmd s\,{ \widetilde{\mathcal{L}}_2(s)}\right),
\qquad
\widetilde{\mathcal{L}}_2(t)  = \Lambda_0(t)^{-1}[\mathcal{L}_2(t)-\mathcal{L}_0(t)]\Lambda_0(t).
\label{eq:TildeLambda2}
\end{equation}
Define the integral action with respect to the reference propagator $\Lambda_0$ as
\begin{equation}
{ \mathcal{S}_{12}(t)=\int_0^t\rmd s\,\Lambda_0(t,s)[\mathcal{L}_1(s)-\mathcal{L}_2(s)]\Lambda_0(s)}.
\label{eqn:Action}
\end{equation}

Then, one has
\begin{equation}
\Lambda_1(t)
-
\Lambda_2(t)
={ \mathcal{S}_{12}(t)}{ \widetilde{\Lambda}_2(t)}
%\vphantom{\int_0^t}
%\nonumber\\
%&{}
+\int_0^t\rmd s\,{ \Lambda_1(t,s)}\,\Bigl(
[\mathcal{L}_1(s)-\mathcal{L}_0(s)]{ \mathcal{S}_{12}(s)}
-{ \mathcal{S}_{12}(s)}{ \widetilde{\mathcal{L}}_2(s)}
\Bigr)\,{\widetilde{\Lambda}_2(s)},
\label{eqn:KeyFormalism}
\end{equation}
and the following bound holds for $t\ge0$,
\begin{equation}
\|\Lambda_1(t)-\Lambda_2(t)\|
\le
{ \|\mathcal{S}_{12}\|_{\infty,t}}
{ \|\widetilde{\Lambda}_2\|_{\infty,t}}
\,\biggl[
1
+
%{ \sup_{0\le s\le t}\|\Lambda_1(t,s)\|}\,
{ \|\Lambda_1(t,{}\cdot{})\|_{\infty,t}}\,
\Bigl(
\|\mathcal{L}_1-\mathcal{L}_0\|_{1,t}
+
{ \|\widetilde{\mathcal{L}}_2\|_{1,t}}
\Bigr)
\biggr].
\label{eqn:KeyFormalismBound}
\end{equation}
\end{lemma}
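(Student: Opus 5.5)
The plan is to mimic the proof of Lemma~\ref{lem:GroupDiff}, but to differentiate a product that puts the integral action $\mathcal{S}_{12}$ in the middle. First I record the derivative of the action. Differentiating \eqref{eqn:Action} in $t$ gives
\begin{equation*}
\frac{\rmd}{\rmd t}\mathcal{S}_{12}(t)=[\mathcal{L}_1(t)-\mathcal{L}_2(t)]\Lambda_0(t)+\mathcal{L}_0(t)\mathcal{S}_{12}(t),
\end{equation*}
using $\partial_t\Lambda_0(t,s)=\mathcal{L}_0(t)\Lambda_0(t,s)$. I also note that $\mathcal{S}_{12}(0)=0$. By \eqref{eq:TildeLambda2}, $\widetilde{\Lambda}_2$ satisfies $\partial_t\widetilde{\Lambda}_2(t)=\widetilde{\mathcal{L}}_2(t)\widetilde{\Lambda}_2(t)$ with $\widetilde{\Lambda}_2(0)=1$. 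A direct differentiation of $\Lambda_0(t)\widetilde{\Lambda}_2(t)$ then confirms that this product solves the equation generated by $\mathcal{L}_2$, so the split \eqref{eq:divisible} is consistent.

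Next I consider $F(s)=\Lambda_1(t,s)\,\mathcal{S}_{12}(s)\,\widetilde{\Lambda}_2(s)$ for $s\in[0,t]$. Since $F(0)=0$ and $F(t)=\mathcal{S}_{12}(t)\widetilde{\Lambda}_2(t)$, we have $\mathcal{S}_{12}(t)\widetilde{\Lambda}_2(t)=\int_0^t\rmd s\,F'(s)$. By the product rule, with $\partial_s\Lambda_1(t,s)=-\Lambda_1(t,s)\mathcal{L}_1(s)$,
\begin{equation*}
F'(s)=\Lambda_1(t,s)\Bigl(-\mathcal{L}_1\mathcal{S}_{12}+[\mathcal{L}_1-\mathcal{L}_2]\Lambda_0(s)+\mathcal{L}_0\mathcal{S}_{12}+\mathcal{S}_{12}\widetilde{\mathcal{L}}_2\Bigr)\widetilde{\Lambda}_2(s).
\end{equation*}
The term $[\mathcal{L}_1-\mathcal{L}_2]\Lambda_0(s)\widetilde{\Lambda}_2(s)=[\mathcal{L}_1-\mathcal{L}_2]\Lambda_2(s)$ integrates, by \eqref{eqn:GroupDiff}, to exactly $\Lambda_1(t)-\Lambda_2(t)$. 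Rearranging then gives \eqref{eqn:KeyFormalism}, with the remaining terms being $-[\mathcal{L}_1-\mathcal{L}_0]\mathcal{S}_{12}+\mathcal{S}_{12}\widetilde{\mathcal{L}}_2$ moved to the other side.

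For the bound \eqref{eqn:KeyFormalismBound}, I take norms in \eqref{eqn:KeyFormalism} using submultiplicativity. The first term is bounded by $\|\mathcal{S}_{12}\|_{\infty,t}\|\widetilde{\Lambda}_2\|_{\infty,t}$. In the integral, I pull out $\sup_s\|\Lambda_1(t,s)\|$, $\|\mathcal{S}_{12}\|_{\infty,t}$ and $\|\widetilde{\Lambda}_2\|_{\infty,t}$, which leaves $\int_0^t(\|\mathcal{L}_1-\mathcal{L}_0\|+\|\widetilde{\mathcal{L}}_2\|)$, i.e.\ the $L^1$ norms.

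The only delicate point is bookkeeping: making sure the time arguments are placed so that no backward propagator of $\Lambda_1$ appears, only $\Lambda_1(t,s)$ with $s\le t$. The inverse $\Lambda_0^{-1}$ enters solely inside $\widetilde{\mathcal{L}}_2$ and $\widetilde{\Lambda}_2$, where it is kept as part of the norms in the bound. Everything else is routine differentiation of bounded, norm-continuous families.
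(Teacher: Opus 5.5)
Your proposal is correct and is essentially the paper's proof. The paper starts from the Duhamel formula of Lemma~\ref{lem:GroupDiff} with $\Lambda_2(s)=\Lambda_0(s)\widetilde{\Lambda}_2(s)$, substitutes $[\mathcal{L}_1(s)-\mathcal{L}_2(s)]\Lambda_0(s)=\frac{\rmd}{\rmd s}\mathcal{S}_{12}(s)-\mathcal{L}_0(s)\mathcal{S}_{12}(s)$, integrates by parts, and then uses the same triangle-inequality and submultiplicativity step; your computation of $\int_0^t F'(s)\,\rmd s$ is that integration by parts read in the opposite direction.
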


\begin{proof}%[Proof of Lemma~\ref{lem:IntParts}]
Using Lemma~\ref{lem:GroupDiff} and the splitting~\eqref{eq:divisible}, one has
\begin{align}
\Lambda_1(t)
-
\Lambda_2(t)
&=\int_0^t\rmd s\,\Lambda_1(t,s)[\mathcal{L}_1(s)-\mathcal{L}_2(s)]\Lambda_2(s) 
\nonumber\\
&=\int_0^t\rmd s\,\Lambda_1(t,s)[\mathcal{L}_1(s)-\mathcal{L}_2(s)]\Lambda_0(s)\widetilde{\Lambda}_2(s).
\label{eqn:GroupDiff2}
\end{align}
In order to extract $\mathcal{S}_{12}(t)$, first note that
\begin{equation}\label{eq:DS}
	\frac{\dd}{\dd s}\mathcal{S}_{12}(s)=\mathcal{L}_0(s)\mathcal{S}_{12}(s)+[\mathcal{L}_1(s)-\mathcal{L}_2(s)]\Lambda_0(s).
\end{equation}
Using~\eqref{eq:DS} in~\eqref{eqn:GroupDiff2} and performing an integration by parts, one gets
\begin{align}
\Lambda_1(t)
-
\Lambda_2(t)
={}&\int_0^t\rmd s\,\Lambda_1(t,s)\left[\frac{\dd}{\dd s} \mathcal{S}_{12}(s) - \mathcal{L}_0(s)\mathcal{S}_{12}(s) \right]\widetilde{\Lambda}_2(s)
\nonumber\\
={}&
{\mathcal{S}_{12}(t)}{\widetilde{\Lambda}_2(t)}
+\int_0^t\rmd s\,{\Lambda_1(t,s)}[\mathcal{L}_1(s)-\mathcal{L}_0(s)]{ \mathcal{S}_{12}(s)}{ \widetilde{\Lambda}_2(s)}
\nonumber\\
&
{}-\int_0^t\rmd s\,{ \Lambda_1(t,s)}{ \mathcal{S}_{12}(s)}{ \widetilde{\mathcal{L}}_2(s)}{ \widetilde{\Lambda}_2(s)},
%\label{eqn:KeyFormalism}
\end{align}
which is~(\ref{eqn:KeyFormalism}).
By triangle inequality, one can bound it for $t\ge0$ as
\begingroup
\allowdisplaybreaks
\begin{align}
\|
\Lambda_1(t)
-
\Lambda_2(t)
\|
\le{}&{ \|\mathcal{S}_{12}(t)\|}{ \|\widetilde{\Lambda}_2(t)\|}
\nonumber\\
&{}
+\int_0^t\rmd s\,{ \|\Lambda_1(t,s)\|}{ \|\mathcal{S}_{12}(s)\|}{ \|\widetilde{\Lambda}_2(s)\|}
\,\Bigl(
\|\mathcal{L}_1(s)-\mathcal{L}_0(s)\|
+{ \|\widetilde{\mathcal{L}}_2(s)\|}
\Bigr),
\end{align}
\endgroup
and get~(\ref{eqn:KeyFormalismBound}).
\end{proof}

In the following, Lemma~\ref{lem:IntParts} will be used to approximate the evolution generated by $\mathcal{L}_1(t)= \mathcal{L}_\kappa(t)$ in the limit of some control parameter $\kappa$. In order to do this, the task is to find $\mathcal{L}_{2}(t)=\mathcal{L}_{\mathrm{eff},\kappa}(t)$ generating an effective evolution $\Lambda_2(t)=\Lambda_{\mathrm{eff},\kappa}(t)$  and a suitable reference frame $\Lambda_0(t)=\Lambda_{0,\kappa}(t)$ such that $\mathcal{S}_{12}(t)=\mathcal{S}_\kappa(t)\rightarrow 0$. Then, Lemma~\ref{lem:IntParts} can be used to prove $\Lambda_\kappa(t)-\Lambda_{\mathrm{eff},\kappa}(t)\to0$, under boundedness conditions on $
{ \|\Lambda_\kappa(t,s)\|}
$, $
\|\mathcal{L}_\kappa(t)-\mathcal{L}_{0,\kappa}(t)\|
$, $
{ \|\tilde{\mathcal{L}}_{\mathrm{eff},\kappa}(t)\|}
$, and $
{ \|\tilde{\Lambda}_{\mathrm{eff},\kappa}(t)\|}
$, where $\mathcal{L}_0(t)=\mathcal{L}_{0,\kappa}(t)$ is the generator of the reference evolution $\Lambda_{0,\kappa}(t)$
The basic idea is to use the fact that the time-average of a function of time rapidly oscillating around zero becomes small in the limit of high frequency.
To this end, in the unitary case~\cite{burgarth2022one}, we go to the interaction picture (rotating frame) with respect to the strong part of the Hamiltonian to get a highly oscillating Hamiltonian on the rotating frame, and integrate it to get a small integral action.
This helps us to show that the distance between the true and effective evolutions becomes small in the limit of some control parameter $\kappa$ and to prove various limit theorems, including the RWA, adiabatic theorems, and product formulas~\cite{burgarth2022one}.
We basically do the same for the nonunitary case.
However, we need to be careful with the interaction picture in the nonunitary case, since the generator of the evolution $\widetilde{\Lambda}_{\mathrm{eff},\kappa}(t)$
in the rotating frame with respect to $\Lambda_0(t)$ is  given by $\widetilde{\mathcal{L}}_{\mathrm{eff},\kappa}(t)=\Lambda_{0,\kappa}^{-1}(t)[\mathcal{L}_{\mathrm{eff},\kappa}(t)-\mathcal{L}_{0,\kappa}(t)]\Lambda_{0,\kappa}(t)$, and is not guaranteed to be bounded uniformly in the control parameter $\kappa$ because of the inverse $\Lambda_{0,\kappa}^{-1}(t)$ of a generally irreversible evolution.
That is why we define the integral action $\mathcal{S}_{12}(t)$ as~(\ref{eqn:Action}), putting $\Lambda_0(t)$ in front of the generators in the interaction picture to turn $\Lambda_0^{-1}(s)$ into $\Lambda_0(t,s)$, which is bounded for $t\ge s$ for contraction semigroups.
See Proposition~\ref{prop:Contraction} and Remark~\ref{rmk:Diamond} above.
Keeping these points in mind, we use Lemma~\ref{lem:IntParts} to find a limit evolution $\Lambda_2(t)$ of $\Lambda_1(t)$ in the limit of some control parameter $\kappa$.
In the limit of rapid oscillations and/or strong decay of $\Lambda_0(t)$, we get ${ \mathcal{S}_{12}(s)\to0}$.

In this paper, we focus on the proof of the RWA for open quantum systems\@.
To this end, we analyze the time-dependent generator $\mathcal{L}_{{ \kappa}}(t)$ whose strong part ${ \kappa}\mathcal{L}_0$ is constant.
If we use Lemma~\ref{lem:IntParts} to analyze the generator $\mathcal{L}_{{ \kappa}}(t)$ whose strong part ${ \kappa}\mathcal{L}_0(t)$ is time-dependent, we end up with an adiabatic theorem.
The application to the adiabatic theorem will be presented in a sequel to this paper.

\section{Main Result: Constant Strong Generator}
\label{sec:Application}
Let us consider a generator of the form
\begin{equation}
\mathcal{L}_{{ \kappa}}(t)
={ \kappa}\mathcal{L}_0+\mathcal{D}_{{ \kappa}}(t)
\end{equation}
on a finite-dimensional Banach space, consisting of a constant strong generator ${ \kappa}\mathcal{L}_0$ and a continuous time-dependent perturbation $\mathcal{D}_{{ \kappa}}(t)$.
We assume that $\mathcal{L}_0$ is the generator of a contraction semigroup, satisfying 
\begin{equation}
\|\rme^{t\mathcal{L}_0}\|\le 1,
\end{equation}
for all $t\ge0$.
We also assume that $\mathcal{D}_{{ \kappa}}(t)$ is uniformly bounded,
\begin{equation}
\|\mathcal{D}_{{ \kappa}}(t)\|\le D,
\end{equation}
for all $t\in[0,T]$ and ${ \kappa}>0$, with some $D\ge0$.
We wish to find the limit evolution of
\begin{equation}
\Lambda_{{ \kappa}}(t)
=\T\exp\!\left(
\int_0^t\rmd s\,\mathcal{L}_{{ \kappa}}(s)
\right),
\end{equation}
in the limit ${ \kappa\to+\infty}$, for $t\in[0,T]$.

The contractivity of the evolution $\rme^{t\mathcal{L}_0}$ ensures that the propagator $\Lambda_{{\kappa}}(t,s)$
is bounded by
\begin{equation}
\|\Lambda_{{ \kappa}}(t,s)\|
\le
\rme^{(t-s)D}
\le\rme^{DT},
\label{eqn:BoundedEvolution}
\end{equation}
for $0\le s\le t\le T$ and ${ \kappa}>0$, see~(\ref{eqn:DysonBound2}).
Let
\begin{equation}
\mathcal{L}_0
=
\sum_k(\alpha_k\mathcal{P}_k+\mathcal{N}_k)
\label{eqn:SpectralRep}
\end{equation}
be the spectral representation of $\mathcal{L}_0$~\cite{kato2013perturbation}, where $\{\alpha_k\}$ is the spectrum of $\mathcal{L}_0$, $\{\mathcal{P}_k\}$ and $\{\mathcal{N}_k\}$ are the spectral projections and the nilpotents, respectively. They satisfy 
\begin{equation}\label{eq:spectralProps}
	\mathcal{P}_k\mathcal{P}_{\ell}=\delta_{k\ell}\mathcal{P}_k, \quad\sum_{k}\mathcal{P}_k=\mathrm{1}, \qquad\text{and}\qquad \mathcal{N}_k\mathcal{P}_\ell=\mathcal{P}_\ell\mathcal{N}_k=\delta_{k\ell}\mathcal{N}_k, \quad \mathcal{N}_k^{\rho_k}=0,
\end{equation}
where the integer  $0<\rho_k\leq \rank \mathcal{P}_k$ is the degree of the nilpotent $\mathcal{N}_k$.
If $\mathcal{L}_0$ is the generator of a contraction semigroup, its spectrum $\{\alpha_k\}$ is confined in the left-half plane, $\Re\alpha_k\le0$, and the purely imaginary eigenvalues are semisimple, with no nilpotents $\mathcal{N}_k=0$ for $\alpha_k\in\rmi\mathbb{R}$ (for a proof, see~\cite[Proposition~6.2]{wolf2020quantum} or~\cite[Lemma~A.1]{hasenohrl2022interaction}).
In the following analysis, the peripheral projection of $\mathcal{L}_0$, defined by 
\begin{equation}
\mathcal{P}_\varphi=\sum_{\alpha_k\in\rmi\mathbb{R}}\mathcal{P}_k
\end{equation}
will play an important role.
It is also a contraction and we have~\cite[Proposition~6.3]{wolf2020quantum}\cite{wolf2008dividing,wolf2010inverse}
\begin{equation}\label{eq:PeripheralContraction}
\|\rme^{t\mathcal{L}_0}\mathcal{P}_\varphi\|\le 1,
\end{equation}
for $t\in\mathbb{R}$.
Note that since $\|\mathcal{P}_\varphi\|=\|\mathcal{P}_\varphi^2\|\le\|\mathcal{P}_\varphi\|^2$, i.e.~$\|\mathcal{P}_\varphi\|\geq1$, the inequality \eqref{eq:PeripheralContraction} for $t=0$ implies that $\|\mathcal{P}_\varphi\|=1$.
The decaying part of $\rme^{t\mathcal{L}_0}$, on the other hand, is bounded by~\cite{burgarth2019generalized}
\begin{equation}
\|\rme^{t\mathcal{L}_0}\mathcal{Q}_\varphi\|\le\rme^{-\eta t}p(t),
\end{equation}
for $t\ge0$, where $\mathcal{Q}_\varphi=1-\mathcal{P}_\varphi$ is the projection onto the nonperipheral spectrum of $\mathcal{L}_0$,  
\begin{equation}
\eta=\min_{\alpha_k\not\in\rmi\mathbb{R}}|{\Re\alpha_k}|
\end{equation}
is the smallest nonzero decay rate of $\mathcal{L}_0$ (we set $\eta=\infty$ if the spectrum of $\mathcal{L}_0$ is completely peripheral), and
\begin{equation}
\label{eq:polynil}
	p(t)=\sum_{\alpha_k\not\in\rmi\mathbb{R}}\sum_{n=0}^{\rho_k-1}\frac{1}{n!}\|\mathcal{N}_k\|^nt^n
\end{equation} 
is a positive polynomial with $\rho_k$ being the degree of nilpotent $\mathcal{N}_k$ defined in~\eqref{eq:spectralProps}.
We will use the bound
\begin{equation}
\|\rme^{s\mathcal{L}_0}\mathcal{Q}_\varphi\|_{1,\infty}
=
\int_0^\infty\rmd s\,
\|\rme^{s\mathcal{L}_0}\mathcal{Q}_\varphi\|
\le
\int_0^\infty\rmd s\,
\rme^{-\eta s}p(s)
=
\frac{1}{\eta}q(1/\eta)
=
\frac{1}{\eta}R,
\label{eqn:R}
\end{equation}
with $q(t)=\sum_{\alpha_k\not\in\rmi\mathbb{R}}\sum_{n=0}^{\rho_k-1}\|\mathcal{N}_k\|^nt^n$ a positive polynomial.

Let us consider the integral action
\begin{equation}
\hat{\mathcal{S}}_{{ \kappa}}(t)
=\int_0^t\rmd s\,
\rme^{{ \kappa}(t-s)\mathcal{L}_0}
\mathcal{D}_{ \kappa}(s)
\rme^{{ \kappa}s\mathcal{L}_0}.
\label{eqn:S12}
\end{equation}
Its peripheral part reads
\begin{equation}
\mathcal{P}_\varphi
\hat{\mathcal{S}}_{{ \kappa}}(t)
\mathcal{P}_\varphi
=
\sum_{\alpha_k,\alpha_\ell\in\rmi\mathbb{R}}
\rme^{{ \kappa}\alpha_kt}
\int_0^t\rmd s\,
\rme^{-{ \kappa}(\alpha_k-\alpha_\ell)s}
\mathcal{P}_k
\mathcal{D}_{{ \kappa}}(s)
\mathcal{P}_\ell.
\end{equation}
The other components are bounded for $t\in[0,T]$ as follows:
\begingroup
\allowdisplaybreaks
\begin{align}
\|
\mathcal{Q}_\varphi
\hat{\mathcal{S}}_{{ \kappa}}(t)
\mathcal{P}_\varphi
\|
&\le\int_0^t\rmd s\,
\|\rme^{{ \kappa}(t-s)\mathcal{L}_0}
\mathcal{Q}_\varphi
\|
\|
\mathcal{D}_{{ \kappa}}(s)
\|
\|
\rme^{{ \kappa}s\mathcal{L}_0}
\mathcal{P}_\varphi
\|
\nonumber\\
&\le
D
\int_0^t\rmd s\,
\|
\rme^{{ \kappa}(t-s)\mathcal{L}_0}
\mathcal{Q}_\varphi
\|
%\nonumber\\
%&
=
\frac{1}{{ \kappa}}
D
\int_0^{{ \kappa}t}\rmd s\,
\|
\rme^{s\mathcal{L}_0}
\mathcal{Q}_\varphi
\|
\le
\frac{D R}{{ \kappa}\eta}
,
\label{eqn:BoundRPQ}
\\
\|
\hat{\mathcal{S}}_{{ \kappa}}(t)
\mathcal{Q}_\varphi
\|
&\le\int_0^t\rmd s\,
\|
\rme^{{ \kappa}(t-s)\mathcal{L}_0}
\|
\|
\mathcal{D}_{{ \kappa}}(s)
\|
\|
\rme^{{ \kappa}s\mathcal{L}_0}
\mathcal{Q}_\varphi
\|
\nonumber\\
&\le
D
\int_0^t\rmd s\,
\|\rme^{{ \kappa}s\mathcal{L}_0}
\mathcal{Q}_\varphi
\|
=
\frac{1}{{ \kappa}}
D
\int_0^{{ \kappa}t}\rmd s\,
\|\rme^{s\mathcal{L}_0}
\mathcal{Q}_\varphi
\|
\le
\frac{DR}{{ \kappa}\eta}.
\label{eqn:BoundRQ}
\end{align}
\endgroup
This observation leads us to the following theorem.
\begin{thm}\label{thm:ConstL0}
Consider a  time-dependent generator $t\in[0,T]\mapsto\mathcal{L}_{{ \kappa}}(t)$ on a finite-di\-men\-sion\-al Banach space of the form
\begin{equation}
\mathcal{L}_{{ \kappa}}(t)
={ \kappa}\mathcal{L}_0+\mathcal{D}_{{ \kappa}}(t),
\end{equation}
and let $t\mapsto\Lambda_{{ \kappa}}(t)$ be the evolution generated by $\mathcal{L}_{{ \kappa}}(t)$,
\begin{equation}
\Lambda_{{ \kappa}}(t)
=\T\exp\!\left(
\int_0^t\rmd s\,\mathcal{L}_{{ \kappa}}(s)
\right).
\end{equation}
Assume that $\mathcal{L}_0$ is the generator of a contraction semigroup, and let $\mathcal{P}_\varphi$ and $\mathcal{Q}_\varphi=1-\mathcal{P}_\varphi$ be the peripheral and nonperipheral projections of $\mathcal{L}_0$, respectively. 
Assume also that $t\mapsto \mathcal{D}_{{ \kappa}}(t)$ is continuous and bounded by 
$\|\mathcal{D}_{{ \kappa}}\|_{\infty,T}\le D$,
for all ${ \kappa}>0$ with some $D\ge0$.

Now, if there exists a time-dependent generator $t\in[0,T]\mapsto\overline{\mathcal{D}}(t)=\mathcal{P}_\varphi\overline{\mathcal{D}}(t)\mathcal{P}_\varphi$ such that
\begin{equation}
\mathcal{S}_{{ \kappa},\varphi}(t)
=\int_0^t\rmd s\,
[
\rme^{{ \kappa}(t-s)\mathcal{L}_0}
\mathcal{P}_\varphi
\mathcal{D}_{{ \kappa}}(s)
\mathcal{P}_\varphi
\rme^{{ \kappa}s\mathcal{L}_0}
-\rme^{{ \kappa}t\mathcal{L}_0} \overline{\mathcal{D}}(s)
]
\to0,\quad\mathrm{as}\quad{ \kappa\to+\infty},
\label{eqn:CondThm}
\end{equation}
for $t\in[0,T]$, then one gets
\begin{equation}
\Lambda_{{ \kappa}}(t)
-
\rme^{{ \kappa}t\mathcal{L}_0}
\overline{\Lambda}(t)
\to0,\qquad\mathrm{as}\qquad
{ \kappa\to+\infty},
\label{eqn:LimitThm}
\end{equation}
for $t\in[0,T]$, where $t\mapsto\overline{\Lambda}(t)$ is the evolution generated by $\overline{\mathcal{D}}(t)$,
\begin{equation}
\overline{\Lambda}(t)
=\T\exp\!\left(
\int_0^t\rmd s\,\overline{\mathcal{D}}(s)
\right).
\end{equation}
The convergence error is bounded by 
\begin{equation}
\|
\Lambda_{{ \kappa}}(t)
-
\rme^{{ \kappa}t\mathcal{L}_0}
\overline{\Lambda}(t)
\|
\le
{ 
\left(
\|\mathcal{S}_{{ \kappa},\varphi}\|_{\infty,T}
+\frac{2DR}{{ \kappa}\eta}
\right)\,}
{ \rme^{\overline{D}T}}
[
1
+
T{ \rme^{DT}}
(D+{ \overline{D}})],
\label{eqn:BoundThm2}
\end{equation}
where $\eta>0$ is the smallest nonzero decay rate of $\mathcal{L}_0$, 
$R\ge0$ is a constant bounding the integral of the decaying part of $\rme^{t\mathcal{L}_0}$ as $\|\rme^{t\mathcal{L}_0}\mathcal{Q}_\varphi\|_{1,\infty}\le R/\eta$, and ${ \overline{D}=\|\overline{\mathcal{D}}\|_{\infty,T}}$.
Moreover 
\begin{equation}
\Lambda_{{ \kappa}}(t)
-
\rme^{{ \kappa}t\mathcal{L}_0}
\overline{\Lambda}(t)\mathcal{P}_\varphi
\to0,\qquad\mathrm{as}\qquad
{ \kappa\to+\infty},
\label{eqn:LimitThmP}
\end{equation}
uniformly on compact subsets of $(0,T]$, with a bound
\begin{equation}
\|
\Lambda_{{ \kappa}}(t)
-
\rme^{{ \kappa}t\mathcal{L}_0}
\overline{\Lambda}(t)\mathcal{P}_\varphi
\|
\le
{ 
\left(
\|\mathcal{S}_{{ \kappa},\varphi}\|_{\infty,T}
+\frac{DR}{{ \kappa}\eta}
\right)\,}
{ \rme^{\overline{D}T}}
[
1
+
T{ \rme^{DT}}
(D+{ \overline{D}})
]
+
{ \frac{DR}{{ \kappa}\eta}}
{ \rme^{DT}}
+\rme^{-{ \kappa}\eta t}p({ \kappa}t),
\label{eqn:BoundThm2P}
\end{equation}
where $p(t)$ is the polynomial~\eqref{eq:polynil}.
\end{thm}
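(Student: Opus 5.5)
We apply Lemma~\ref{lem:IntParts} with $\mathcal{L}_1(t)=\mathcal{L}_\kappa(t)=\kappa\mathcal{L}_0+\mathcal{D}_\kappa(t)$, reference generator $\mathcal{L}_0(t)\to\kappa\mathcal{L}_0$ (so $\Lambda_0(t,s)=\rme^{\kappa(t-s)\mathcal{L}_0}$), and $\mathcal{L}_2(t)=\kappa\mathcal{L}_0+\rme^{\kappa t\mathcal{L}_0}\overline{\mathcal{D}}(t)\rme^{-\kappa t\mathcal{L}_0}$. The choice of $\mathcal{L}_2$ is made so that $\Lambda_2(t)=\rme^{\kappa t\mathcal{L}_0}\overline{\Lambda}(t)$. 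Indeed, $\widetilde{\mathcal{L}}_2(t)=\overline{\mathcal{D}}(t)$, hence $\widetilde{\Lambda}_2=\overline{\Lambda}$. Because $\overline{\mathcal{D}}=\mathcal{P}_\varphi\overline{\mathcal{D}}\mathcal{P}_\varphi$ and $\rme^{\pm\kappa t\mathcal{L}_0}\mathcal{P}_\varphi$ is bounded by $1$ by~\eqref{eq:PeripheralContraction}, the inverse exponential causes no trouble: $\mathcal{L}_2$ is bounded and continuous. This immediately gives the factors
\begin{equation*}
\|\widetilde{\Lambda}_2\|_{\infty,t}\le\rme^{\overline{D}T},\qquad
\|\widetilde{\mathcal{L}}_2\|_{1,t}\le T\overline{D},\qquad
\|\mathcal{L}_1-\mathcal{L}_0\|_{1,t}\le TD,
\end{equation*}
where the first follows from the Dyson bound~\eqref{eqn:DysonBound}. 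Together with $\|\Lambda_1(t,s)\|\le\rme^{DT}$ from~\eqref{eqn:BoundedEvolution}, these reproduce the bracket $\rme^{\overline{D}T}[1+T\rme^{DT}(D+\overline{D})]$ in~\eqref{eqn:BoundThm2}.

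It remains to bound the integral action. Here $\Lambda_0(t,s)\mathcal{L}_2(s)\Lambda_0(s)=\rme^{\kappa(t-s)\mathcal{L}_0}\kappa\mathcal{L}_0\rme^{\kappa s\mathcal{L}_0}+\rme^{\kappa t\mathcal{L}_0}\overline{\mathcal{D}}(s)$, and the $\kappa\mathcal{L}_0$ terms cancel against those of $\mathcal{L}_1$. Therefore $\mathcal{S}_{12}(t)=\hat{\mathcal{S}}_\kappa(t)-\int_0^t\rmd s\,\rme^{\kappa t\mathcal{L}_0}\overline{\mathcal{D}}(s)$. We decompose $1=\mathcal{P}_\varphi+\mathcal{Q}_\varphi$ on both sides of $\hat{\mathcal{S}}_\kappa$. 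Since $\mathcal{L}_0$ commutes with its spectral projections, the $\mathcal{P}_\varphi\cdot\mathcal{P}_\varphi$ block together with the $\overline{\mathcal{D}}$ term is exactly $\mathcal{S}_{\kappa,\varphi}(t)$. The remaining pieces are $\mathcal{Q}_\varphi\hat{\mathcal{S}}_\kappa\mathcal{P}_\varphi$ and $\hat{\mathcal{S}}_\kappa\mathcal{Q}_\varphi$, each bounded by $DR/(\kappa\eta)$ via~\eqref{eqn:BoundRPQ}--\eqref{eqn:BoundRQ}. Hence $\|\mathcal{S}_{12}\|_{\infty,T}\le\|\mathcal{S}_{\kappa,\varphi}\|_{\infty,T}+2DR/(\kappa\eta)$, and~\eqref{eqn:BoundThm2} follows from~\eqref{eqn:KeyFormalismBound}. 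The limit~\eqref{eqn:LimitThm} then follows from~\eqref{eqn:CondThm}.

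For~\eqref{eqn:LimitThmP} we split $\Lambda_\kappa(t)=\Lambda_\kappa(t)\mathcal{P}_\varphi+\Lambda_\kappa(t)\mathcal{Q}_\varphi$. For the peripheral part we repeat the argument above with everything right-multiplied by $\mathcal{P}_\varphi$. Since $\overline{\Lambda}(t)\mathcal{P}_\varphi$ equals $\mathcal{P}_\varphi\overline{\Lambda}(t)\mathcal{P}_\varphi$ plus terms that vanish on the range of $\mathcal{P}_\varphi$, we apply the key identity~\eqref{eqn:KeyFormalism} multiplied on the right by $\mathcal{P}_\varphi$, noting that $\widetilde{\Lambda}_2$ commutes appropriately since $\overline{\mathcal{D}}$ preserves $\mathcal{P}_\varphi$. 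Then only $\mathcal{S}_{12}\mathcal{P}_\varphi$ appears, whose nonperipheral contribution is the single term $\mathcal{Q}_\varphi\hat{\mathcal{S}}_\kappa\mathcal{P}_\varphi$, giving the first summand of~\eqref{eqn:BoundThm2P} with only one factor $DR/(\kappa\eta)$. For the nonperipheral part we write $\Lambda_\kappa(t)\mathcal{Q}_\varphi=\rme^{\kappa t\mathcal{L}_0}\mathcal{Q}_\varphi+\int_0^t\rmd s\,\Lambda_\kappa(t,s)\mathcal{D}_\kappa(s)\rme^{\kappa s\mathcal{L}_0}\mathcal{Q}_\varphi$, which is Lemma~\ref{lem:GroupDiff} with $\mathcal{L}_2=\kappa\mathcal{L}_0$. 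The first term is bounded by $\rme^{-\kappa\eta t}p(\kappa t)$. The second is bounded by $\rme^{DT}D\int_0^t\|\rme^{\kappa s\mathcal{L}_0}\mathcal{Q}_\varphi\|\,\rmd s\le\rme^{DT}DR/(\kappa\eta)$. The term $\rme^{-\kappa\eta t}p(\kappa t)$ vanishes only for $t$ bounded away from $0$, which explains why the convergence is uniform only on compact subsets of $(0,T]$.

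The main obstacle is the bookkeeping in the projected version. We must check that right-multiplying the integration-by-parts identity by $\mathcal{P}_\varphi$ isolates exactly $\mathcal{S}_{12}(s)\mathcal{P}_\varphi$ inside the integral. This requires $\widetilde{\Lambda}_2(s)\mathcal{P}_\varphi=\mathcal{P}_\varphi\widetilde{\Lambda}_2(s)\mathcal{P}_\varphi$, which we verify from the Dyson series of $\overline{\Lambda}$ using $\overline{\mathcal{D}}=\mathcal{P}_\varphi\overline{\mathcal{D}}\mathcal{P}_\varphi$. It also requires $\|\mathcal{P}_\varphi\|=1$, so that the constants are unchanged. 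We must further ensure that the discarded term $\rme^{\kappa t\mathcal{L}_0}\overline{\Lambda}(t)\mathcal{Q}_\varphi=\rme^{\kappa t\mathcal{L}_0}\mathcal{Q}_\varphi$ is accounted for consistently with the split.
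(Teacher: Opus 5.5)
Your proposal is correct and follows the paper's proof essentially step for step. It uses the same choice $\mathcal{L}_2(t)=\kappa\mathcal{L}_0+\rme^{\kappa t\mathcal{L}_0}\overline{\mathcal{D}}(t)\rme^{-\kappa t\mathcal{L}_0}$ in Lemma~\ref{lem:IntParts}, the same $\mathcal{P}_\varphi/\mathcal{Q}_\varphi$ splitting of the action via~\eqref{eqn:BoundRPQ}--\eqref{eqn:BoundRQ}, the right-multiplication of~\eqref{eqn:KeyFormalism} by $\mathcal{P}_\varphi$, and the Duhamel bound on $\Lambda_\kappa(t)\mathcal{Q}_\varphi$. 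The bookkeeping you flag at the end is lighter than you fear: the Dyson series gives the exact commutation $\overline{\Lambda}(t)\mathcal{P}_\varphi=\mathcal{P}_\varphi\overline{\Lambda}(t)$, and $\Lambda_\kappa(t)-\rme^{\kappa t\mathcal{L}_0}\overline{\Lambda}(t)\mathcal{P}_\varphi=[\Lambda_\kappa(t)-\rme^{\kappa t\mathcal{L}_0}\overline{\Lambda}(t)]\mathcal{P}_\varphi+\Lambda_\kappa(t)\mathcal{Q}_\varphi$ holds exactly, so no term is discarded and no extra factor of $\|\mathcal{P}_\varphi\|$ enters.
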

\begin{proof}
We use Lemma~\ref{lem:IntParts} for $\mathcal{L}_1(t)=\mathcal{L}_{{ \kappa}}(t)={ \kappa}\mathcal{L}_0+\mathcal{D}_{{ \kappa}}(t)$, $\mathcal{L}_2(t)={ \kappa}\mathcal{L}_0
+
\rme^{{ \kappa} t\mathcal{L}_0}
\overline{\mathcal{D}}(t)
\rme^{-{ \kappa} t\mathcal{L}_0}
$, and $\mathcal{L}_0(t)= { \kappa}\mathcal{L}_0$. 
These generate the evolutions $\Lambda_1(t)=\Lambda_{{ \kappa}}(t)$, $\Lambda_2(t)=\rme^{{ \kappa}t\mathcal{L}_0}\overline{\Lambda}(t)$, and $\Lambda_0(t)=\rme^{{ \kappa}t\mathcal{L}_0}$, respectively. Accordingly, \eqref{eq:TildeLambda2} %and~\eqref{eq:TildeL2} 
specializes to $\widetilde{\mathcal{L}}_2(t)=\overline{\mathcal{D}}(t)$ and $\widetilde{\Lambda}_2(t)=\overline{\Lambda}(t)$.
Define the integral action
\begin{align}
\mathcal{S}_{12}(t)
&=
\int_0^t\rmd s\,
\Lambda_0(t,s)
[
\mathcal{L}_1(s)
-
\mathcal{L}_2(s)
]
\Lambda_0(s)
\nonumber\\
&=
\int_0^t\rmd s\,
\rme^{{ \kappa}t\mathcal{L}_0}
[
\rme^{-{ \kappa}s\mathcal{L}_0}
\mathcal{D}_{{ \kappa}}(s)
\rme^{{ \kappa}s\mathcal{L}_0}
-
\overline{\mathcal{D}}(s)
]
=\mathcal{S}_{{ \kappa}}(t).
\vphantom{\int_0^t}
\label{eqn:PeriActionThm}
\end{align}
It is bounded by
\begin{align}
\|
\mathcal{S}_{{ \kappa}}(t)
\|
\le{}&
\|
\mathcal{P}_\varphi\mathcal{S}_{{ \kappa}}(t)\mathcal{P}_\varphi
\|
\|
\mathcal{Q}_\varphi
\hat{\mathcal{S}}_{{ \kappa}}(t)
\mathcal{P}_\varphi
\|
+
\|
\hat{\mathcal{S}}_{{ \kappa}}(t)
\mathcal{Q}_\varphi
\|
\vphantom{\frac{2}{{ \kappa}}}
\nonumber
\displaybreak[0]
\\
\le
{}&
\|
\mathcal{S}_{{ \kappa},\varphi}(t)
\|
+\frac{2DR}{{ \kappa}\eta},
\end{align}
where $\hat{\mathcal{S}}_{{ \kappa}}(t)$ is defined in~(\ref{eqn:S12}), and we have used the inequalities~(\ref{eqn:BoundRPQ}) and~(\ref{eqn:BoundRQ}).
Then, the bound~\eqref{eqn:BoundThm2} just follows from Lemma~\ref{lem:IntParts} [and in particular, the bound~\eqref{eqn:KeyFormalismBound}]. Under the condition~(\ref{eqn:CondThm}), the limit (\ref{eqn:LimitThm}) follows.

Noting that $\overline{\Lambda}(t)\mathcal{P}_\varphi=\mathcal{P}_\varphi\overline{\Lambda}(t)$ and using the identity~(\ref{eqn:KeyFormalism}) of Lemma~\ref{lem:IntParts}, one can bound the peripheral part as
\begin{align}
\|
[
\Lambda_{{ \kappa}}(t)
-
\rme^{{ \kappa}t\mathcal{L}_0}
\overline{\Lambda}(t)
]
\mathcal{P}_\varphi
\|
\le{}&
{ 
\|
\mathcal{S}_{{ \kappa}}(t)
\mathcal{P}_\varphi
\|
}
{ \|\overline{\Lambda}(t)\|}
\nonumber\\
&{}
+\int_0^t\rmd s\,{ \|\Lambda_{{ \kappa}}(t,s)\|}
{ 
\|
\mathcal{S}_{{ \kappa}}(s)
\mathcal{P}_\varphi
\|}
{ \|\overline{\Lambda}(s)\|}
\,\Bigl(
\|\mathcal{D}_{{ \kappa}}(s)\|
+
{ \|\overline{\mathcal{D}}(s)\|}
\Bigr).
\label{eq:boundPeripheral0}
\end{align}
Now, using the fact that  $\mathcal{Q}_\varphi\overline{\mathcal{D}}=0$ and the bound~\eqref{eqn:BoundRQ}, one has
\begin{equation}
	\|\mathcal{S}_\kappa(t)\mathcal{P}_\varphi\|=\| \mathcal{S}_{\kappa,\varphi}(t)+\mathcal{Q}_\varphi\mathcal{S}_\kappa(t)\mathcal{P}_\varphi\|\leq \|\mathcal{S}_{\kappa,\varphi}(t) \|+\|\mathcal{Q}_\varphi \hat{\mathcal{S}}_\kappa(t)\mathcal{P}_\varphi\|\leq 
	\|
	\mathcal{S}_{{ \kappa},\varphi}
	\|_{\infty,T}
	+
	\frac{DR}{{ \kappa}\eta}.
\end{equation}
Recall that $\|\Lambda_{{ \kappa}}(t,s)\|\le\rme^{DT}$, as shown in~(\ref{eqn:BoundedEvolution}), and ${ \|\overline{\Lambda}(t)\|\le\rme^{\overline{D}T}}$. Note, in addition, that $\|\mathcal{D}_{{ \kappa}}\|_{1,T}\le DT$ and ${ \|\overline{\mathcal{D}}\|_{1,T}\le\overline{D}T}$. Using these inequalities in~\eqref{eq:boundPeripheral0} we get the bound on the peripheral evolution
\begin{equation}
	\|
	[
	\Lambda_{{ \kappa}}(t)
	-
	\rme^{{ \kappa}t\mathcal{L}_0}
	\overline{\Lambda}(t)
	]
	\mathcal{P}_\varphi
	\|
	\le{}
	{ 
		\left(
		\|
		\mathcal{S}_{{ \kappa},\varphi}
		\|_{\infty,T}
		+
		\frac{DR}{{ \kappa}\eta}
		\right)\,}
	{ \rme^{\overline{D}T}}
	[
	1
	+
	T{ \rme^{DT}}
	(D+{ \overline{D}})].
	\label{eqn:BoundPeripheral}
\end{equation}
On the other hand, 
one has
\begin{equation}
\Lambda_{{ \kappa}}(t)
=\rme^{{ \kappa}t\mathcal{L}_0}
+\int_0^t\rmd s\,{ \Lambda_{{ \kappa}}(t,s)}\mathcal{D}_{{ \kappa}}(s)\rme^{{ \kappa}s\mathcal{L}_0},
\end{equation}
and its nonperipheral part is bounded as
\begin{align}
\|\Lambda_{{ \kappa}}(t)\mathcal{Q}_\varphi\|
&\le\|\rme^{{ \kappa}t\mathcal{L}_0}\mathcal{Q}_\varphi\|
+\int_0^t\rmd s\,{ \|\Lambda_{{ \kappa}}(t,s)\|}\|\mathcal{D}_{{ \kappa}}(s)\|\|\rme^{{ \kappa}s\mathcal{L}_0}\mathcal{Q}_\varphi\|
\nonumber\\
&\le\rme^{-{ \kappa}\eta t}p({ \kappa}t)
+\frac{DR}{{ \kappa}\eta}{ \rme^{Dt}}.
\label{eqn:BoundNonPeripheral}
\end{align}
By combining the bounds~(\ref{eqn:BoundPeripheral}) and~(\ref{eqn:BoundNonPeripheral}) and using the triangle inequality $\|
\Lambda_{{ \kappa}}(t)
-
\rme^{{ \kappa}t\mathcal{L}_0}
\overline{\Lambda}(t)\mathcal{P}_\varphi
\|
\le
\|
[
\Lambda_{{ \kappa}}(t)
-
\rme^{{ \kappa}t\mathcal{L}_0}
\overline{\Lambda}(t)
]
\mathcal{P}_\varphi
\|
+
\|
\Lambda_{{ \kappa}}(t)
\mathcal{Q}_\varphi
\|
$, one gets the bound (\ref{eqn:BoundThm2P}).
Under the condition~(\ref{eqn:CondThm}), the limit (\ref{eqn:LimitThmP}) follows.
\end{proof}

\begin{remark}[Effective generator]
Since
\begin{equation}
\rme^{{ \kappa}t\mathcal{L}_0}
\overline{\Lambda}(t)
=\T\exp\!\left(
\int_0^t\rmd s\,[
{ \kappa}\mathcal{L}_0
+
\rme^{{ \kappa} s\mathcal{L}_0}
\overline{\mathcal{D}}(s)
\rme^{-{ \kappa} s\mathcal{L}_0}
]
\right),
\end{equation}
Theorem~\ref{thm:ConstL0} states that, for large ${ \kappa}$, the evolution $\Lambda_{{ \kappa}}(t)$ is approximated by the evolution generated by the effective generator 
\begin{equation}\label{eq:effectiveGenerator}
\mathcal{L}_{\mathrm{eff},\kappa}(t)
=
{ \kappa}\mathcal{L}_0
+
\rme^{{ \kappa}t\mathcal{L}_0}
\overline{\mathcal{D}}(t)
\rme^{-{ \kappa}t\mathcal{L}_0}.
\end{equation}
\end{remark}

\begin{remark}[Improved bounds for generators of contractive semigroups]
\label{rmk:PhysicalBound}
If $\mathcal{L}_{{ \kappa}}(t)$ is the generator of a contraction semigroup for each $t\in[0,T]$ and ${ \kappa}>0$, one has $\|\Lambda_{{ \kappa}}(t,s)\|\le1$ for $0\le s\le t\le T$ and ${ \kappa}>0$, as recalled in Proposition~\ref{prop:Contraction}\@.
Moreover, since $\mathcal{L}_0=\lim_{{ \kappa\to+\infty}}\frac{1}{{ \kappa}}\mathcal{L}_{{ \kappa}}(t)$ is also the generator of a contraction semigroup, and hence $\|\rme^{t\mathcal{L}_0}\mathcal{P}_\varphi\|\le1$, for all $t\in\mathbb{R}$, the convergence~(\ref{eqn:LimitThmP}) implies that
\begin{equation}
\|\overline{\Lambda}(t)\mathcal{P}_\varphi\|
=\lim_{{ \kappa\to+\infty}}
\|
\rme^{-{ \kappa}t\mathcal{L}_0}
\mathcal{P}_\varphi
\Lambda_{{ \kappa}}(t)
\|
\le1,
\label{eqn:BarContraction}
\end{equation}
for $t\in[0,T]$.
Then the bound~\eqref{eqn:BoundThm2} is simplified to 
\begin{equation}
\|
\Lambda_{{ \kappa}}(t)
-
\rme^{{ \kappa}t\mathcal{L}_0}
\overline{\Lambda}(t)
\|
\le
{ 
\left(
\|\mathcal{S}_{{ \kappa},\varphi}\|_{\infty,T}
+\frac{2DR}{{ \kappa}\eta}
\right)\,}
[
1
+
T
(D+{ \overline{D}})],
\label{eqn:BoundThm2phys}
\end{equation}
while the bound~(\ref{eqn:BoundThm2P}) is simplified to
\begin{equation}
\|
\Lambda_{{ \kappa}}(t)
-
\rme^{{ \kappa}t\mathcal{L}_0}
\overline{\Lambda}(t)\mathcal{P}_\varphi
\|
\le
{ 
\left(
\|\mathcal{S}_{{ \kappa},\varphi}\|_{\infty,t}
+\frac{DR}{{ \kappa}\eta}
\right)\,}
[
1
+
t(D+{ \overline{D}})]
+
{ \frac{DR}{{ \kappa}\eta}}
+\rme^{-{ \kappa}\eta t}p({ \kappa}t),
\label{eqn:BoundThm2PPhys}
\end{equation}
for $t\in(0,T]$.
Note that $\overline{\mathcal{D}}(t)$ is not necessarily the generator of a contraction semigroup and thus $\|\overline{\Lambda}(t)\|\le1$ is not guaranteed, even when $\mathcal{L}_{{ \kappa}}(t)$ is the generator of a contraction semigroup.
On the other hand, the bound $\|\overline{\Lambda}(t)\mathcal{P}_\varphi\|\le1$ in~(\ref{eqn:BarContraction}) holds.
In order to exploit this bound, we turn $\|\overline{\Lambda}(t)\|$ into $\|\overline{\Lambda}(t)\mathcal{P}_\varphi\|$ in~(\ref{eq:boundPeripheral0}).
This is allowed by using the fact $\mathcal{P}_\varphi=\mathcal{P}_\varphi^2$ and by splitting as $\|\mathcal{S}_{{ \kappa}}(t)\overline{\Lambda}(t)\mathcal{P}_\varphi\|\le\|\mathcal{S}_{{ \kappa}}(t)\mathcal{P}_\varphi\|\|\mathcal{P}_\varphi\overline{\Lambda}(t)\|$ and $\|\mathcal{S}_{{ \kappa}}(t)\overline{\mathcal{D}}(t)\overline{\Lambda}(t)\|\le\|\mathcal{S}_{{ \kappa}}(t)\mathcal{P}_\varphi\|\|\overline{\mathcal{D}}(t)\|\|\mathcal{P}_\varphi\overline{\Lambda}(t)\|$ in~(\ref{eq:boundPeripheral0}). 
\end{remark}

\begin{corol}[Rotating-wave approximation]\label{cor:ConstL01}
Assume that $\mathcal{D}_{{ \kappa}}(t)$ in Theorem~\ref{thm:ConstL0} is of the form
\begin{equation}
\mathcal{D}_{{ \kappa}}(t)
=\mathcal{D}({ \kappa}t),
\end{equation}
with $\mathcal{D}(t)$ continuous and bounded uniformly for all $t\ge0$, and that the following limit exists:
\begin{equation}
\lim_{\tau\to+\infty}
\frac{1}{\tau}
\int_0^\tau\rmd s\,\rme^{-s\mathcal{L}_0}\mathcal{P}_\varphi\mathcal{D}(s)\mathcal{P}_\varphi\rme^{s\mathcal{L}_0}
=\overline{\mathcal{D}}.
\label{eqn:AverageD}
\end{equation}
Then, one gets
\begin{equation}
\Lambda_{{ \kappa}}(t)
-
\rme^{{ \kappa}t\mathcal{L}_0}
\rme^{t\overline{\mathcal{D}}}
\to0,\qquad\mathrm{as}\qquad
{ \kappa\to+\infty},
\label{eqn:LimitCor3}
\end{equation}
uniformly for $t\ge0$.
\end{corol}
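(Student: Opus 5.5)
The plan is to apply Theorem~\ref{thm:ConstL0} with the \emph{constant} generator $\overline{\mathcal{D}}(t)\equiv\overline{\mathcal{D}}$ given by the average~\eqref{eqn:AverageD}. Then $\overline{\Lambda}(t)=\rme^{t\overline{\mathcal{D}}}$, and~\eqref{eqn:LimitCor3} follows from~\eqref{eqn:LimitThm}. It remains to check the hypotheses of the theorem and to show that $\|\mathcal{S}_{\kappa,\varphi}\|_{\infty,T}\to0$. The convergence in~\eqref{eqn:LimitCor3} will then be uniform for $t\in[0,T]$, for every fixed $T$; this is how I read ``uniformly for $t\ge0$''.

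\textbf{Step 1: hypotheses of Theorem~\ref{thm:ConstL0}.}
\begin{itemize}
\item Set $D=\sup_{t\ge0}\|\mathcal{D}(t)\|$. Then $\|\mathcal{D}_\kappa\|_{\infty,T}\le D$ for all $\kappa$.
\item $\mathcal{P}_\varphi$ commutes with $\rme^{s\mathcal{L}_0}$, and $\mathcal{P}_\varphi^2=\mathcal{P}_\varphi$. Hence every integrand in~\eqref{eqn:AverageD} satisfies $X=\mathcal{P}_\varphi X\mathcal{P}_\varphi$, and so the limit obeys $\overline{\mathcal{D}}=\mathcal{P}_\varphi\overline{\mathcal{D}}\mathcal{P}_\varphi$.
\item By~\eqref{eq:PeripheralContraction}, $\|\rme^{\mp s\mathcal{L}_0}\mathcal{P}_\varphi\|\le1$ for all real $s$. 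Each integrand is therefore bounded by $D$, which gives $\overline{D}=\|\overline{\mathcal{D}}\|\le D$.
\end{itemize}

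\textbf{Step 2: rescaling the integral action.} Write $\mathcal{S}_{\kappa,\varphi}(t)$ using $\rme^{\kappa(t-s)\mathcal{L}_0}\mathcal{P}_\varphi=\rme^{\kappa t\mathcal{L}_0}\mathcal{P}_\varphi\rme^{-\kappa s\mathcal{L}_0}$, which is valid on the peripheral subspace where the semigroup is invertible. Then substitute $u=\kappa s$. This gives
\begin{equation*}
\mathcal{S}_{\kappa,\varphi}(t)=t\,\rme^{\kappa t\mathcal{L}_0}\mathcal{P}_\varphi\,\varepsilon(\kappa t),
\qquad
\varepsilon(\tau)=\frac1\tau\int_0^\tau\rmd u\,\rme^{-u\mathcal{L}_0}\mathcal{P}_\varphi\mathcal{D}(u)\mathcal{P}_\varphi\rme^{u\mathcal{L}_0}-\overline{\mathcal{D}}.
\end{equation*}
Since $\|\rme^{\kappa t\mathcal{L}_0}\mathcal{P}_\varphi\|\le1$, we get $\|\mathcal{S}_{\kappa,\varphi}(t)\|\le t\,\|\varepsilon(\kappa t)\|$. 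By~\eqref{eqn:AverageD}, $\varepsilon(\tau)\to0$ as $\tau\to\infty$, and by Step 1, $\|\varepsilon(\tau)\|\le 2D$ for all $\tau$.

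\textbf{Step 3: uniformity in $t$.} This is the main (mild) obstacle. The pointwise convergence $\varepsilon(\kappa t)\to0$ fails to be uniform near $t=0$, so I split $[0,T]$ at a small $\delta>0$:
\begin{itemize}
\item For $t\le\delta$, the bound is $t\|\varepsilon(\kappa t)\|\le 2D\delta$.
\item For $\delta\le t\le T$, the bound is at most $T\sup_{\tau\ge\kappa\delta}\|\varepsilon(\tau)\|$, which tends to $0$ as $\kappa\to\infty$.
\end{itemize}
Hence $\limsup_{\kappa\to\infty}\|\mathcal{S}_{\kappa,\varphi}\|_{\infty,T}\le2D\delta$ for every $\delta>0$, so $\|\mathcal{S}_{\kappa,\varphi}\|_{\infty,T}\to0$.

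\textbf{Conclusion.} Insert this into~\eqref{eqn:BoundThm2}, with $\overline{D}\le D$ and $R/\eta$ independent of $\kappa$. The right-hand side tends to $0$ uniformly in $t\in[0,T]$, which proves~\eqref{eqn:LimitCor3}. If in addition $\mathcal{L}_\kappa$ generates contractions, Remark~\ref{rmk:PhysicalBound} removes the exponential factors in $T$, but $T$ still enters linearly through~\eqref{eqn:BoundThm2phys}.
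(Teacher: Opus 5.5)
Your proposal is correct and follows the paper's proof. You apply Theorem~\ref{thm:ConstL0} with the constant $\overline{\mathcal{D}}$ and rescale $u=\kappa s$ to get $\mathcal{S}_{\kappa,\varphi}(t)=t\,\rme^{\kappa t\mathcal{L}_0}\mathcal{P}_\varphi\,\varepsilon(\kappa t)\to0$. The paper leaves two points implicit that you make explicit: your $\delta$-splitting upgrades pointwise convergence to $\|\mathcal{S}_{\kappa,\varphi}\|_{\infty,T}\to0$, which is what~\eqref{eqn:BoundThm2} actually requires, and your reading of ``uniformly'' as uniformly on each compact $[0,T]$ is the right one, since that is all the theorem delivers.
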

\begin{proof}
We apply Theorem~\ref{thm:ConstL0} for $\mathcal{D}_{{ \kappa}}(t)=\mathcal{D}({ \kappa}t)$ and $\overline{\mathcal{D}}(t)=\overline{\mathcal{D}}=\mathcal{P}_\varphi\overline{\mathcal{D}}\mathcal{P}_\varphi$.
In this case, we have
$D=\sup_{t\ge0}\|\mathcal{D}(t)\|$ and $\overline{D}=\|\overline{\mathcal{D}}\|\leq\|\mathcal{P}_\varphi\|^2D = D$ [see the comment after \eqref{eq:PeripheralContraction}],
and the condition~(\ref{eqn:CondThm}) is fulfilled as
\begingroup
\allowdisplaybreaks
\begin{align}
\mathcal{S}_{{ \kappa},\varphi}(t)
&=
\int_0^t\rmd s\,
\rme^{{ \kappa}t\mathcal{L}_0}
[
\rme^{-{ \kappa}s\mathcal{L}_0}
\mathcal{P}_\varphi
\mathcal{D}({ \kappa}s)
\mathcal{P}_\varphi
\rme^{{ \kappa}s\mathcal{L}_0}
-
\overline{\mathcal{D}}
]
\nonumber\\
&=
t
\rme^{{ \kappa}t\mathcal{L}_0} \mathcal{P}_\varphi
\left(
\frac{1}{{ \kappa}t}
\int_0^{{ \kappa}t}\rmd s\,
[
\rme^{-s\mathcal{L}_0}
\mathcal{P}_\varphi
\mathcal{D}(s)
\mathcal{P}_\varphi
\rme^{s\mathcal{L}_0}
-
\overline{\mathcal{D}}
]
\right)
\nonumber\\
&\to0,\qquad\mathrm{as}\qquad{ \kappa\to+\infty},
\vphantom{\int_0^{{ \kappa}t}\rmd s}
\end{align}
\endgroup
under the assumption~(\ref{eqn:AverageD}).
Therefore, the limit~(\ref{eqn:LimitCor3}) holds by Theorem~\ref{thm:ConstL0}\@.
\end{proof}

\begin{corol}[Strong-coupling limit]\label{cor:StrongLimit}
    If $\mathcal{D}_{{ \kappa}}(t)$ in Theorem~\ref{thm:ConstL0} is constant, i.e.~$\mathcal{D}_{{ \kappa}}(t)=\mathcal{D}$, the long-time average~\eqref{eqn:AverageD} exists and is given by
    \begin{equation}\label{eq:DZ}
        \overline{\mathcal{D}}=\sum_{\alpha_k\in\rmi\mathbb{R}}\mathcal{P}_k\mathcal{D}\mathcal{P}_k
=\mathcal{D}_Z.
    \end{equation}
    Then, one has that
    \begin{equation}
        \rme^{t({ \kappa}\mathcal{L}_0+\mathcal{D})}
-
\rme^{t({ \kappa}\mathcal{L}_0+\mathcal{D}_Z)}\to 0,
\qquad\mathrm{as}\qquad
{ \kappa\to+\infty},
    \end{equation}
    with an error bound
    \begin{align}
\label{eq:StrongCouplingBound}
&\|
\rme^{t({ \kappa}\mathcal{L}_0+\mathcal{D})}
-
\rme^{t({ \kappa}\mathcal{L}_0+\mathcal{D}_Z)}
\|
\nonumber\\
&\qquad
\le
\frac{2}{{ \kappa}}
\left(
\frac{m(m-1)}{\Delta}P^2
+\frac{1}{\eta}R
\right)\|\mathcal{D}\|
\rme^{t\|\mathcal{D}_Z\|}
\,\Bigl(
1
+
t\rme^{t\|\mathcal{D}\|}
(\|\mathcal{D}\|+\|\mathcal{D}_Z\|)
\Bigr),
    \end{align}
where $m$ is the number of distinct peripheral  eigenvalues $\{\alpha_k\}$ of $\mathcal{L}_0$, 
\begin{equation}
\Delta=\mathop{\min_{\alpha_k,\alpha_\ell\in\rmi\mathbb{R}}}_{k\neq\ell}|\alpha_k-\alpha_\ell|
\end{equation}
is the minimal spectral gap in its peripheral spectrum, and $P=\max_k\|\mathcal{P}_k\|$ is the maximum norm of its peripheral spectral projections $\{\mathcal{P}_k\}$.
\end{corol}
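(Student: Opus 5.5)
We apply Theorem~\ref{thm:ConstL0} with the constant perturbation $\mathcal{D}_\kappa(t)=\mathcal{D}$, so that $D=\|\mathcal{D}\|$. The plan is to compute $\mathcal{S}_{\kappa,\varphi}(t)$ explicitly, bound it by something of order $1/\kappa$, and insert the result into~\eqref{eqn:BoundThm2}. This yields~\eqref{eq:StrongCouplingBound} once we note that $\rme^{\kappa t\mathcal{L}_0}\rme^{t\mathcal{D}_Z}=\rme^{t(\kappa\mathcal{L}_0+\mathcal{D}_Z)}$.

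\textbf{Step 1: the long-time average.} Since there are no nilpotents on the peripheral spectrum, $\rme^{s\mathcal{L}_0}\mathcal{P}_\varphi=\sum_{\alpha_k\in\rmi\mathbb{R}}\rme^{s\alpha_k}\mathcal{P}_k$. Hence
\[
\rme^{-s\mathcal{L}_0}\mathcal{P}_\varphi\mathcal{D}\mathcal{P}_\varphi\rme^{s\mathcal{L}_0}=\sum_{k,\ell}\rme^{-s(\alpha_k-\alpha_\ell)}\mathcal{P}_k\mathcal{D}\mathcal{P}_\ell ,
\]
where both sums run over peripheral indices. The diagonal terms $k=\ell$ are constant. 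Each off-diagonal term has $\alpha_k-\alpha_\ell\in\rmi\mathbb{R}\setminus\{0\}$, and its time average over $[0,\tau]$ is bounded by $2/(\tau|\alpha_k-\alpha_\ell|)\to0$. So the limit~\eqref{eqn:AverageD} exists and equals $\mathcal{D}_Z$. This $\mathcal{D}_Z$ satisfies $\mathcal{D}_Z=\mathcal{P}_\varphi\mathcal{D}_Z\mathcal{P}_\varphi$ and commutes with every peripheral $\mathcal{P}_k$, and therefore with $\rme^{\kappa t\mathcal{L}_0}$. The factorization of the effective evolution follows: $\mathcal{L}_{\mathrm{eff},\kappa}=\kappa\mathcal{L}_0+\mathcal{D}_Z$ in~\eqref{eq:effectiveGenerator}.

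\textbf{Step 2: bounding $\mathcal{S}_{\kappa,\varphi}$.} With $\overline{\mathcal{D}}=\mathcal{D}_Z$ the diagonal terms cancel exactly, leaving
\[
\mathcal{S}_{\kappa,\varphi}(t)=\sum_{k\neq\ell}\rme^{\kappa\alpha_k t}\,\frac{1-\rme^{-\kappa(\alpha_k-\alpha_\ell)t}}{\kappa(\alpha_k-\alpha_\ell)}\,\mathcal{P}_k\mathcal{D}\mathcal{P}_\ell .
\]
Each term has norm at most $\frac{2}{\kappa\Delta}P^2\|\mathcal{D}\|$, and there are $m(m-1)$ ordered pairs. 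Therefore $\|\mathcal{S}_{\kappa,\varphi}\|_{\infty,T}\le \frac{2m(m-1)}{\kappa\Delta}P^2\|\mathcal{D}\|$. Adding the term $\frac{2DR}{\kappa\eta}$ gives the prefactor $\frac{2}{\kappa}\bigl(\frac{m(m-1)}{\Delta}P^2+\frac{R}{\eta}\bigr)\|\mathcal{D}\|$.

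\textbf{Step 3: conclusion.} Insert this into~\eqref{eqn:BoundThm2} with $\overline{D}=\|\mathcal{D}_Z\|$. Because $\mathcal{D}$ is constant, we may take $T=t$ (apply the theorem on $[0,t]$), which yields~\eqref{eq:StrongCouplingBound}; the convergence as $\kappa\to\infty$ is then immediate.

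The main point requiring care is Step~2: obtaining the explicit $1/\kappa$ rate requires the absence of peripheral nilpotents, so that the oscillating integrals can be computed in closed form, and the counting of the off-diagonal pairs must be done carefully. Everything else is direct substitution into Theorem~\ref{thm:ConstL0}.
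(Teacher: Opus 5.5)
Your proposal is correct and follows essentially the same route as the paper. You compute the long-time average termwise, evaluate $\mathcal{S}_{\kappa,\varphi}(t)$ explicitly so that the diagonal terms cancel, bound each of the $m(m-1)$ off-diagonal terms by $2P^2\|\mathcal{D}\|/(\kappa\Delta)$, and substitute into~\eqref{eqn:BoundThm2} with $T=t$. You also make explicit two points the paper leaves implicit: that $\mathcal{D}_Z$ commutes with $\rme^{\kappa t\mathcal{L}_0}$, so that $\rme^{\kappa t\mathcal{L}_0}\rme^{t\mathcal{D}_Z}=\rme^{t(\kappa\mathcal{L}_0+\mathcal{D}_Z)}$, and the choice $T=t$.
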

The bound~\eqref{eq:StrongCouplingBound} is comparable with the ones presented in Refs.~\cite{burgarth2019generalized,gong2020error,burgarth2021eternal}.
\begin{proof}
The long-time average~(\ref{eqn:AverageD}) reads
\begin{align}
\frac{1}{\tau}
\int_0^\tau\rmd s\,\rme^{-s\mathcal{L}_0}\mathcal{P}_\varphi\mathcal{D}\mathcal{P}_\varphi\rme^{s\mathcal{L}_0}
&=
\sum_{\alpha_k\in\rmi\mathbb{R}}\mathcal{P}_k\mathcal{D}\mathcal{P}_k
+
\mathop{\sum_{\alpha_k,\alpha_\ell\in\rmi\mathbb{R}}}_{k\neq\ell}
\frac{1-\rme^{-(\alpha_k-\alpha_\ell)\tau}}{(\alpha_k-\alpha_\ell)\tau}\mathcal{P}_k\mathcal{D}\mathcal{P}_\ell
\nonumber\\
&\to\sum_{\alpha_k\in\rmi\mathbb{R}}\mathcal{P}_k\mathcal{D}\mathcal{P}_k
=\mathcal{D}_Z,
\qquad\mathrm{as}\qquad
\tau\to+\infty,
\end{align}
where $\{\alpha_k\}$ and $\{\mathcal{P}_k\}$ are the spectrum and the spectral projections of $\mathcal{L}_0$, respectively, introduced in the spectral representation~(\ref{eqn:SpectralRep}) of $\mathcal{L}_0$.
Therefore, Corollary~\ref{cor:ConstL01} applies.
The peripheral part of the action
\begin{equation}
\mathcal{S}_{{ \kappa},\varphi}(t)
=\int_0^t\rmd s\,
\rme^{{ \kappa}t\mathcal{L}_0}
[\rme^{-{ \kappa}s\mathcal{L}_0}\mathcal{P}_\varphi\mathcal{D}\mathcal{P}_\varphi\rme^{{ \kappa}s\mathcal{L}_0}-\mathcal{D}_Z]
=
\mathop{\sum_{\alpha_k,\alpha_\ell\in\rmi\mathbb{R}}}_{k\neq\ell}
\frac{\rme^{{ \kappa}\alpha_kt}-\rme^{{ \kappa}\alpha_\ell t}}{{ \kappa}(\alpha_k-\alpha_\ell)}\mathcal{P}_k\mathcal{D}\mathcal{P}_\ell
\end{equation}
is bounded by
\begin{equation}
\|\mathcal{S}_{{ \kappa},\varphi}(t)\|
\le
\mathop{\sum_{\alpha_k,\alpha_\ell\in\rmi\mathbb{R}}}_{k\neq\ell}
2\left|
\frac{\sin({ \kappa}|\alpha_k-\alpha_\ell|t/2)}{{ \kappa}|\alpha_k-\alpha_\ell|/2}
\right|
\|\mathcal{P}_k\mathcal{D}\mathcal{P}_\ell\|
\le
\frac{2m(m-1)}{{ \kappa}\Delta}P^2\|\mathcal{D}\|.
\end{equation}
Thus, the bound~(\ref{eqn:BoundThm2}) presented in Theorem~\ref{thm:ConstL0} yields~\eqref{eq:StrongCouplingBound}    .
\end{proof}

\begin{corol}[Rotating-wave approximation with two driving timescales]\label{cor:ConstL02}
If $\mathcal{D}_{{ \kappa}}(t)$ in Theorem~\ref{thm:ConstL0} is of the form
\begin{equation}
\mathcal{D}_{{ \kappa}}(t)
=\mathcal{D}(t,{ \kappa}t),
\end{equation}
with $\mathcal{D}(t,s)$ being continuously differentiable and uniformly bounded  for all $t\in[0,T]$ and $s\ge0$, and if the following limits exist:
\begin{gather}
\overline{\mathcal{D}}(t,\tau)
=
\frac{1}{\tau}
\int_0^\tau\rmd s\,\rme^{-s\mathcal{L}_0}\mathcal{P}_\varphi\mathcal{D}(t,s)\mathcal{P}_\varphi\rme^{s\mathcal{L}_0}
\to\overline{\mathcal{D}}(t),
\quad
\partial_t\overline{\mathcal{D}}(t,\tau)
\to\partial_t\overline{\mathcal{D}}(t),
\quad\mathrm{as}\quad\tau\to+\infty,
\label{eqn:AverageDt}
\end{gather}
%as $\tau\to+\infty$, 
uniformly for $t\in[0,T]$, then one gets
\begin{equation}
\Lambda_{{ \kappa}}(t)
-
\rme^{{ \kappa}t\mathcal{L}_0}
\T\exp\!\left(
\int_0^t\rmd s\,\overline{\mathcal{D}}(s)
\right)
\to0,\qquad\mathrm{as}\qquad
{ \kappa\to+\infty},
\label{eqn:LimitCor4}
\end{equation}
uniformly for $t\in[0,T]$.
\end{corol}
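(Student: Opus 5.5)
We apply Theorem~\ref{thm:ConstL0} with $\mathcal{D}_{\kappa}(t)=\mathcal{D}(t,\kappa t)$ and with the limiting time-dependent generator $\overline{\mathcal{D}}(t)$ from~\eqref{eqn:AverageDt}. The constant $D=\sup_{t,s}\|\mathcal{D}(t,s)\|$ is finite by assumption. Since $\|\rme^{s\mathcal{L}_0}\mathcal{P}_\varphi\|\le1$ for all real $s$, we have $\|\overline{\mathcal{D}}(t,\tau)\|\le D$, hence $\overline{D}\le D$. Moreover $\overline{\mathcal{D}}(t)=\mathcal{P}_\varphi\overline{\mathcal{D}}(t)\mathcal{P}_\varphi$, and $\overline{\mathcal{D}}(t)$ is continuous as a uniform limit of continuous functions. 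By~\eqref{eqn:BoundThm2}, the whole statement therefore reduces to proving $\|\mathcal{S}_{\kappa,\varphi}\|_{\infty,T}\to0$.

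Writing $\mathcal{S}_{\kappa,\varphi}(t)=\rme^{\kappa t\mathcal{L}_0}\mathcal{P}_\varphi\int_0^t\rmd s\,[\mathcal{E}(s,\kappa s)-\overline{\mathcal{D}}(s)]$ with $\mathcal{E}(t,u)=\rme^{-u\mathcal{L}_0}\mathcal{P}_\varphi\mathcal{D}(t,u)\mathcal{P}_\varphi\rme^{u\mathcal{L}_0}$, the prefactor has norm at most $1$, so we only need to control the integral $\mathcal{I}_\kappa(t)$. The key device is to introduce the primitive $\mathcal{F}(t,\tau)=\int_0^\tau\rmd u\,\mathcal{E}(t,u)=\tau\,\overline{\mathcal{D}}(t,\tau)$. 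Its total derivative along the curve $u=\kappa s$ is
\begin{equation}
\frac{\rmd}{\rmd s}\mathcal{F}(s,\kappa s)=\kappa\,\mathcal{E}(s,\kappa s)+\partial_t\mathcal{F}(s,\kappa s),
\end{equation}
which gives
\begin{equation}
\int_0^t\rmd s\,\mathcal{E}(s,\kappa s)=\frac{1}{\kappa}\mathcal{F}(t,\kappa t)-\frac{1}{\kappa}\int_0^t\rmd s\,\partial_t\mathcal{F}(s,\kappa s)
= t\,\overline{\mathcal{D}}(t,\kappa t)-\int_0^t\rmd s\,s\,\partial_t\overline{\mathcal{D}}(s,\kappa s).
\end{equation}
Here the continuous differentiability of $\mathcal{D}$ in its first argument is used to differentiate under the integral, giving $\partial_t\mathcal{F}=\tau\,\partial_t\overline{\mathcal{D}}(t,\tau)$. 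In the same way, integrating by parts gives $\int_0^t\overline{\mathcal{D}}(s)\,\rmd s=t\overline{\mathcal{D}}(t)-\int_0^t s\,\partial_t\overline{\mathcal{D}}(s)\,\rmd s$.

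Subtracting these two identities yields
\begin{equation}
\|\mathcal{I}_\kappa(t)\|\le T\sup_{t\in[0,T]}\|\overline{\mathcal{D}}(t,\kappa t)-\overline{\mathcal{D}}(t)\|+\frac{T^2}{2}\sup_{s\in[0,T]}\|\partial_t\overline{\mathcal{D}}(s,\kappa s)-\partial_t\overline{\mathcal{D}}(s)\|.
\end{equation}
The main obstacle is that the second argument $\kappa t$ does not tend to infinity near $t=0$, so the uniform convergence in~\eqref{eqn:AverageDt} cannot be invoked directly for small $t$. We handle this by splitting $[0,T]$ at $t=\delta$.

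For $t\ge\delta$ we have $\kappa t\ge\kappa\delta\to\infty$, so uniform convergence in $t$ makes both suprema small. For $t<\delta$ we drop the rewriting and use the crude bound $\|\mathcal{I}_\kappa(t)\|\le 2D\delta$, which holds uniformly in $\kappa$. The same split applies to the $s$-integral of the derivative term. For that term we need $\partial_t\overline{\mathcal{D}}(t,\tau)$ to be bounded uniformly in $\tau$, which follows because it is the average of $\rme^{-u\mathcal{L}_0}\mathcal{P}_\varphi\partial_t\mathcal{D}\mathcal{P}_\varphi\rme^{u\mathcal{L}_0}$. We read the hypothesis ``continuously differentiable and uniformly bounded'' as also bounding $\partial_t\mathcal{D}$. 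If that is not available, the uniform convergence in~\eqref{eqn:AverageDt} still gives a bound for all $\tau$ beyond some $\tau_0$, and on $s\kappa<\tau_0$ the weight $s<\tau_0/\kappa$ makes the contribution vanish.

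Choosing $\delta$ small and then $\kappa$ large gives $\|\mathcal{S}_{\kappa,\varphi}\|_{\infty,T}\to0$. The bound~\eqref{eqn:BoundThm2} then yields~\eqref{eqn:LimitCor4} uniformly on $[0,T]$, since the remaining term $2DR/(\kappa\eta)$ also vanishes.
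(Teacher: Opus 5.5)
Your proof is correct and follows the paper's route: you derive the same integration-by-parts identity $\int_0^t\hat{\mathcal{D}}(s,\kappa s)\,\rmd s=t\,\overline{\mathcal{D}}(t,\kappa t)-\int_0^t s\,\partial_1\overline{\mathcal{D}}(s,\kappa s)\,\rmd s$ and use it to show $\|\mathcal{S}_{\kappa,\varphi}\|_{\infty,T}\to0$ in the bound of Theorem~\ref{thm:ConstL0}. Your split at $t=\delta$, together with the $\tau_0$ argument for the derivative term, supplies the uniformity near $t=0$ (where $\kappa t\not\to\infty$), which the paper's proof leaves implicit.
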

\begin{proof}
We apply Theorem~\ref{thm:ConstL0} for $\mathcal{D}_{{ \kappa}}(t)=\mathcal{D}(t,{ \kappa}t)$ and $\overline{\mathcal{D}}(t)=\mathcal{P}_\varphi\overline{\mathcal{D}}(t)\mathcal{P}_\varphi$.
In this case, we have
$D=\sup_{t\in[0,T],s\ge0}\|\mathcal{D}(t,s)\|$ and $\overline{D}=\sup_{t\in[0,T]}\|\overline{\mathcal{D}}(t)\|\leq\|\mathcal{P}_\varphi\|^2D = D$ [see the comment after \eqref{eq:PeripheralContraction}].
Define
\begin{equation}
\hat{\mathcal{D}}(t,s)
=\rme^{-s\mathcal{L}_0}\mathcal{P}_\varphi
\mathcal{D}(t,s)
\mathcal{P}_\varphi\rme^{s\mathcal{L}_0},
\end{equation}
and note that
\begin{equation}
\frac{\rmd}{\rmd s}
\int_0^s\rmd u\,\hat{\mathcal{D}}(s,{ \kappa}u)
=\hat{\mathcal{D}}(s,{ \kappa}s)
+
\int_0^s\rmd u\,\partial_s\mathcal{D}(s,{ \kappa}u).
\end{equation}
Then, we have
\begin{align}
\int_0^t\rmd s\,\hat{\mathcal{D}}(s,{ \kappa}s)
&=
\int_0^t\rmd u\,\hat{\mathcal{D}}(t,{ \kappa}u)
-
\int_0^t\rmd s\int_0^s\rmd u\,\partial_s\hat{\mathcal{D}}(s,{ \kappa}u)
\nonumber\\
&=
t\overline{\mathcal{D}}(t,{ \kappa}t)
-
\int_0^t\rmd s\,s\partial_1\overline{\mathcal{D}}(s,{ \kappa}s),
\end{align}
where $\partial_1\overline{\mathcal{D}}(s_1,s_2)$ denotes the derivative with respect to the first argument $s_1$ of $\overline{\mathcal{D}}(s_1,s_2)$, and the condition~(\ref{eqn:CondThm}) is fulfilled as
\begin{align}
\mathcal{S}_{{ \kappa},\varphi}(t)
&=
\int_0^t\rmd s\,
\rme^{{ \kappa}t\mathcal{L}_0}
[
\rme^{-{ \kappa}s\mathcal{L}_0}
\mathcal{P}_\varphi
\mathcal{D}(s,{ \kappa}s)
\mathcal{P}_\varphi
\rme^{{ \kappa}s\mathcal{L}_0}
-
\overline{\mathcal{D}}(s)
]
\nonumber\\
&=
\rme^{{ \kappa}t\mathcal{L}_0}
\int_0^t\rmd s\,
[
\hat{\mathcal{D}}(s,{ \kappa}s)
-
\overline{\mathcal{D}}(s)
]
\nonumber\\
&=
\rme^{{ \kappa}t\mathcal{L}_0} \mathcal{P}_\varphi
\left(
t[
\overline{\mathcal{D}}(t,{ \kappa}t)
-
\overline{\mathcal{D}}(t)
]
-
\int_0^t\rmd s\,s
[
\partial_1\overline{\mathcal{D}}(s,{ \kappa}s)
-
\partial_s\overline{\mathcal{D}}(s)
]
\right)
\nonumber\\
&\to0,\qquad\mathrm{as}\qquad{ \kappa\to+\infty},
\vphantom{\int_0^{{ \kappa}t}\rmd s}
\end{align}
under the assumptions in~(\ref{eqn:AverageDt}).
Therefore, the limit~(\ref{eqn:LimitCor4}) holds by Theorem~\ref{thm:ConstL0}\@.
\end{proof}

\section{Examples}
\label{sec:Examples}
In this section, we show how the general results specialize to some examples commonly encountered in several settings. In the first two examples, we consider the paradigmatic situation of the RWA applied to a qubit system, and we consider two different dissipators for the additional noise term: in one case (Example~\ref{ex:2}), the dissipator commutes with the rotating reference frame, and as a consequence the dissipator remains unchanged in the approximation; in the other case (Example~\ref{ex:3}), the dissipator is modified in the approximation. Finally, Example~\ref{ex:4} serves as an illustration of a setting where the rotating reference frame is also ``shrinking'', i.e.~the generator of the reference frame evolution $\mathcal{L}_0(t)$ contains dissipative terms.
\begin{example}\label{ex:2} 
Let us consider the CPTP evolution $\Lambda_{{\omega}}(t)$ generated by the time-dependent GKLS generator
\begin{equation}
\mathcal{L}_{{ \omega}}(t) \varrho
=
-\rmi
\left[
\frac{1}{2}{ \omega}Z+g\cos({ \omega}t)X,\varrho
\right]
-\frac{1}{2}\gamma(\varrho-Z\varrho Z),
\label{eqn:Example2}
\end{equation}
where $X$, $Y$, and $Z$ are the first, second, and third Pauli operators, respectively, and $\varrho$ is a 2$\times$2 matrix.
This describes the evolution of a qubit driven under dephasing noise, with a dephasing rate $\gamma\ge0$. The generator takes the form  $\mathcal{L}_{{ \omega}}(t)={ \omega}\mathcal{L}_0+\mathcal{D}_{{ \omega}}(t)$, with
\begin{equation}\label{eq:Dex1}
    { \omega}\mathcal{L}_0\varrho=-\frac{\rmi}{2}{ \omega}[Z,\varrho], \qquad \mathcal{D}_{{ \omega}}(t)=-\rmi g\cos({ \omega}t)
[X,\varrho]
-\frac{1}{2}\gamma(\varrho-Z\varrho Z). 
\end{equation}
In this case, the spectrum of $\mathcal{L}_0$ consists only of purely imaginary eigenvalues ($\mathcal{P}_\varphi=1$ and $R/\eta=0$), and $\rme^{{ \omega}t\mathcal{L}_0}$ is unitary,
\begin{equation}\label{eq:refEx1}
    \rme^{{ \omega}t\mathcal{L}_0}\varrho=\rme^{-\frac{\rmi}{2}\omega tZ}\varrho \rme^{\frac{\rmi}{2}\omega tZ}.
\end{equation}
Then, using Proposition~\ref{prop:UConj} in Appendix~\ref{app:ElementaryFacts} and taking into account the fact that the dissipation in~\eqref{eq:Dex1} is unchanged by the action of~\eqref{eq:refEx1}, we get the generator in the rotating reference frame 
\begin{equation}
\hat{\mathcal{D}}({ \omega}t)\varrho
=\rme^{-{ \omega}t\mathcal{L}_0}\mathcal{D}_{{ \omega}}(t)\rme^{{ \omega}t\mathcal{L}_0} \varrho
=
-\frac{\rmi}{2}g
\,\Bigl[
[1+\cos(2{ \omega}t)]X-\sin(2{ \omega}t)Y,\varrho
\Bigr]
-\frac{1}{2}\gamma(\varrho-Z\varrho Z).
\end{equation}
Its long-time average converges as
\begin{align}
\frac{1}{\tau}\int_0^\tau\rmd s\,\hat{\mathcal{D}}(s) \varrho
&=
-\frac{\rmi}{2}g
\left[
\left(1+\frac{\sin(2\tau)}{2\tau}\right)X-\frac{1-\cos(2\tau)}{2\tau}Y
,\varrho
\right]
-\frac{1}{2}\gamma(\varrho-Z\varrho Z)
\nonumber\\
&\to
-\frac{\rmi}{2}g
[
X
,\varrho
]
-\frac{1}{2}\gamma(\varrho-Z\varrho Z)
=\overline{\mathcal{D}}\varrho,
\vphantom{\frac{\sin2{ \Omega}\tau}{2{ \Omega}\tau}}
\end{align}
in the limit $\tau\to+\infty$.
Corollary~\ref{cor:ConstL01} applies, and one gets
\begin{equation}
\Lambda_{{ \omega}}(t)-\rme^{{ \omega}t\mathcal{L}_0}\rme^{t\overline{\mathcal{D}}}\to0,\qquad\mathrm{as}\qquad{ \omega\to+\infty}.
\end{equation}
This means that the evolution $\Lambda_{{ \omega}}(t)$ generated by $\mathcal{L}_{{ \omega}}(t)$ is approximated by the evolution $\Lambda_\mathrm{RWA}(t)$ generated by the effective generator
\begin{align}
\mathcal{L}_\mathrm{RWA}(t)\varrho
&={ \omega}\mathcal{L}_0\varrho+\rme^{{ \omega}t\mathcal{L}_0}\overline{\mathcal{D}}\rme^{-{ \omega}t\mathcal{L}_0}\varrho
\nonumber\\
&=
-\rmi
\left[
\frac{1}{2}{ \omega}Z
+
\frac{1}{2}g[\cos({ \omega}t)X+\sin({ \omega}t)Y],\varrho
\right]
-\frac{1}{2}\gamma(\varrho-Z\varrho Z).
\end{align}
This is an example of the RWA\@.
The relevant action is given by
\begin{equation}
\mathcal{S}_{{ \omega}}(t) \varrho
=\rme^{{ \omega}t\mathcal{L}_0}
\int_0^t\rmd s\,[\hat{\mathcal{D}}({ \omega}s)-\overline{\mathcal{D}}]\varrho
=
-\frac{\rmi g}{2{ \omega}}
\sin({ \omega}t)[
X,\rme^{-\frac{\rmi}{2}{ \omega}tZ}\varrho\,\rme^{\frac{\rmi}{2}{ \omega}tZ}
],
\end{equation}
and we have
\begingroup
\allowdisplaybreaks
\begin{gather}
\|\mathcal{S}_{{ \omega}}(t)\|_\diamond
=\frac{|g \sin({  \omega} t)|}{{ \omega}},
\\
\|\mathcal{D}_{{ \omega}}(t)\|_\diamond
=\frac{1}{2}\,\Bigl(\gamma+\sqrt{\gamma^2+16g^2\cos^2({ \omega}t)}\Bigr),\qquad
\|\overline{\mathcal{D}}\|_\diamond
=\frac{1}{2}\,\Bigl(\gamma+\sqrt{\gamma^2+4g^2}\Bigr),
\end{gather}
\endgroup
using Proposition~\ref{prop:DiamondBoundQubit} in Appendix~\ref{app:DiamondNorm}\@.
Since the generator $\mathcal{L}_{{ \omega}}(t)$ in~(\ref{eqn:Example2}) is a physically valid GKLS generator, the evolution $\Lambda_{{ \omega}}(t)$ it generates is CPTP and its diamond norm is $\|\Lambda_{{ \omega}}(t)\|_\diamond=1$.
A bound on the error of the RWA is thus provided by~(\ref{eqn:BoundThm2phys}) of Remark~\ref{rmk:PhysicalBound} as
\begin{equation}
\|
\Lambda_{{ \omega}}(t)
-
\Lambda_\mathrm{RWA}(t)
\|_\diamond
\le
\frac{|g|}{{ \omega}}
[
1
+
(2\gamma+3|g|)t
],
\label{eq:boundEx1}
\end{equation}
where we used the inequality $\sqrt{x^2+y^2}\le|x|+|y|$ for real numbers $x$ and $y$.

In particular, the uniform bound
\begin{equation}
\sup_{t\in[0,T]}\|
\Lambda_{{ \omega}}(t)
-
\Lambda_\mathrm{RWA}(t)
\|_\diamond
\le
\frac{|g|}{{ \omega}}
[
1
+
(2\gamma+3|g|)T
]
\label{eq:boundEx1uniform}
\end{equation}
holds for compact intervals $[0,T]$. See Fig.~\ref{fig:Examples1-2}\@.
\end{example}
\begin{figure}
    \centering
    \includegraphics[width=0.49\linewidth]{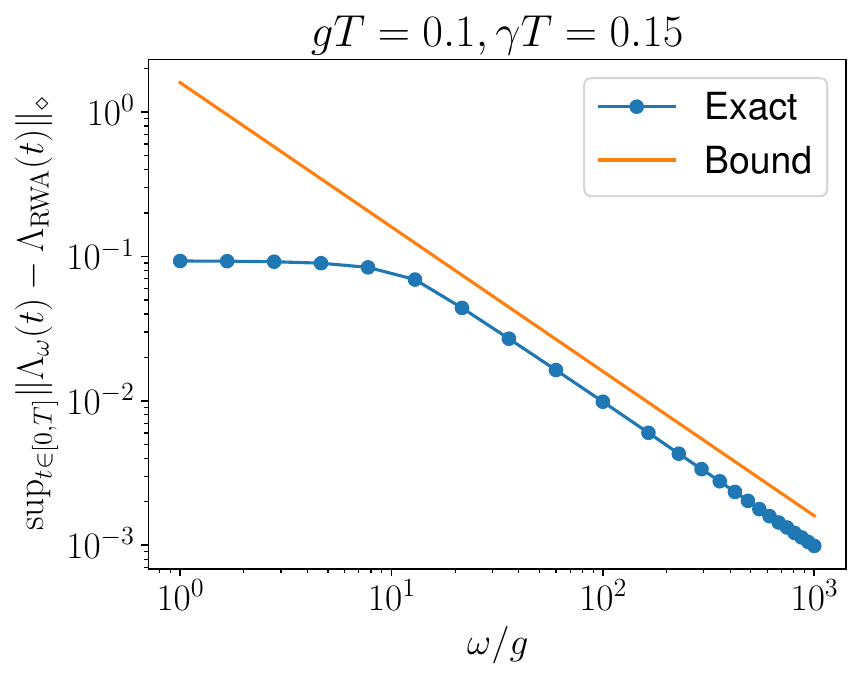}~
    \includegraphics[width=0.49\linewidth]{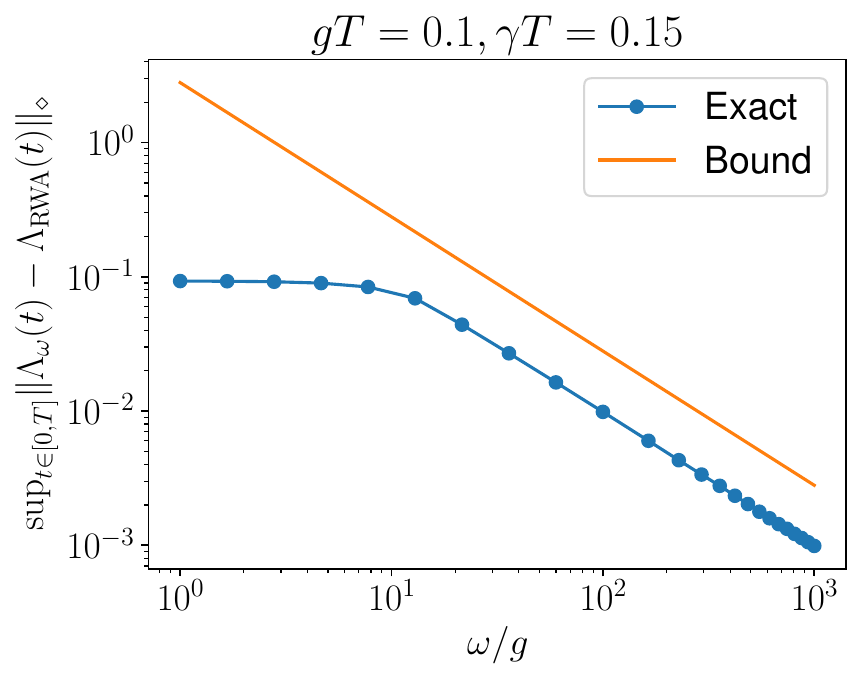}
    \caption{Comparison of the exact diamond distances computed numerically and  the bounds~\eqref{eq:boundEx1uniform} and~\eqref{eq:boundEx2uniform} obtained for Examples~\ref{ex:2} (Left) and~\ref{ex:3} (Right).}
    \label{fig:Examples1-2}
\end{figure}

\begin{example}\label{ex:3}
Let us consider the evolution $\Lambda_{{ \omega}}(t)$ generated by the time-dependent GKLS generator
\begin{equation}
\mathcal{L}_{{ \omega}}(t) \varrho
=
-\rmi
\left[
\frac{1}{2}{ \omega}Z+g\cos({ \omega}t)X,\varrho
\right]
-\frac{1}{2}\gamma(\varrho-X\varrho X).
\label{eqn:Example3}
\end{equation}
In this case, the nonunitary part will be also modified in the high frequency limit.
Let us split this generator as $\mathcal{L}_{{ \omega}}(t)={ \omega}\mathcal{L}_0+\mathcal{D}_{{ \omega}}(t)$, with ${ \omega}\mathcal{L}_0\varrho=-\frac{\rmi}{2}{ \omega}[Z,\varrho]$ and $\mathcal{D}_{{ \omega}}(t) \varrho=-\rmi g\cos({ \omega}t){[X,\varrho]}-\frac{1}{2}\gamma(\varrho-X\varrho X)$.
The generator in the reference frame rotating with $\rme^{{ \omega}t\mathcal{L}_0}$ reads
\begingroup
\allowdisplaybreaks
\begin{align}
\hat{\mathcal{D}}({\omega}t)\varrho
={}&
\rme^{-{ \omega}t\mathcal{L}_0}\mathcal{D}_{{ \omega}}(t)\rme^{{ \omega}t\mathcal{L}_0}\varrho
\nonumber\\
={}&
{-\frac{\rmi}{2}g}
\,\Bigl[
[1+\cos(2{\omega}t)]X-\sin(2{ \omega}t)Y,{}\varrho{}
\Bigr]
\nonumber\\
&{}
-\frac{1}{4}\gamma\,\Bigl(
2{}\varrho{}
-[1+\cos(2{ \omega}t)]
X
{}\varrho{}
X
-[1-\cos(2{ \omega}t)]
Y
{}\varrho{}
Y
\nonumber\\
&\hphantom{{}-\frac{1}{4}\gamma\,\Bigl(2{}\varrho{}-[1+\cos(2{ \omega}t)]X{}\varrho{}X}
{}+\sin(2{ \omega}t)
(
X
{}\varrho{}
Y
+
Y
{}\varrho{}
X
)
\Bigr).
\end{align}
\endgroup
Its long-time average converges as
\begingroup
\allowdisplaybreaks
\begin{align}
\frac{1}{\tau}\int_0^\tau\rmd s\,\hat{\mathcal{D}}(s)\varrho
={}&
{-\frac{\rmi}{2}g}
\left[
\left(1+\frac{\sin(2\tau)}{2\tau}\right)X-\frac{1-\cos(2\tau)}{2\tau}Y
,{}\varrho{}
\right]
\nonumber\\
&{}
-\frac{1}{4}\gamma\,\biggl[
2{}\varrho{}
-
\left(
1+\frac{\sin(2\tau)}{2\tau}
\right)
X
{}\varrho{}
X
-
\left(
1-\frac{\sin(2\tau)}{2\tau}
\right)
Y
{}\varrho{}
Y
\nonumber\\
&\qquad\qquad\qquad\qquad\qquad\qquad\quad\ \ {}
{}
+
\frac{1-\cos(2\tau)}{2\tau}
(
X
{}\varrho{}
Y
+
Y
{}\varrho{}
X
)
\biggr]
\nonumber\\
\to{}&
{-\frac{\rmi}{2}g}
[
X
,{}\varrho{}
]
-\frac{1}{4}\gamma(2{}\varrho{}-X{}\varrho{}X-Y{}\varrho{}Y)
%\nonumber\\
%={}&
=\overline{\mathcal{D}}\varrho,
\vphantom{\frac{\sin2{ \Omega}\tau}{2{\Omega}\tau}}
\end{align}
\endgroup
in the limit $\tau\to+\infty$.
Corollary~\ref{cor:ConstL01} applies, and one gets
\begin{equation}
\Lambda_{{ \omega}}(t)-\rme^{{ \omega}t\mathcal{L}_0}\rme^{t\overline{\mathcal{D}}}\to0,\qquad\mathrm{as}\qquad{ \omega\to+\infty}.
\end{equation}
This means that the evolution $\Lambda_{{ \omega}}(t)$ generated by $\mathcal{L}_{{ \omega}}(t)$ is approximated by the evolution $\Lambda_\mathrm{RWA}(t)$ generated by the effective generator
\begin{align}
\mathcal{L}_\mathrm{RWA}(t)\varrho
={}&{ \omega}\mathcal{L}_0+\rme^{{ \omega}t\mathcal{L}_0}\overline{\mathcal{D}}\rme^{-{ \omega}t\mathcal{L}_0}\varrho
\nonumber\\
={}&
{-\rmi}
\left[
\frac{1}{2}{ \omega}Z
+
\frac{1}{2}g[\cos({ \omega}t)X+\sin({ \omega}t)Y],{}\varrho{}
\right]
-\frac{1}{4}\gamma
(
2{}\varrho{}
-X{}\varrho{}X
-Y{}\varrho{}Y
).
\end{align}
The relevant action is given by
\begin{align}
\mathcal{S}_{{ \omega}}(t)\varrho
={}&\rme^{{ \omega}t\mathcal{L}_0}
\int_0^t\rmd s\,[\hat{\mathcal{D}}({ \omega}s)-\overline{\mathcal{D}}]\varrho
\nonumber\\
={}&
{-\frac{\rmi g}{2{ \omega}}}\sin({ \omega}t)
[
X,
\rme^{-\frac{\rmi}{2}{ \omega}tZ}{}\varrho{}\rme^{\frac{\rmi}{2}{ \omega}tZ}
]
+\frac{\gamma}{4{ \omega}}\sin({ \omega}t)
(
X{}\varrho{}X
-
Y{}\varrho{}Y
),
\end{align}
and we have bounds
\begin{equation}
\|\mathcal{S}_{{ \omega}}(t)\|_\diamond
\le\frac{1}{{ \omega}}\left(
|g|+\frac{1}{2}\gamma
\right),\quad
\|\mathcal{D}_{{ \omega}}(t)\|_\diamond
%=\sqrt{\gamma^2+4g^2\cos^2{ \omega}t}
\le\sqrt{\gamma^2+4g^2},\quad
\|\overline{\mathcal{D}}\|_\diamond
=\frac{1}{4}\,\Bigl(\gamma+\sqrt{9\gamma^2+16g^2}\Bigr),
\end{equation}
using Proposition~\ref{prop:DiamondBoundQubit} in Appendix~\ref{app:DiamondNorm}\@.
Since the generator $\mathcal{L}_{{ \omega}}(t)$ in~(\ref{eqn:Example3}) is a physically valid generator, the evolution $\Lambda_{{ \omega}}(t)$ it generates is CPTP and its diamond norm is $\|\Lambda_{{ \omega}}(t)\|_\diamond=1$.
A bound on the error of the RWA is thus provided by~(\ref{eqn:BoundThm2phys}) of Remark~\ref{rmk:PhysicalBound} as
\begin{equation}
\|
\Lambda_{{ \omega}}(t)
-
\Lambda_\mathrm{RWA}(t)
\|_\diamond
\le
\frac{1}{{ \omega}}\left(
|g|+\frac{1}{2}\gamma
\right)
[
1
+
(
2\gamma+3|g|
)t
].
\label{eq:boundEx2}
\end{equation}
In particular, the uniform bound
\begin{equation}
\sup_{t\in[0,T]}
\|
\Lambda_{{ \omega}}(t)
-
\Lambda_\mathrm{RWA}(t)
\|_\diamond
\le
\frac{1}{{ \omega}}\left(
|g|+\frac{1}{2}\gamma
\right)
[
1
+
(
2\gamma+3|g|
)T
\label{eq:boundEx2uniform}
\end{equation}
holds for compact intervals $[0,T]$. See Figure~\ref{fig:Examples1-2}.
\end{example}

\begin{example}\label{ex:4}
Let us consider a three-level system with three levels $\{\ket{0},\ket{1},\ket{2}\}$, whose evolution $\Lambda_{{ \omega,\kappa}}(t)$ is generated by the time-dependent GKLS generator
\begin{equation}
\mathcal{L}_{{ \omega,\kappa}}(t)
=\mathcal{L}_{0,{ \omega,\kappa}}+\mathcal{D}_{{ \omega}}(t),
\end{equation}
with
\begingroup
\allowdisplaybreaks
\begin{gather}\label{eq:L0Ex3}
\mathcal{L}_{0,{ \omega,\kappa}}\varrho
=-\rmi{ \omega}[H,{}\varrho{}]
-{ \kappa}\,\Bigl(
\ket{2}\bra{2}{}\varrho{}
+
{}\varrho{}\ket{2}\bra{2}
-
2\ket{2}\bra{2}{}\varrho{}\ket{2}\bra{2}
\Bigr),
\quad
H=\ket{1}\bra{1}+2\ket{2}\bra{2}
,\\
\mathcal{D}_{{ \omega}}(t)
=-\rmi \cos(\omega t) (g_1\mathcal{X}_{01}+g_2\mathcal{X}_{12}),
\end{gather}
\endgroup
where  $\mathcal{X}_{ij}\varrho=[X_{ij},{}\varrho{}]$ with $X_{ij}=\ket{i}\bra{j}+\ket{j}\bra{i}$ for $i,j=0,1,2$\@. This describes a three-level system driven under strong dephasing between the two sectors $\{\ket{0},\ket{1}\}$ and $\{\ket{2}\}$.
We analyze the evolution $\Lambda_{{ \omega,\kappa}}(t)$ in the limit ${ \omega,\kappa\to+\infty}$.

In this case, the strong part $\mathcal{L}_{0,{ \omega,\kappa}}$ of the generator generates a nonunitary evolution
\begin{equation}\label{eq:shrinkingFrame}
\rme^{t\mathcal{L}_{0,{ \omega,\kappa}}}
=\rme^{-\rmi{ \omega}t\mathcal{H}}
(
\mathcal{P}_\varphi+\rme^{-{ \kappa}t}\mathcal{Q}_\varphi
),
\end{equation}
with its peripheral and nonperipheral projections given by
\begin{equation}
\begin{cases}
\medskip
\displaystyle
\mathcal{P}_\varphi\varrho=P_1{}\varrho{}P_1+P_2{}\varrho{}P_2,\\
\displaystyle
\mathcal{Q}_\varphi\varrho=P_1{}\varrho{}P_2+P_2{}\varrho{}P_1,
\end{cases}
\qquad
P_1=\ket{0}\bra{0}+\ket{1}\bra{1},
\quad
P_2=\ket{2}\bra{2},
\end{equation}
and the unitary generator $-\rmi\mathcal{H}\varrho=-\rmi[H,{}\varrho{}]$. 
Since $\|\mathcal{Q}_\varphi\|_\diamond=1$ (see Example~\ref{ex:NormQ} in Appendix~\ref{app:DiamondNorm}), we have bounds
\begin{equation}
\|\rme^{t\mathcal{L}_{0,{ \omega,\kappa}}}\mathcal{Q}_\varphi\|_\diamond=\rme^{-{ \kappa}t},\qquad
\int_0^\infty\rmd s\,\|\rme^{s\mathcal{L}_{0,{ \omega,\kappa}}}\mathcal{Q}_\varphi\|_\diamond
=\frac{1}{{ \kappa}},
\end{equation}
on the nonperipheral part of the evolution $\rme^{t\mathcal{L}_{0,{ \omega,\kappa}}}$.

Let us look at the generator in the reference frame evolving with $\rme^{t\mathcal{L}_{0,{ \omega,\kappa}}}$. Using~\eqref{eq:shrinkingFrame} and Proposition~\ref{prop:blockdiagsuper}, one gets
\begin{align}
\rme^{-t\mathcal{L}_{0,{ \omega,\kappa}}}
\mathcal{D}_{{ \omega}}(t)
\rme^{t\mathcal{L}_{0,{ \omega,\kappa}}}
={}&{-\frac{\rmi}{2}}
%\rme^{-\rmi{ \omega}t\mathcal{H}}
%\,\Bigl(&
g_1
\,\Bigl(
[1+\cos(2{ \omega}t)]\mathcal{X}_{01}
-
\sin(2{ \omega}t)\mathcal{Y}_{01}
\Bigr)
\nonumber\\
%%%%%%%%%%%%%%%%%%%%%%%%%%%%%%%%%%%%%%%%%%%%%%%%%
&{}
-\frac{\rmi}{2}g_2
(
\rme^{-{ \kappa}t}
\mathcal{P}_\varphi
+
\rme^{{ \kappa}t}
\mathcal{Q}_\varphi
)
\,\Bigl(
[1+\cos(2{ \omega}t)]\mathcal{X}_{12}
-
\sin(2{ \omega}t)\mathcal{Y}_{12}
\Bigr),
\end{align}
where  $\mathcal{Y}_{ij}\varrho=[Y_{ij},{}\varrho{}]$ with $Y_{ij}=\rmi\ket{i}\bra{j}-\rmi\ket{j}\bra{i}$ for $i,j=0,1,2$.

Notice that this generator in the reference frame is unbounded for ${ \kappa\to+\infty}$. 
On the other hand, the integral action $\hat{\mathcal{S}}_{ \omega,\kappa}(t)$ defined as~(\ref{eqn:S12}) is bounded uniformly for ${ \omega},{ \kappa}>0$ and $t\ge0$, and
\begin{align}
\hat{\mathcal{S}}_{{ \omega,\kappa}}(t)
={}&
{-\frac{\rmi}{2}}
\rme^{-\rmi{ \omega}t\mathcal{H}}
\,\Biggl\{
g_1t
(
\mathcal{P}_\varphi
+
\rme^{-{ \kappa}t}
\mathcal{Q}_\varphi
)
\mathcal{X}_{01}
\nonumber\\
%%%%%%%%%%%%%%%%%%%%%%%%%%%%%%%%%%%%%%%%%%%%%%%%%
&\hphantom{{-\frac{\rmi}{2}}\rme^{-\rmi{ \omega}t\mathcal{H}}\,\Biggl\{}
{}
+
g_2
\frac{1-\rme^{-{ \kappa}t}}{{ \kappa}}
\left[
\left(
1+\frac{{ \kappa}^2}{{ \kappa}^2+4{ \omega}^2}
\right)
\mathcal{X}_{12}
-
\frac{2{ \kappa\omega}}{{ \kappa}^2+4{ \omega}^2}
(
\mathcal{P}_\varphi
-
\mathcal{Q}_\varphi
)
\mathcal{Y}_{12}
\right]
\Biggr\}
\nonumber\\
%%%%%%%%%%%%%%%%%%%%%%%%%%%%%%%%%%%%%%%%%%%%%%%%%
&{}
-\frac{\rmi}{2}
\frac{
\sin({ \omega}t)
}{{ \omega}}
\,\Biggl[
g_1
(
\mathcal{P}_\varphi
+
\rme^{-{ \kappa}t}
\mathcal{Q}_\varphi
)
\mathcal{X}_{01}
+
g_2
\,\Biggl(
\frac{4{ \omega}^2}{{ \kappa}^2+4{ \omega}^2}
(
\rme^{-{ \kappa}t}
\mathcal{P}_\varphi
+
\mathcal{Q}_\varphi
)
\mathcal{X}_{12}
\nonumber\\
%%%%%%%%%%%%%%%%%%%%%%%%%%%%%%%%%%%%%%%%%%%%%%%%%
&\hphantom{{}
-\frac{\rmi}{2}
\frac{
\sin{ \omega}t
}{{ \omega}}
\,\Biggl[
g_1
(
\mathcal{P}_\varphi
+
\rme^{-{ \kappa}t}
\mathcal{Q}_\varphi
)
\mathcal{X}_{01}
+
g_2
\,\Biggl(
}
{}+
\frac{2{ \kappa\omega}}{{ \kappa}^2+4{ \omega}^2}
(
\rme^{-{ \kappa}t}
\mathcal{P}_\varphi
-
\mathcal{Q}_\varphi
)
\mathcal{Y}_{12}
\Biggr)
\Biggr]\,
\rme^{-\rmi{ \omega}t\mathcal{H}}
\nonumber\\
\to{}&
{-\frac{\rmi}{2}}
g_1t
\rme^{-\rmi{ \omega}t\mathcal{H}}
\mathcal{P}_\varphi
\mathcal{X}_{01},
%%%%%%%%%%%%%%%%%%%%%%%%%%%%%%%%%%%%%%%%%%%%%%%%%
\end{align}
as ${\omega,\kappa\to+\infty}$.
Then, by choosing 
\begin{equation}\label{eq:DbarEx3}
\overline{\mathcal{D}}
=-\frac{\rmi}{2}g_1\mathcal{P}_\varphi\mathcal{X}_{01}
=-\frac{\rmi}{2}g_1\mathcal{X}_{01}\mathcal{P}_\varphi,
\end{equation}
the peripheral part of the action defined as~(\ref{eqn:CondThm}) reads
\begin{align}
\mathcal{S}_{{ \omega,\kappa},\varphi}(t)
={}&
\rme^{t\mathcal{L}_{0,{ \omega,\kappa}}}
\int_0^t\rmd s\,
[
\rme^{-s\mathcal{L}_{0,{ \omega,\kappa}}}
\mathcal{P}_\varphi
\mathcal{D}_{{ \omega}}(s)
\mathcal{P}_\varphi
\rme^{s\mathcal{L}_{0,{ \omega,\kappa}}}
-
\overline{\mathcal{D}}
]
\nonumber\\
={}&
{-\frac{\rmi}{2}}
g_1
\frac{
\sin({ \omega}t)
}{{ \omega}}
\mathcal{P}_\varphi
\mathcal{X}_{01}
\rme^{-\rmi{ \omega}t\mathcal{H}},
%%%%%%%%%%%%%%%%%%%%%%%%%%%%%%%%%%%%%%%%%%%%%%%%%
\end{align}
and we have bounds
\begin{equation}
\|\mathcal{S}_{{ \omega,\kappa},\varphi}(t)\|_\diamond
%=\frac{|g_1|}{{ \omega}}|{\sin{ \omega}t}|
\le\frac{|g_1|}{{ \omega}},\qquad
\|\mathcal{D}_{{ \omega}}(t)\|_\diamond
%=2\sqrt{g_1^2+g_2^2}\cos{ \omega}t
\le2\sqrt{g_1^2+g_2^2}
,\qquad
\|\overline{\mathcal{D}}\|_\diamond
=|g_1|.
\end{equation}
Using~\eqref{eqn:BoundThm2phys}, we can get the uniform bound on $[0,T]$,
\begin{equation}\label{eq:boundEx3uniform}
    \sup_{t\in[0,T]}\|\Lambda_{\omega,\kappa}(t)-\Lambda_{\mathrm{RWA}}(t)\|_\diamond\leq \left(\frac{|g_1|}{\omega}+\frac{4}{\kappa}\sqrt{g_1^2+g_2^2}\right)\left[
1
+
T\left(
|g_1|+2\sqrt{g_1^2+g_2^2}
\right)
\right],
\end{equation}
where the effective evolution $\Lambda_\mathrm{RWA}(t)$ in the RWA is generated by
\begin{equation}\label{eq:LRWAEx3}
\mathcal{L}_\mathrm{RWA}(t)
\mathcal{L}_{0,{ \omega,\kappa}}
-\frac{\rmi}{2}g_1[\cos({ \omega}t)\mathcal{X}_{01}+\sin({ \omega}t)\mathcal{Y}_{01}].
\end{equation}

The dissipation in \eqref{eq:L0Ex3} gives rise to an exponential decay in $\mathcal{Q}_\varphi$. If one is not interested in this transient decay, the distance between the true evolution and the approximated evolution restricted to the peripheral subspace $\mathcal{P}_\varphi$ can be considered instead. In this situation, we can use the bound~(\ref{eqn:BoundThm2PPhys}), which provides 
\begin{align}
&\|
\Lambda_{{ \omega,\kappa}}(t)
-
\Lambda_\mathrm{RWA}(t)
\mathcal{P}_\varphi
\|_\diamond
\nonumber\\
&\qquad
\le
{ 
\left(
\frac{|g_1|}{{ \omega}}
+\frac{2}{{ \kappa}}
\sqrt{g_1^2+g_2^2}
\right)\,}
\Bigl[
1
+
t\,\Bigl(
|g_1|+2\sqrt{g_1^2+g_2^2}
\Bigr)
\Bigr]
+
\frac{2}{{ \kappa}}
\sqrt{g_1^2+g_2^2}
+\rme^{-{ \kappa}t}.
\label{eq:boundEx3}
\end{align}
Note that this bound cannot be used uniformly in $[0,T]$ (which would include the transient regime). However, in this case, the error can be controlled uniformly in $[\tau,T]$ for $0<\tau<T$, as $\kappa\rightarrow\infty$,
\begin{align}
&\sup_{t\in[\tau,T]}\|
\Lambda_{{ \omega,\kappa}}(t)
-
\Lambda_\mathrm{RWA}(t)
\mathcal{P}_\varphi
\|_\diamond
\nonumber\\
&\qquad
\le
{ 
\left(
\frac{|g_1|}{{ \omega}}
+\frac{2}{{ \kappa}}
\sqrt{g_1^2+g_2^2}
\right)\,}
\Bigl[
1
+T\,\Bigl(
|g_1|+2\sqrt{g_1^2+g_2^2}
\Bigr)
\Bigr]
+
\frac{2}{{ \kappa}}
\sqrt{g_1^2+g_2^2}
+\rme^{-{ \kappa}\tau}.
\label{eq:boundEx3Puniform}
\end{align}
See Fig.~\ref{fig:Example3}\@.
\end{example}
\begin{figure}
    \centering
    \includegraphics[width=\linewidth]{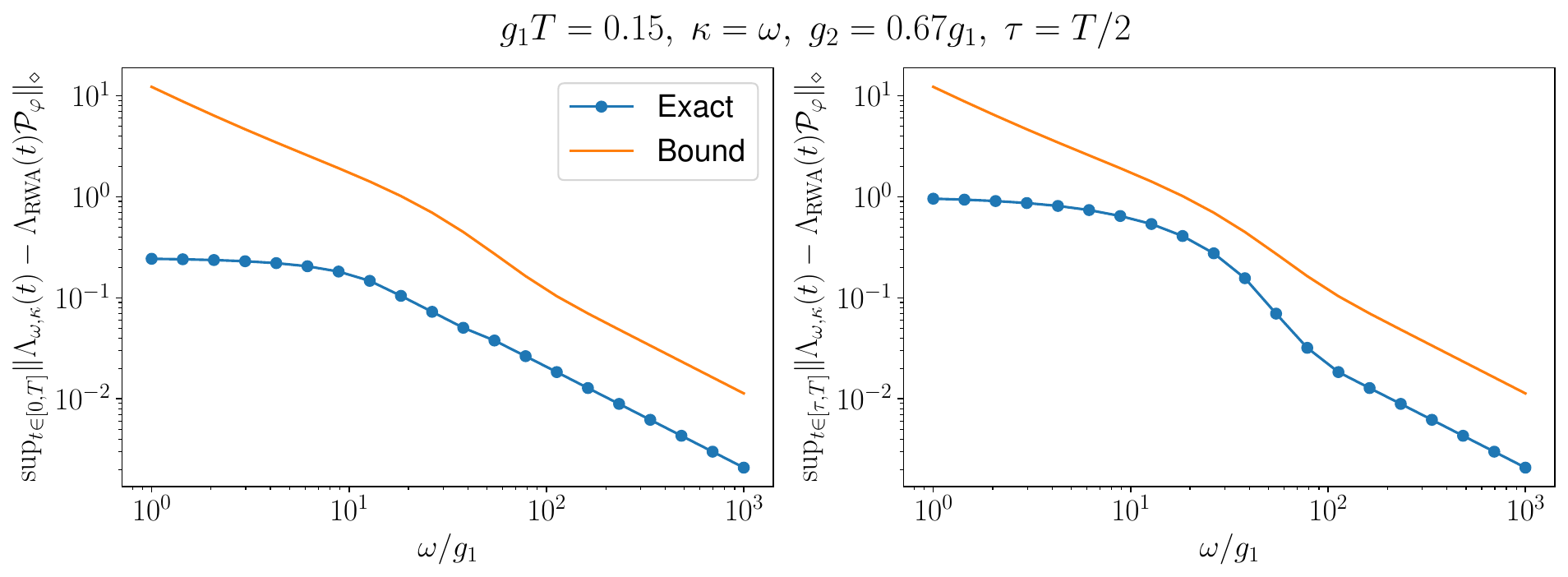}
    \caption{Comparison of the exact diamond distances computed numerically and the corresponding bounds. In the left panel the distance~\eqref{eq:boundEx3uniform} between the true evolution and the approximate one  projected on the peripheral subspace is shown, while in the right panel the distance~\eqref{eq:boundEx3Puniform} to the approximate evolution projected on the peripheral subspace is shown.}
    \label{fig:Example3}
\end{figure}

\section{Secular Approximation in the Redfield Equation}
\label{sec:SecApp}
The Redfield equation was first derived as an equation of motion for the density matrix of a system in contact with a thermal bath, in the context of  nuclear magnetic resonance~\cite{redfield1957theory}. It also appears as an intermediate result in the microscopic derivation of the GKLS master equation from the total Hamiltonian $H=H_S+H_B+\lambda H_I$ of the system and its environment in the so-called weak-coupling limit~\cite{vanKampen,breuer2002theory,de2017dynamics}. In particular, after the Born-Markov approximation, the evolution operator is generated by
	the Redfield generator~\cite{redfield1957theory,vanKampen,chruscinski2022dynamical}% can be written as
    \footnote{Here $\kappa=\lambda^{-2}\gg 1$, where $\lambda$ is the (small) interaction strength, and we are considering the generator of the dynamics in the rescaled time $t=\tau/\lambda^2$, with $\tau$ the physical time.}
	\begin{equation}\label{eq:Redfield}
		\mathcal{L}_{\kappa}=\kappa \mathcal{L}_0+ \mathcal{D}, \quad \mathcal{L}_0\varrho= -\mathrm{i}[H,{}\varrho{}], \quad \mathcal{D}\varrho=\sum_{\alpha}[S_\alpha,{}\varrho E_\alpha^\dagger-E_\alpha\varrho{}],
	\end{equation}
	where $H=H^\dagger$, $S_\alpha=S_\alpha^\dagger$ for all $\alpha$, and $E_\alpha$ are related to $S_\alpha$ via the correlation functions $C_{\alpha\beta}(\tau)$ of the environment as
	\begin{equation}\label{eq:Ealpha}
		E_\alpha = \sum_{\beta} \int_0^\infty \dd \tau\,C_{\alpha\beta}(\tau)
        \ee^{\tau\mathcal{L}_0} S_\beta 
        = \sum_{\beta} \int_0^\infty \dd \tau\,C_{\alpha\beta}(\tau)\ee^{-\rmi\tau H} S_\beta \ee^{\rmi\tau H}.
	\end{equation} 
    Here $\kappa=\lambda^{-2}\gg 1$, where $\lambda$ is the (small) system-environment interaction strength, and we are considering the generator of the dynamics in the rescaled (macroscopic) time $\tau=\lambda^2 t$, with $t$ being the microscopic time~\cite{vH1955,vanhove99}.

    Notice that in general the operator $\mathcal{L}_\kappa$ is not in the GKLS form, and thus does not generate a \textit{bona fide} CPTP evolution. The secular approximation consists in decomposing the operators $S_\alpha$ and $E_\alpha$ appearing in the dissipator $\mathcal{D}$ according to the spectral projections of $\mathcal{L}_0$, and keeping only the diagonal terms. More concretely, let us write the spectral representation of $\mathcal{L}_0$ in terms of the spectral decomposition of
    \begin{equation}
		H=\sum_m\varepsilon_mP_m
	\end{equation} 
    as
    \begin{equation}\label{eq:specL0sec}
	\mathcal{L}_0=-\rmi\sum_j\omega_j\mathcal{P}_j, \qquad \mathcal{P}_j\varrho=\sum_{m,n}\delta_{\omega_j,\varepsilon_m-\varepsilon_n}P_{m}\varrho P_{n}.
	\end{equation}
    We introduce the notation $A(\omega_k)\equiv\mathcal{P}_j(A)$, which will be useful in the following. Using~\eqref{eq:specL0sec} in~\eqref{eq:Ealpha}, we have
	\begin{equation}\label{eq:EalphaOmega}
		E_{\alpha}=\sum_{\beta,k}\Gamma_{\alpha\beta}(\omega_k)S_{\beta}(\omega_k), \qquad \Gamma_{\alpha\beta}(\omega_k):=\int_0^\infty \dd \tau\, C_{\alpha\beta}(\tau)\ee^{-\ii \omega_{k}\tau}.
	\end{equation}
    The secular approximation consists in replacing the Redfield generator \eqref{eq:Redfield} with 
    \begin{equation}
        \mathcal{L}_{\mathrm{sec}} = \kappa\mathcal{L}_0+\mathcal{D}_{\mathrm{sec}},
    \end{equation}
    where
    \begin{equation}\label{eq:GKLS}
        \mathcal{D}_\mathrm{sec}\varrho
		=-\ii[H_{LS},{}\varrho{}]
		-\frac{1}{2}\sum_{j}\sum_{\alpha,\beta}\gamma_{\alpha\beta}(\omega_j)\,\Bigl(
		\{S_\alpha^\dagger(\omega_j)S_\beta(\omega_j),{}\varrho{}\}
		-2S_\beta(\omega_j)\varrho S_\alpha^\dagger(\omega_j)
		\Bigr),
	\end{equation}
    \begin{equation}
		H_{LS}=\sum_{j} \sum_{\alpha,\beta}\sigma_{\alpha\beta}(\omega_j)S_{\alpha}^\dagger(\omega_j)S_\beta(\omega_j),
	\end{equation}
    and 
    \begin{equation}\label{eq:gamma_sigma}
    \gamma_{\alpha\beta}(\omega_j)= \Gamma_{\alpha\beta}(\omega_j)+\Gamma_{\beta\alpha}^{*}(\omega_j), \qquad \sigma(\omega_j)_{\alpha\beta}=\frac{1}{2\rmi}[\Gamma_{\alpha\beta}(\omega_j)-\Gamma_{\beta\alpha}^*(\omega_j)].
\end{equation}
    $H_{LS}$ is the so-called Lamb-shift contribution. 
    
    We will now show that the approximation of the Redfield generator~\eqref{eq:Redfield} with the GKLS generator~\eqref{eq:GKLS} can be justified in our framework as an application of Corollary~\ref{cor:StrongLimit}, and as a consequence, we are also able to bound the error in the approximation by using~\eqref{eq:StrongCouplingBound}. 
    First, note that in this case $\mathcal{L}_0$ is completely peripheral, so $R/\eta=0$. Then,  $\mathcal{D}_Z$ in \eqref{eq:DZ} specializes to
    \begin{align}
        \notag\mathcal{D}_Z\varrho&=\sum_{\omega_j}\mathcal{P}_j\mathcal{D}\mathcal{P}_j\varrho\\
        &=\sum_{\alpha}\sum_{\omega_j}\sum_{m,n}\sum_{\ell,k} \delta_{\omega_j,\varepsilon_{m}-\varepsilon_{n}} \delta_{\omega_j,\varepsilon_{\ell}-\varepsilon_{k}}P_\ell[S_\alpha,{}    P_{m}\varrho P_n E_\alpha^\dagger-E_\alpha    P_{m}\varrho P_n{}]P_k.
        \label{eq:bigCommutator}
    \end{align}
    The terms in \eqref{eq:bigCommutator} arising from the commutator, where $S_\alpha$ and $E_\alpha$ are on different sides of $\varrho$, can be simplified as
    \begin{align}\notag
    &\sum_{\alpha}\sum_{\omega_j}\sum_{m,n}\sum_{\ell,k} \delta_{\omega_j,\varepsilon_{m}-\varepsilon_{n}} \delta_{\omega_j,\varepsilon_{\ell}-\varepsilon_{k}}
    (P_\ell S_\alpha     P_{m}\varrho P_n E_\alpha^\dagger P_k+ P_\ell E_\alpha     P_{m}\varrho P_n S_\alpha P_k)\\
    &\qquad =\sum_{\alpha}\sum_{\omega_j}[S_\alpha(\omega_j)\varrho E_\alpha^\dagger(-\omega_j)+E_\alpha(\omega_j)\varrho S_\alpha(-\omega_j)],
    \label{eq:DZsec1}
    \end{align}
    where we used \eqref{eq:specL0sec} and the fact that the double constraint forces $\varepsilon_\ell-\varepsilon_m=\varepsilon_k-\varepsilon_n$.
    The terms of \eqref{eq:bigCommutator}, where $S_\alpha$ and $E_\alpha$ are both on one side of $\varrho$, lead to
     \begin{align}\notag
    &-\sum_{\alpha}\sum_{\omega_j}\sum_{m,n}\sum_{\ell,k} \delta_{\omega_j,\varepsilon_{m}-\varepsilon_{n}} \delta_{\omega_j,\varepsilon_{\ell}-\varepsilon_{k}}
    (P_\ell S_\alpha E_\alpha    P_{m}\varrho P_n P_k+P_\ell P_m\varrho P_nE_\alpha^\dagger S_\alpha P_k)\\
    \notag
    &\qquad =-\sum_{\alpha}\sum_{\omega_j}\sum_{\ell,m,n}\delta_{\omega_j,\varepsilon_{m}-\varepsilon_{n}} \delta_{\omega_j,\varepsilon_{\ell}-\varepsilon_{n}}
    (P_\ell S_\alpha E_\alpha    P_{m}\varrho P_n+
    P_n \varrho P_\ell E_\alpha^\dagger S_\alpha P_m)\\\notag 
    &\qquad =-\sum_{\alpha}\sum_{m,n}
    (P_m S_\alpha E_\alpha    P_{m}\varrho P_n+
    P_n \varrho P_m E_\alpha^\dagger S_\alpha P_m)
    \\\notag
    &\qquad =-\sum_{\alpha}\sum_{m}
    (P_m S_\alpha E_\alpha    P_{m}\varrho+
   \varrho P_m E_\alpha^\dagger S_\alpha P_m)
   \\
    &\qquad =-\sum_{\alpha}\sum_{\omega_j}
    [S_\alpha(\omega_j) E_\alpha(-\omega_j)    \varrho+
   \varrho  E_\alpha^\dagger(\omega_j) S_\alpha(-\omega_j)].
    \label{eq:DZsec2}
    \end{align}
    Collecting \eqref{eq:DZsec1} and \eqref{eq:DZsec2}, and using \eqref{eq:EalphaOmega}, equation~\eqref{eq:bigCommutator} translates into
 
	\begingroup
    \allowdisplaybreaks
    \begin{align}\notag
    \mathcal{D}_Z\varrho&=\sum_{j}\sum_{\alpha,\beta}\Bigl(
		\Gamma_{\alpha\beta}^*(\omega_j)[S_\alpha(\omega_j)\varrho S_\beta^\dagger(\omega_j)-{}\varrho S_\beta^{\dagger}(\omega_j)S_\alpha(\omega_j)]\nonumber\\
		&\hphantom{{}=\sum_{j}\sum_{\alpha,\beta}\Bigl(}
		{}+\Gamma_{\alpha\beta}(-\omega_j)[S_{\beta}(-\omega_j)\varrho S_\alpha(\omega_j)-S_\alpha(\omega_j)S_\beta(-\omega_j)\varrho{}]
		\Bigr)\nonumber\\
		&=\sum_{j}\sum_{\alpha,\beta}\Bigl(
		\Gamma_{\beta\alpha}^*(\omega_j)[S_\beta(\omega_j)\varrho S_\alpha^\dagger(\omega_j)-{}\varrho S_\alpha^{\dagger}(\omega_j)S_\beta(\omega_j)]\nonumber\\
		&\hphantom{{}=\sum_{j}\sum_{\alpha,\beta}\Bigl(}
		{}+\Gamma_{\alpha\beta}(\omega_j)[S_{\beta}(\omega_j)\varrho S_\alpha^\dagger(\omega_j)-S_\alpha^\dagger(\omega_j)S_\beta(\omega_j)\varrho{}]\Bigr).
		\label{eq:aveDsecular}
	\end{align}	
	\endgroup
    The last equality follows from the fact that, for each $\omega_j$, $-\omega_j$ is also in the spectrum of $\mathcal{L}_0$, and $S_\alpha(-\omega_j)=S_\alpha^\dagger(\omega_j)$. Finally, using the definitions in~\eqref{eq:gamma_sigma}, it is immediate to verify that $\mathcal{D}_Z=\mathcal{D}_\mathrm{sec}$ from \eqref{eq:GKLS}.

	The error in the approximation can then be bounded uniformly for $t\in[0,T]$ using \eqref{eq:StrongCouplingBound} with $R/\eta=0$, and one has
	\begin{equation}
		\|\rme^{t(\kappa \mathcal{L}_0+\mathcal{D})}-\rme^{t\mathcal{L}_{\mathrm{sec}}}\|\leq \frac{2m(m-1)P^2\|\mathcal{D}\|}{\kappa \Delta}\rme^{T\|\mathcal{D}_Z\|}\,\Bigl(1+T\rme^{T\|\mathcal{D}\|}(\|\mathcal{D}\|+\|\mathcal{D}_Z\|)\Bigr).
	\end{equation}

\section{Conclusions}
We have developed a theoretical framework for the derivation of nonperturbative error bounds in the approximation of the evolution operator in open quantum systems. The framework can be easily adapted to concrete situations and allows one to control the error by bounding the integral action of the difference between the generators of the exact and approximated dynamics in a suitable reference frame. We applied the framework to justify the RWA, showcasing some concrete qubit and qutrit examples. The same reference frame used to compute the action allows establishes if the noise should be changed in the approximation. In particular, if the generator of the rotating frame commutes with the noise, this is not affected by the approximation, as shown in  Example~\ref{ex:2}. If, on the contrary, they do not commute, the noise should be substituted by its long-time averaged version in the rotating-frame, as showcased in  Example~\ref{ex:3}. 
The framework developed is also suitable to deal with situations where the strong part of the dynamics giving rise to the highly oscillatory terms contains also dissipation, such as in Example~\ref{ex:4}\@. Finally, we showed that the bounds obtained allow also for a rigorous error control in the secular approximation, when the Redfield generator is replaced with a GKLS generator.

Possible future directions of this work include the extension of the framework to the infinite-dimensional case and to consider possibly unbounded operators. Such an extension would be useful for example in the derivation of error bounds in the secular approximation before tracing out the environment~\cite{fleming2010rotating}. An additional potential extension would be to consider iterated integration-by-parts techniques to improve on the long-time validity of the bound, in the same vein as Ref.~\cite{dey2025error}.

\section*{Acknowledgments}
DB acknowledges discussions with Robin Hillier.  
It was also supported in part by the Top Global University Project from the Ministry of Education, Culture, Sports, Science and Technology (MEXT), Japan.
KY was supported by JSPS KAKENHI Grant No.~JP24K06904 from the Japan Society for the Promotion of Science (JSPS)\@.
GG acknowledges financial support from PNRR MUR Project PE0000023-NQSTI\@.
GG and PF acknowledge support from INFN through the project ``QUANTUM'', from the Italian National Group of Mathematical Physics (GNFM-INdAM), and from the Italian funding within the ``Budget MUR - Dipartimenti di Eccellenza 2023--2027''  - Quantum Sensing and Modelling for One-Health (QuaSiModO).

\appendix
\section{Diamond Norm}
\label{app:DiamondNorm}
\begin{definition}[Diamond norm~\cite{watrous2018theory}]
The diamond norm (completely bounded trace norm) $\|\Phi\|_\diamond$ of a linear map $\Phi$ on operators is defined by
\begin{equation}
\|\Phi\|_\diamond=\|\Phi\otimes1\|_1=\sup_{\|A\|_1=1}\|(\Phi\otimes1)(A)\|_1,
\end{equation}
where $\|A\|_1$ is the trace norm for operators, and the identity $1$ of the extended map $\Phi\otimes1$ acts on the space of the same dimension as the one on which the map $\Phi$ acts.
\end{definition}
\begin{definition}[Choi operator~\cite{watrous2018theory}]
Let $\Phi$ be a linear map on $B(\mathbb{C}^d)$, the Banach space of operators on $\mathbb{C}^d$, and take an orthonormal basis $\{\ket{e_i}\}_{i=1,\ldots,d}$. Then, the Choi representation (Choi operator) of the map $\Phi$ is defined by
\begin{equation}
C(\Phi)=\frac{1}{d}(\Phi\otimes1)(|\openone)(\openone|),
\end{equation}
where
\begin{equation}
|\openone)=\sum_{i=1}^d\ket{e_i}\otimes\ket{e_i}.
\label{eqn:vecOne}
\end{equation}
\end{definition}
\begin{prop}[Choi matrix~\cite{havel2003robust,wood2011tensor}]\label{prop:ChoiMatrix}
Let $\Phi$ be a linear map on $B(\mathbb{C}^d)$, and consider its expansion
\begin{equation}
\Phi(A)=\sum_{\mu,\nu=0}^{d^2-1}
c_{\mu\nu}F_\mu AF_\nu^\dag,
\label{eqn:MapExpansion}
\end{equation}
in terms of an orthonormal basis of operators $\{F_\mu\}_{\mu=0,1,\ldots,d^2-1}$ satisfying
\begin{equation}
\Tr(F_\mu^\dag F_\nu)=\delta_{\mu\nu}\qquad(\mu,\nu=0,1,\ldots d^2-1).
\end{equation}
Then, the $d^2\times d^2$ matrix $c=(c_{\mu\nu})$ of the coefficients of the expansion of $\Phi$ in~(\ref{eqn:MapExpansion}) is a matrix representation of the operator $dC(\Phi)$, i.e., of the Choi operator $C(\Phi)$ of $\Phi$ multiplied by $d$.
\end{prop}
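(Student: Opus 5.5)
The plan is to compute the Choi operator of $\Phi$ directly from the expansion~\eqref{eqn:MapExpansion}, and then read off its matrix elements in a suitably chosen orthonormal basis of $\mathbb{C}^d\otimes\mathbb{C}^d$.

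First, we use the vectorization isomorphism $A\mapsto|A)=(A\otimes1)|\openone)=\sum_{i}A\ket{e_i}\otimes\ket{e_i}$, with $|\openone)$ as in~\eqref{eqn:vecOne}. This is an isometry from $B(\mathbb{C}^d)$ with the Hilbert--Schmidt inner product onto $\mathbb{C}^d\otimes\mathbb{C}^d$. Indeed, a direct computation gives
\[
(A|B)=\sum_{i,j}\bra{e_i}A^\dag B\ket{e_j}\langle e_i|e_j\rangle=\Tr(A^\dag B).
\]
It follows that the orthonormality $\Tr(F_\mu^\dag F_\nu)=\delta_{\mu\nu}$ makes $\{|F_\mu)\}_{\mu=0,\ldots,d^2-1}$ an orthonormal basis of $\mathbb{C}^d\otimes\mathbb{C}^d$.

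Next, we insert the expansion into the definition of the Choi operator. We apply $\Phi\otimes1$ to $|\openone)(\openone|=\sum_{i,j}\ket{e_i}\bra{e_j}\otimes\ket{e_i}\bra{e_j}$ term by term:
\[
dC(\Phi)=\sum_{\mu,\nu}c_{\mu\nu}(F_\mu\otimes1)|\openone)(\openone|(F_\nu^\dag\otimes1)=\sum_{\mu,\nu}c_{\mu\nu}|F_\mu)(F_\nu|.
\]
This uses only linearity and the identity $(F\otimes1)|\openone)=|F)$. Taking matrix elements in the orthonormal basis $\{|F_\mu)\}$ then gives $(F_\mu|dC(\Phi)|F_\nu)=c_{\mu\nu}$. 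In other words, $c$ is the matrix of $dC(\Phi)$ in this basis, which is the claim.

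No step is a genuine obstacle. The only points to check are the isometry property $(A|B)=\Tr(A^\dag B)$ and the correct placement of adjoints in $(\openone|(F_\nu^\dag\otimes1)=(F_\nu|$. For the latter, note that $(F_\nu|$ is the adjoint of $|F_\nu)=(F_\nu\otimes1)|\openone)$, so $(F_\nu|=(\openone|(F_\nu^\dag\otimes1)$.
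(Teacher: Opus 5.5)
Your proposal is correct and follows essentially the same route as the paper's proof. The paper also vectorizes via $|F_\mu)=(F_\mu\otimes\openone)|\openone)$, uses $(F_\mu|F_\nu)=\Tr(F_\mu^\dag F_\nu)=\delta_{\mu\nu}$ to get an orthonormal basis, writes $C(\Phi)=\frac{1}{d}\sum_{\mu,\nu}c_{\mu\nu}|F_\mu)(F_\nu|$, and reads off $(F_\mu|C(\Phi)|F_\nu)=c_{\mu\nu}/d$; you additionally spell out the term-by-term application of $\Phi\otimes1$ and the placement of adjoints, which the paper leaves implicit.
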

\begin{proof}
Notice first that the vectors
\begin{equation}
|F_\mu)=(F_\mu\otimes\openone)|\openone)\qquad(\mu=0,1,\ldots,d^2-1)
\end{equation}
defined on the vector $|\openone)$ in~(\ref{eqn:vecOne}) form a complete set of orthonormal vectors satisfying
\begin{equation}
(F_\mu|F_\nu)
=(\openone|F_\mu^\dag F_\nu\otimes\openone|\openone)
=\Tr(F_\mu^\dag F_\nu)
=\delta_{\mu\nu}\qquad(\mu,\nu=0,1,\ldots,d^2-1).
\end{equation}
Notice also that the Choi operator $C(\Phi)$ of the map $\Phi$ is given by
\begin{equation}
C(\Phi)
=\frac{1}{d}\sum_{\mu,\nu=0}^{d^2-1}c_{\mu\nu}|F_\mu)(F_\nu|.
\end{equation}
Then, the matrix representation of the Choi operator $C(\Phi)$ on the basis $\{|F_\mu)\}_{\mu=0,1,\ldots,d^2-1}$ is given by
\begin{equation}
(F_\mu|C(\Phi)|F_\nu)
=\frac{1}{d}c_{\mu\nu}\qquad(\mu,\nu=0,1,\ldots,d^2-1).
\end{equation}
\end{proof}

\begin{prop}[The diamond norm and Choi operator~\cite{watrous2018theory,hahn2022unification}]\label{prop:DiamondBound}
Let $\Phi$ be a linear map on $B(\mathbb{C}^d)$ and let $C(\Phi)$ be its Choi operator.
Then,
\begin{equation}
\|C(\Phi)\|_1\le\|\Phi\|_\diamond\le d\|C(\Phi)\|_1.
\label{eqn:DiamondChoiBounds}
\end{equation}
\end{prop}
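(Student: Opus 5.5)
Both inequalities in \eqref{eqn:DiamondChoiBounds} come from testing $\Phi\otimes1$ against suitable inputs and using the identity $(A\otimes\openone)|\openone)=(\openone\otimes A^{\T})|\openone)$. The lower bound uses the maximally entangled state directly; the upper bound moves an arbitrary input onto the Choi operator.

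\emph{Lower bound.} Take $A=\frac1d|\openone)(\openone|$, which is a density operator since $(\openone|\openone)=d$, so $\|A\|_1=1$. By the definition of the Choi operator,
\begin{equation*}
(\Phi\otimes1)(A)=C(\Phi).
\end{equation*}
Because $\|\Phi\|_\diamond$ is the supremum of $\|(\Phi\otimes1)(A)\|_1$ over inputs of unit trace norm, this gives $\|C(\Phi)\|_1\le\|\Phi\|_\diamond$.

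\emph{Upper bound, reduction to rank one.} The map $A\mapsto\|(\Phi\otimes1)(A)\|_1$ is convex. The unit trace-norm ball is the convex hull of the rank-one operators $\ket{u}\bra{v}$ with unit vectors $u,v\in\mathbb{C}^d\otimes\mathbb{C}^d$, as the singular value decomposition shows. So the supremum is attained on such operators.

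\emph{Upper bound, pulling onto the Choi operator.} Identify $u$ with a $d\times d$ matrix $U$ through $\ket{u}=(\openone\otimes U)|\openone)$. Then
\begin{equation*}
\|U\|_2=\|u\|=1,\qquad \|U\|_\infty\le\|U\|_2=1.
\end{equation*}
Define $V$ from $v$ in the same way. Since $\Phi\otimes1$ acts trivially on the second factor,
\begin{equation*}
(\Phi\otimes1)(\ket{u}\bra{v})=(\openone\otimes U)\,(\Phi\otimes1)\bigl(|\openone)(\openone|\bigr)\,(\openone\otimes V^\dagger)=d\,(\openone\otimes U)\,C(\Phi)\,(\openone\otimes V^\dagger).
\end{equation*}

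\emph{Upper bound, norm estimate.} Apply Hölder's inequality:
\begin{equation*}
\|XCY\|_1\le\|X\|_\infty\|C\|_1\|Y\|_\infty .
\end{equation*}
With $\|\openone\otimes U\|_\infty\le1$ and $\|\openone\otimes V^\dagger\|_\infty\le1$, this yields $\|(\Phi\otimes1)(\ket{u}\bra{v})\|_1\le d\|C(\Phi)\|_1$. Taking the supremum gives $\|\Phi\|_\diamond\le d\|C(\Phi)\|_1$.

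The only delicate point is the vectorization step: checking that $\ket{u}=(\openone\otimes U)|\openone)$ parametrizes all vectors, and that the factor $d$ enters only through the normalization $\frac1d$ in the definition of $C(\Phi)$. Everything else is standard norm inequalities.
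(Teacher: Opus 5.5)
Your proof is correct and complete. The paper does not prove this proposition itself; it defers to Watrous and to Hahn, Burgarth, and Yuasa, and your argument is the standard one from those sources.

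Writing every input vector as $(\openone\otimes U)|\openone)$ and sandwiching $dC(\Phi)$ between $\openone\otimes U$ and $\openone\otimes V^\dagger$ is an unpacked form of Watrous's characterization of the diamond norm. That characterization expresses it as a maximum of $\|(\openone\otimes\sqrt{\rho_0})\,dC(\Phi)\,(\openone\otimes\sqrt{\rho_1})\|_1$ over density operators $\rho_0,\rho_1$. Polar decomposition $U=W\sqrt{U^\dagger U}$ turns your $U$ into such a square root, up to a unitary on the ancilla that does not change the trace norm. Taking $\rho_0=\rho_1=\openone/d$ gives the lower bound, and $\|\sqrt{\rho}\|_\infty\le1$ gives the upper one.

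Two minor remarks. The transpose identity you announce at the start is never used. Reducing to general rank-one operators $\ket{u}\bra{v}$, rather than to pure states, is the right move here, because $\Phi$ is an arbitrary linear map and need not preserve Hermiticity.
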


\begin{prop}[Diamond norm of subunital map of qubit]\label{prop:DiamondBoundQubit}
Let $\Phi$ be a linear map on $B(\mathbb{C}^2)$, the
operators of a qubit, and $\Phi^*$ be its dual map, satisfying the conditions
\begin{equation}
[\Phi(A)]^\dag=\Phi(A^\dag),\qquad
\Phi(\openone)=\Phi^*(\openone)=\alpha\openone,
\label{eqn:SubunitalCond}
\end{equation}
for some real constant $\alpha$.
Then,
\begin{equation}
\|\Phi\|_\diamond=\|C(\Phi)\|_1,
\label{eqn:DiamondNormQubit}
\end{equation}
where $C(\Phi)$ is the Choi operator of $\Phi$.
\end{prop}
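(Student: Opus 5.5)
The goal is to prove that for a qubit map $\Phi$ satisfying the Hermiticity-preserving and sub-unitality conditions~\eqref{eqn:SubunitalCond}, the diamond norm equals the trace norm of the Choi operator. Since Proposition~\ref{prop:DiamondBound} already gives $\|C(\Phi)\|_1\le\|\Phi\|_\diamond$, only the reverse inequality $\|\Phi\|_\diamond\le\|C(\Phi)\|_1$ remains.

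I would invoke the standard characterization of the diamond norm of a Hermiticity-preserving map as an optimization over purifications (Watrous):
\begin{equation*}
\|\Phi\|_\diamond=\max_{\sigma}\bigl\|(\openone\otimes\sqrt{\sigma})\,dC(\Phi)\,(\openone\otimes\sqrt{\sigma})\bigr\|_1 .
\end{equation*}
Here $\sigma$ ranges over density matrices on the ancilla, with the transpose convention absorbed into $\sigma$. For Hermiticity-preserving maps the maximum can be restricted to a rank-one input $|\psi\rangle\langle\psi|$ with reduced state $\sigma$. Choosing the maximally mixed state $\sigma=\openone/d$ recovers exactly $\|C(\Phi)\|_1$. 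So the task is to show that the maximally mixed $\sigma$ is optimal under~\eqref{eqn:SubunitalCond}.

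The key structural step is covariance. By Proposition~\ref{prop:ChoiMatrix}, write $C(\Phi)$ in the Pauli basis $F_\mu=\sigma_\mu/\sqrt2$. The unitality condition $\Phi(\openone)=\alpha\openone$ and the trace condition $\Phi^*(\openone)=\alpha\openone$ kill the mixed terms in which one tensor factor is $\openone$ and the other is a Pauli. The Choi operator then has the form
\begin{equation*}
dC(\Phi)=\tfrac{1}{2}\Bigl(\alpha\,\openone\otimes\openone+\sum_{i,j=1}^{3}t_{ij}\,\sigma_i\otimes\sigma_j\Bigr),
\end{equation*}
with a real matrix $t$ by Hermiticity preservation. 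Its marginals on both sides are proportional to the identity. Using local-unitary freedom, i.e.\ the real singular value decomposition of $t$ realized by $SU(2)$ rotations, we can reduce $t$ to a diagonal matrix. Then $dC(\Phi)$ is diagonal in the Bell basis, i.e.\ a sum of projectors onto the four Bell states with real weights $\lambda_k$, and $\|C(\Phi)\|_1=\frac12\sum_k|\lambda_k|$.

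Next, bound the objective for an arbitrary $\sigma$. I would split $dC(\Phi)=A_+-A_-$ into positive and negative parts, which are again Bell-diagonal, and apply the triangle inequality:
\begin{equation*}
\bigl\|(\openone\otimes\sqrt\sigma)\,dC\,(\openone\otimes\sqrt\sigma)\bigr\|_1\le\Tr\bigl[(\openone\otimes\sigma)A_+\bigr]+\Tr\bigl[(\openone\otimes\sigma)A_-\bigr].
\end{equation*}
Every Bell projector has reduced state $\openone/2$. Hence $\Tr[(\openone\otimes\sigma)\,|\beta_k\rangle\langle\beta_k|]=\Tr\sigma/2=1/2$, and the right-hand side equals $\frac12\sum_k|\lambda_k|=\|C(\Phi)\|_1$, independently of $\sigma$.

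The main obstacle is this step: I must ensure that $A_\pm$ individually have maximally mixed marginals. This works in dimension 2 precisely because diagonalizability in the Bell basis forces every spectral projector to have maximally mixed marginals. In higher dimension that fails, which is why the proposition is stated for qubits. I would carefully check the Pauli-basis computation showing that both marginal conditions eliminate exactly the $\openone\otimes\sigma_i$ and $\sigma_i\otimes\openone$ components, with the factor $\alpha$ consistent between $\Phi(\openone)$ and $\Phi^*(\openone)$.
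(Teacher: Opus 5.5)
Your proof is correct. At its core it rests on the same fact as the paper's proof: the Choi operator has a spectral decomposition whose eigenvectors are maximally entangled. Equivalently, $\Phi(A)=\sum_k\frac{\lambda_k}{2}U_kAU_k^\dagger$ with unitaries $U_k$ and real $\lambda_k$, so that $\|\Phi\|_\diamond\le\frac12\sum_k|\lambda_k|=\|C(\Phi)\|_1$.

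You reach and use this fact differently from the paper. The paper works directly in the Pauli operator basis. The conditions force the coefficient matrix $c$ to have purely imaginary off-diagonal entries in its first row and column and a real symmetric $3\times3$ block. Hence $c$ is diagonalized by $U=VR$ with $V=\mathrm{diag}(\rmi,1,1,1)$ and $R$ real orthogonal, and a direct computation gives $G_\mu^\dagger G_\mu=\frac12\openone$, i.e.\ $G_\mu\propto$ unitary. You instead use the correlation-tensor form of $C(\Phi)$ and the signed SVD of $t$, realized by local $SU(2)$ rotations, to reach a Bell-diagonal form. This is more conceptual: it shows why $d=2$ is special, since two-qubit operators with maximally mixed marginals are locally equivalent to Bell-diagonal ones. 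The price is importing the $SU(2)\to SO(3)$ correspondence.

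For the final inequality, the paper applies the triangle inequality and $\|XAY\|_1\le\|X\|_\infty\|A\|_1\|Y\|_\infty$ directly to $(\Phi\otimes1)(A)$ for arbitrary $A$. You go through Watrous's variational formula: pure inputs suffice for Hermiticity-preserving maps, followed by purification to an ancilla state $\sigma$. That step is valid, but you don't need it. Once the eigenvectors $(U\otimes W)|\beta_k\rangle$ are known to be maximally entangled, you can read off $\Phi=\sum_k\frac{\lambda_k}{2}\,U_k(\cdot)U_k^\dagger$ and conclude by the triangle inequality, since unitary conjugations have diamond norm $1$. The paper's argument is exactly this.
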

\begin{proof}
Let us take
\begin{equation}
(F_0,F_1,F_2,F_3)
=\left(
\frac{1}{\sqrt{2}}\openone,
\frac{1}{\sqrt{2}}X,
\frac{1}{\sqrt{2}}Y,
\frac{1}{\sqrt{2}}Z
\right),
\end{equation}
as a complete set of orthonormal basis operators, with $X$, $Y$, and $Z$ being Pauli operators, and consider the expansion~(\ref{eqn:MapExpansion}) of $\Phi$.
The conditions in~(\ref{eqn:SubunitalCond}) impose the following structure on the coefficient matrix $c=(c_{\mu\nu})$ of the expansion~(\ref{eqn:MapExpansion}):
\begin{equation}
c
=\begin{pmatrix}
a_{00}&\rmi h_1&\rmi h_2&\rmi h_3\\
-\rmi h_1&a_{11}&a_{12}&a_{13}\\
-\rmi h_2&a_{12}&a_{22}&a_{23}\\
-\rmi h_3&a_{13}&a_{23}&a_{33}
\end{pmatrix},
\end{equation}
with real numbers $h_1,h_2,h_3,a_{00},a_{11},a_{12},a_{13},a_{22},a_{23},a_{33}\in\mathbb{R}$, and $\alpha=\Tr c=a_{00}+a_{11}+a_{22}+a_{33}$.
This matrix $c$ can be diagonalized $U^\dag cU=\diag(\lambda_0,\lambda_1,\lambda_2,\lambda_3)$ by a unitary matrix of the form $U=VR$, where $V=\diag(\rmi,1,1,1)$ is a diagonal unitary and $R$ is a real orthogonal matrix satisfying $R^T=R^{-1}$.
Then, the expansion~(\ref{eqn:MapExpansion}) is simplified to
\begin{equation}
\Phi(A)=\sum_{\mu=0}^3
\lambda_\mu G_\mu AG_\mu^\dag,
\qquad
G_\mu=\sum_{\nu=0}^3U_{\nu\mu}F_\nu.
\end{equation}
By using the facts that $U$ is unitary and that $U_{0\mu}\in\rmi\mathbb{R}$, $U_{i\mu}\in\mathbb{R}$, for $i=1,2,3$ and $\mu=0,1,2,3$, one has
\begin{equation}
G_\mu^\dag G_\mu
=
\frac{1}{2}
\sum_{\nu}U_{\nu\mu}^*U_{\nu\mu}\openone
+\sum_{k=1}^3\left(
\Re(U_{0\mu}^*U_{k\mu})
-\frac{1}{2}\sum_{i,j=1}^3\varepsilon_{ijk}\Im(U_{i\mu}^*U_{j\mu})
\right)
\sigma_k
=\frac{1}{2}\openone,
\end{equation}
and hence,
\begin{equation}
\|G_\mu\|_\infty=\frac{1}{\sqrt{2}}.	
\end{equation}
Therefore,
\begingroup
\allowdisplaybreaks
\begin{align}
\|(\Phi\otimes1)(A)\|_1
&=\left\|
\sum_{\mu=0}^3
\lambda_\mu
(G_\mu\otimes\openone)A(G_\mu^\dag\otimes\openone)
\right\|_1
\nonumber\\
&\le\sum_{\mu=0}^3
|\lambda_\mu|
\|(G_\mu\otimes\openone)A(G_\mu^\dag\otimes\openone)\|_1
\nonumber
\displaybreak[0]
\\
&\le\sum_{\mu=0}^3
|\lambda_\mu|
\|G_\mu\otimes\openone\|_\infty\|A\|_1\|G_\mu^\dag\otimes\openone\|_\infty
\nonumber
\displaybreak[0]
\\
&=\frac{1}{2}
\sum_{\mu=0}^3
|\lambda_\mu|
\|A\|_1
\nonumber
\displaybreak[0]
\\
&=\frac{1}{2}
\|c\|_1
\|A\|_1
\vphantom{\sum_{\mu=0}^3},
\nonumber
\displaybreak[0]
\\
&=
\|C(\Phi)\|_1
\|A\|_1
\vphantom{\frac{1}{2}},
\end{align}
\endgroup
and
\begin{equation}
\|\Phi\|_\diamond\le\|C(\Phi)\|_1,
\end{equation}
where we have used $\|ABC\|\le\|A\|_\infty\|B\|\|C\|_\infty$ for any unitarily invariant norm~\cite[Proposition~IV.2.4]{bhatia2013matrix}, $\|G_\mu\otimes\openone\|_\infty=\|G_\mu\|_\infty$, and Proposition~\ref{prop:ChoiMatrix}.
This together with the lower bound proven in Proposition~\ref{prop:DiamondBound} implies the equality~(\ref{eqn:DiamondNormQubit}).
\end{proof}

\begin{example}\label{ex:NormQ}
Let us consider a pair of mutually orthogonal projections $P_1$ and $P_2$ in $\mathbb{C}^d$, with ranks $d_1=\Tr P_1$ and $d_2=\Tr P_2$ (which are restricted by $d_1+d_2\le d$), and let us define the map
\begin{equation}
\mathcal{Q}(A)=P_1AP_2+P_2AP_1.
\label{eqn:OffDiagMap}
\end{equation}
Its Choi operator is given by
\begin{equation}
C(\mathcal{Q})
=\frac{\sqrt{d_1d_2}}{d}
\,\Bigl(
\ket{\phi_1}\bra{\phi_2}
+
\ket{\phi_2}\bra{\phi_1}
\Bigr),
\end{equation}
with some normalized vectors $\ket{\phi_1}$ and $\ket{\phi_2}$ purifying $P_1/d_1$ and $P_2/d_2$, respectively, and its trace norm reads
\begin{equation}
\|C(\mathcal{Q})\|_1
=\frac{2\sqrt{d_1d_2}}{d}\le1.
\end{equation}
The equality holds iff $d_1=d_2=d/2$.
On the other hand, since $\mathcal{Q}$ is a projection, its diamond norm $\|\mathcal{Q}\|_\diamond$ fulfills $\|\mathcal{Q}\|_\diamond=\|\mathcal{Q}^2\|_\diamond\le\|\mathcal{Q}\|_\diamond^2$, and this implies $\|\mathcal{Q}\|_\diamond\ge1$.
In addition, by arranging~(\ref{eqn:OffDiagMap}) as
\begin{equation}
\mathcal{Q}(A)
=\frac{1}{2}(P_1+P_2)A(P_1+P_2)
-
\frac{1}{2}(P_1-P_2)A(P_1-P_2),
\end{equation}
we have a bound
\begin{equation}
\|(\mathcal{Q}\otimes1)(A)\|_1
\le
\frac{1}{2}\|P_1+P_2\|_\infty^2\|A\|_1
+
\frac{1}{2}\|P_1-P_2\|_\infty^2\|A\|_1
=\|A\|_1,
\end{equation}
and get $\|\mathcal{Q}\|_\diamond\le1$.
Therefore, 
\begin{equation}
\|\mathcal{Q}\|_\diamond=1,
\end{equation}
and
\begin{equation}
\|C(\mathcal{Q})\|_1
\le\|\mathcal{Q}\|_\diamond=1.
\end{equation}
In particular, for $d_1\neq d_2$ or $d_1=d_2<d$, one has $\|C(\mathcal{Q})\|_1<\|\mathcal{Q}\|_\diamond$. 
\end{example}

\section{Useful Properties of Superoperators}
\label{app:ElementaryFacts}
In this Appendix, we recall some elementary facts on superoperators, which are used in the Examples of Section~\ref{sec:Examples}\@.
\begin{prop}\label{prop:UConj}
	Let $\mathcal{U}\varrho=U\varrho U^\dagger$, with $U$ a unitary operator, and let $\mathcal{A}\varrho=[A,{}\varrho{}]$. Then, $\mathcal{U}\mathcal{A}\mathcal{U}^{-1}\varrho=[U A U^\dagger,{}\varrho{}]$.
\end{prop}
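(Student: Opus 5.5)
The plan is a direct computation on density matrices. First I would write out $\mathcal{U}\mathcal{A}\mathcal{U}^{-1}$ explicitly and then apply it to an arbitrary $\varrho$.

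The inverse of $\mathcal{U}$ is $\mathcal{U}^{-1}\varrho=U^\dagger\varrho U$. To verify this, I would use $U^\dagger U=UU^\dagger=\openone$, which gives $\mathcal{U}\mathcal{U}^{-1}\varrho=UU^\dagger\varrho UU^\dagger=\varrho$, and likewise $\mathcal{U}^{-1}\mathcal{U}\varrho=\varrho$.

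Next I would apply the three maps in sequence:
\begin{equation}
\mathcal{U}\mathcal{A}\mathcal{U}^{-1}\varrho
=U[A,U^\dagger\varrho U]U^\dagger
=UAU^\dagger\varrho UU^\dagger-UU^\dagger\varrho UAU^\dagger .
\end{equation}
Applying unitarity once more, $UU^\dagger=\openone$, the right-hand side reduces to $(UAU^\dagger)\varrho-\varrho(UAU^\dagger)$. This is exactly $[UAU^\dagger,\varrho]$, which is the claim.

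There is no real obstacle in this argument. The only points needing care are using the correct inverse map $\mathcal{U}^{-1}$ and keeping track of the order of the operator products.

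In the examples, the result is applied with $U=\rme^{-\frac{\rmi}{2}\omega tZ}$, where $\mathcal{U}^{-1}$ corresponds to $\rme^{-\omega t\mathcal{L}_0}$. The identity then reduces the rotated commutators to conjugations of Pauli operators.
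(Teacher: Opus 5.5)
Your proof is correct and is the same direct computation as the paper, which writes $\mathcal{U}\mathcal{A}\mathcal{U}^{-1}\varrho=U[A,U^\dagger\varrho U]U^\dagger$ and simplifies using unitarity. In fact your fully expanded version is more careful than the paper's one-liner: the paper's final expression $[UAU^\dagger,U\varrho U^\dagger]$ is a typo for $[UAU^\dagger,\varrho]$, which is what you correctly obtain.
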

\begin{proof}
	It follows from a simple computation,
	\begin{equation}
\mathcal{U}\mathcal{A}\mathcal{U}^{-1}\varrho=U[A,U^{\dagger}\varrho U]U^\dagger= [UAU^\dagger,U\varrho U^\dagger].
	\end{equation}
\end{proof}
\begin{prop}\label{prop:blockdiagsuper}
	Let $\{P_j\}$ be a collection of projections such that $P_k P_\ell=\delta_{k\ell}P_k$ and $\sum_kP_k=\openone$, and define
	\begin{equation}
		\mathcal{P}\varrho=\sum_kP_k\varrho P_k,\qquad \mathcal{Q}\varrho=(1-\mathcal{P})\varrho=\sum_{k\neq\ell} P_k\varrho P_\ell.
	\end{equation}
	Then, the following properties hold.
	\begin{itemize}
		\item[(i)] Let $A=\sum_kP_kAP_k$ be a block-diagonal operator with respect to the decomposition $\{P_k\}$. Then, $\mathcal{A}\varrho=[A,{}\varrho{}]$ satisfies $[\mathcal{A},\mathcal{P}]=[\mathcal{A},\mathcal{Q}]=0$.
		\item[(ii)] If $B$ is off-diagonal with respect to the decomposition $\{P_k\}$, i.e.~$P_kBP_k=0$, then $\mathcal{B}\varrho=[B,{}\varrho{}]$ satisfies $\mathcal{PBP}=0$. If, in addition, there are only two blocks, i.e.~$k\in\{1,2\}$, then also $\mathcal{Q}\mathcal{B}\mathcal{Q}=0$.
	\end{itemize}
\end{prop}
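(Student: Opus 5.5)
We prove the two items by direct computation on superoperators, using only the algebra of the projections $\{P_k\}$. The tool throughout is that for any operator $C$, the maps $\varrho\mapsto C\varrho$ and $\varrho\mapsto\varrho C$ interact with $\mathcal{P}$ and $\mathcal{Q}$ through the relations $P_kP_\ell=\delta_{k\ell}P_k$ and $\sum_kP_k=\openone$.

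For (i), since $A=\sum_kP_kAP_k$, the operator $A$ commutes with every $P_k$. This holds because $P_jA=P_jAP_j=AP_j$. We then compute
\begin{equation}
\mathcal{P}\mathcal{A}\varrho=\sum_kP_k(A\varrho-\varrho A)P_k=\sum_k(AP_k\varrho P_k-P_k\varrho P_kA)=\mathcal{A}\mathcal{P}\varrho,
\end{equation}
so that $[\mathcal{A},\mathcal{P}]=0$. Because $\mathcal{Q}=1-\mathcal{P}$, it follows at once that $[\mathcal{A},\mathcal{Q}]=-[\mathcal{A},\mathcal{P}]=0$.

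For the first claim of (ii), we expand
\begin{equation}
\mathcal{PBP}\varrho=\sum_{k,\ell}P_\ell\bigl(BP_k\varrho P_k-P_k\varrho P_kB\bigr)P_\ell .
\end{equation}
The first term is nonzero only if $\ell=k$ on the right, because of $P_kP_\ell$. It then reduces to $P_kBP_k\varrho P_k$, which vanishes since $P_kBP_k=0$. The second term is handled symmetrically: it gives $P_k\varrho P_kBP_k=0$. Hence $\mathcal{PBP}=0$.

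For the two-block claim, we have $\mathcal{Q}\varrho=P_1\varrho P_2+P_2\varrho P_1$, and we use that $B=P_1BP_2+P_2BP_1$. Applying $B$ to $P_1\varrho P_2$ gives $BP_1\varrho P_2=P_2BP_1\varrho P_2$, which lies in the $P_2(\cdot)P_2$ block. Likewise $P_1\varrho P_2B=P_1\varrho P_2BP_1$ lies in the $P_1(\cdot)P_1$ block. The same happens for the term $P_2\varrho P_1$. Therefore $\mathcal{B}\mathcal{Q}\varrho$ lies entirely in the range of $\mathcal{P}$, and consequently $\mathcal{Q}\mathcal{B}\mathcal{Q}=0$.

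The only point needing care is this last step, where having exactly two blocks is essential. With three or more blocks, the map $B$ could send $P_1\varrho P_2$ into $P_3\varrho P_2$, which is off-diagonal, so the argument would fail. With two blocks, an off-diagonal $B$ necessarily swaps the two blocks, and this swap lands every off-diagonal term on the diagonal.
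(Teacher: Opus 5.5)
Your proof is correct and follows essentially the same direct computation as the paper. Part (i) and the first claim of (ii) match the paper's calculation term by term. For the two-block claim you write $B=P_1BP_2+P_2BP_1$ and observe that $\mathcal{B}$ maps the range of $\mathcal{Q}$ into the range of $\mathcal{P}$, instead of expanding $\mathcal{QBQ}$ as a quadruple sum; this makes the role of the two-block assumption more transparent, but it is the same argument in substance.
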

\begin{proof}
    To see (i), note that $P_kA=P_kAP_k=AP_k$, which implies that
    \begin{align}\notag
        \mathcal{PA}X&=\sum_{k}(P_kAXP_k-P_kXAP_k)\\\notag
    &=\sum_k( P_kAP_k X P_k-P_kX P_k AP_k)\\
    &=\sum_k (AP_k X P_k-P_kX P_k A)=\mathcal{AP}X.
    \end{align}
    Since $\mathcal{Q}=1-\mathcal{P}$, it follows also that $[\mathcal{Q},\mathcal{A}]=0$.
    To see (ii), note that
    \begin{align}
    \mathcal{PBP}X&=\sum_{j,k}P_j(BP_kXP_k-P_kXP_kB)P_j\\
        &=\sum_{j}P_jBP_jXP_j-P_jXP_jBP_j=0.
    \end{align}
    Finally
    \begin{align}
        \mathcal{QBQ}X&=\sum_{j\neq k}\sum_{\ell\neq m} P_j(BP_\ell XP_m-P_\ell XP_mB)P_k,
    \end{align}
    and it is immediate to verify that if there are only two blocks $P_1$ and $P_2$, all the terms in the summation contain necessarily $P_1P_2=P_2P_1=0$.
\end{proof}

\bibliographystyle{prsty-title-hyperref}
\bibliography{Refs}

@Article{Agarwal,
  author    = {Agarwal, G. S.},
  journal   = {Phys. Rev. A},
  title     = {Rotating-Wave Approximation and Spontaneous Emission},
  year      = {1973},
  issn      = {0556-2791},
  month     = mar,
  number    = {3},
  pages     = {1195--1197},
  volume    = {7},
  doi       = {10.1103/PhysRevA.7.1195},
  issue     = {3},
  numpages  = {0},
  publisher = {American Physical Society},
  url       = {https://doi.org/10.1103/PhysRevA.7.1195},
}

@Book{alicki2007quantum,
  author     = {Alicki, Robert and Lendi, Karl},
  publisher  = {Springer},
  title      = {Quantum Dynamical Semigroups and Applications},
  year       = {2007},
  address    = {Berlin},
  edition    = {Second},
  isbn       = {9783540708605},
  doi        = {10.1007/3-540-70861-8},
  url        = {https://doi.org/10.1007/3-540-70861-8},
}

@Article{avron2012adiabatic,
  author     = {Avron, J. E. and Fraas, M. and Graf, G. M. and Grech, P.},
  journal    = {Commun. Math. Phys.},
  title      = {Adiabatic Theorems for Generators of Contracting Evolutions},
  year       = {2012},
  issn       = {1432-0916},
  month      = aug,
  number     = {1},
  pages      = {163--191},
  volume     = {314},
  day        = {01},
  doi        = {10.1007/s00220-012-1504-1},
  publisher  = {Springer Science and Business Media LLC},
  url        = {https://doi.org/10.1007/s00220-012-1504-1},
}

@Article{band2015open,
  author    = {Band, Y. B.},
  journal   = {J. Phys. B: At. Mol. Opt. Phys.},
  title     = {Open quantum system stochastic dynamics with and without the RWA},
  year      = {2015},
  issn      = {1361-6455},
  month     = jan,
  number    = {4},
  pages     = {045401},
  volume    = {48},
  doi       = {10.1088/0953-4075/48/4/045401},
  publisher = {IOP Publishing},
  url       = {https://doi.org/10.1088/0953-4075/48/4/045401},
}

@Article{benatti2022local,
  author    = {Benatti, Fabio and Chru{\'s}ci{\'n}ski, D and Floreanini, Roberto},
  journal   = {Open Sys. Inf. Dyn.},
  title     = {Local Generation of Entanglement with {Redfield} Dynamics},
  year      = {2022},
  issn      = {1793-7191},
  month     = mar,
  number    = {01},
  pages     = {2250001},
  volume    = {29},
  doi       = {10.1142/s1230161222500019},
  publisher = {World Scientific},
  url       = {https://doi.org/10.1142/S1230161222500019},
}

@Book{bhatia2013matrix,
  author     = {Bhatia, Rajendra},
  publisher  = {Springer},
  title      = {Matrix Analysis},
  year       = {1997},
  address    = {New York},
  isbn       = {9781461206538},
  doi        = {10.1007/978-1-4612-0653-8},
  issn       = {0072-5285},
  url        = {https://doi.org/10.1007/978-1-4612-0653-8},
}

@Article{bloch1940magnetic,
  author    = {Bloch, F. and Siegert, A.},
  journal   = {Phys. Rev.},
  title     = {Magnetic Resonance for Nonrotating Fields},
  year      = {1940},
  issn      = {0031-899X},
  month     = mar,
  number    = {6},
  pages     = {522--527},
  volume    = {57},
  doi       = {10.1103/PhysRev.57.522},
  publisher = {APS},
  url       = {https://doi.org/10.1103/PhysRev.57.522},
}

@Book{breuer2002theory,
  author        = {Breuer, Heinz-Peter and Petruccione, Francesco},
  publisher     = {Oxford University Press},
  title         = {The Theory of Open Quantum Systems},
  year          = {2002},
  address       = {Oxford},
  isbn          = {9780198520634},
  month         = aug,
  pagetotal     = {613},
  ppn_gvk       = {607597690},
  url           = {https://global.oup.com/academic/product/the-theory-of-open-quantum-systems-9780198520634?cc=us&lang=en&#},
}

@article{burgarth2019generalized,
	author = {Burgarth, Daniel and Facchi, Paolo and Nakazato, Hiromichi and Pascazio, Saverio and Yuasa, Kazuya},
	doi = {10.22331/q-2019-06-12-152},
	issn = {2521-327X},
	journal = {Quantum},
	month = jun,
	pages = {152},
	publisher = {Verein zur Forderung des Open Access Publizierens in den Quantenwissenschaften},
	title = {Generalized Adiabatic Theorem and Strong-Coupling Limits},
	url = {https://doi.org/10.22331/q-2019-06-12-152},
	volume = {3},
	year = {2019}}

@article{burgarth2021eternal,
	author = {Burgarth, Daniel and Facchi, Paolo and Nakazato, Hiromichi and Pascazio, Saverio and Yuasa, Kazuya},
	doi = {10.1103/PhysRevA.103.032214},
	issue = {3},
	journal = {Phys. Rev. A},
	month = mar,
	numpages = {23},
	pages = {032214},
	publisher = {American Physical Society},
	title = {Eternal adiabaticity in quantum evolution},
	url = {https://doi.org/10.1103/PhysRevA.103.032214},
	volume = {103},
	year = {2021}}

@article{burgarth2022one,
	author = {Burgarth, Daniel and Facchi, Paolo and Gramegna, Giovanni and Yuasa, Kazuya},
	doi = {10.22331/q-2022-06-14-737},
	issn = {2521-327X},
	journal = {Quantum},
	month = jun,
	pages = {737},
	publisher = {Verein zur Forderung des Open Access Publizierens in den Quantenwissenschaften},
	title = {One bound to rule them all: from {Adiabatic} to {Zeno}},
	url = {https://doi.org/10.22331/q-2022-06-14-737},
	volume = {6},
	year = {2022}}

@Article{burgarth2024taming,
  author    = {Burgarth, Daniel and Facchi, Paolo and Hillier, Robin and Ligab{\`o}, Marilena},
  journal   = {Quantum},
  title     = {Taming the Rotating Wave Approximation},
  year      = {2024},
  issn      = {2521-327X},
  month     = feb,
  pages     = {1262},
  volume    = {8},
  doi       = {10.22331/q-2024-02-21-1262},
  publisher = {Verein zur F{\"o}rderung des Open Access Publizierens in den Quantenwissenschaften},
  url       = {https://doi.org/10.22331/q-2024-02-21-1262},
}

@article{chruscinski2017brief,
	author = {Chru{{\'{s}}}ci{{\'{n}}}ski, Dariusz and Pascazio, Saverio},
	doi = {10.1142/S1230161217400017},
	issn = {1793-7191},
	journal = {Open Sys. Inf. Dyn.},
	month = sep,
	number = {03},
	pages = {1740001},
	publisher = {World Scientific Pub Co Pte Lt},
	title = {A Brief History of the {GKLS} Equation},
	url = {https://doi.org/10.1142/S1230161217400017},
	volume = {24},
	year = {2017}}

@article{chruscinski2022dynamical,
	author = {Chru{{\'{s}}}ci{{\'{n}}}ski, Dariusz},
	doi = {10.1016/j.physrep.2022.09.003},
	issn = {0370-1573},
	journal = {Phys. Rep.},
	month = dec,
	pages = {1--85},
	publisher = {Elsevier BV},
	title = {Dynamical maps beyond {Markovian} regime},
	url = {https://doi.org/10.1016/j.physrep.2022.09.003},
	volume = {992},
	year = {2022}}

@Article{de2017dynamics,
  author    = {de Vega, In{\'{e}}s and Alonso, Daniel},
  journal   = {Rev. Mod. Phys.},
  title     = {Dynamics of non-Markovian open quantum systems},
  year      = {2017},
  issn      = {1539-0756},
  month     = jan,
  number    = {1},
  pages     = {015001},
  volume    = {89},
  doi       = {10.1103/RevModPhys.89.015001},
  publisher = {APS},
  url       = {https://doi.org/10.1103/RevModPhys.89.015001},
}

@Article{dey2025error,
  author    = {Dey, Anirban and Lonigro, Davide and Yuasa, Kazuya and Burgarth, Daniel},
  journal   = {Phys. Rev. A},
  title     = {Error bounds for the {Floquet-Magnus} expansion and their application to the semiclassical quantum {Rabi} model},
  year      = {2025},
  issn      = {2469-9934},
  month     = nov,
  number    = {5},
  pages     = {053723},
  volume    = {112},
  doi       = {10.1103/6bgj-s987},
  publisher = {American Physical Society (APS)},
  url       = {https://doi.org/10.1103/6bgj-s987},
}

@Article{fleming2010rotating,
  author    = {Fleming, Chris and Cummings, N. I. and Anastopoulos, Charis and Hu, B. L.},
  journal   = {J. Phys. A: Math. Theor.},
  title     = {The rotating-wave approximation: consistency and applicability from an open quantum system analysis},
  year      = {2010},
  issn      = {1751-8121},
  month     = sep,
  number    = {40},
  pages     = {405304},
  volume    = {43},
  doi       = {10.1088/1751-8113/43/40/405304},
  publisher = {IOP Publishing},
  url       = {https://doi.org/10.1088/1751-8113/43/40/405304},
}

@Article{FloquetLindbladian,
  author    = {Schnell, Alexander and Eckardt, Andr\'e and Denisov, Sergey},
  journal   = {Phys. Rev. B},
  title     = {Is there a {Floquet} {Lindbladian}?},
  year      = {2020},
  issn      = {2469-9969},
  month     = mar,
  number    = {10},
  pages     = {100301},
  volume    = {101},
  doi       = {10.1103/PhysRevB.101.100301},
  issue     = {10},
  numpages  = {6},
  publisher = {American Physical Society},
  url       = {https://doi.org/10.1103/PhysRevB.101.100301},
}

@Article{gong2020error,
  author     = {Gong, Zongping and Yoshioka, Nobuyuki and Shibata, Naoyuki and Hamazaki, Ryusuke},
  journal    = {Phys. Rev. A},
  title      = {Error bounds for constrained dynamics in gapped quantum systems: Rigorous results and generalizations},
  year       = {2020},
  issn       = {2469-9934},
  month      = may,
  number     = {5},
  pages      = {052122},
  volume     = {101},
  doi        = {10.1103/PhysRevA.101.052122},
  issue      = {5},
  numpages   = {18},
  publisher  = {American Physical Society},
  url        = {https://doi.org/10.1103/PhysRevA.101.052122},
}

@Article{gorini1976completely,
  author    = {Gorini, Vittorio and Kossakowski, Andrzej and Sudarshan, Ennackal Chandy George},
  journal   = {J. Math. Phys.},
  title     = {Completely positive dynamical semigroups of $N$-level systems},
  year      = {1976},
  issn      = {1089-7658},
  month     = may,
  number    = {5},
  pages     = {821--825},
  volume    = {17},
  doi       = {10.1063/1.522979},
  publisher = {American Institute of Physics},
  url       = {https://doi.org/10.1063/1.522979},
}

@Article{haeberlen1968coherent,
  author    = {Haeberlen, Ulrich and Waugh, John S.},
  journal   = {Physical Review},
  title     = {Coherent Averaging Effects in Magnetic Resonance},
  year      = {1968},
  issn      = {0031-899X},
  month     = nov,
  number    = {2},
  pages     = {453--467},
  volume    = {175},
  doi       = {10.1103/PhysRev.175.453},
  publisher = {APS},
  url       = {https://doi.org/10.1103/PhysRev.175.453},
}

@article{hahn2022unification,
	author = {Hahn, Alexander and Burgarth, Daniel and Yuasa, Kazuya},
	doi = {10.1088/1367-2630/ac6b4f},
	issn = {1367-2630},
	journal = {New J. Phys.},
	month = jun,
	number = {6},
	pages = {063027},
	publisher = {{IOP} Publishing},
	title = {Unification of random dynamical decoupling and the quantum {Zeno} effect},
	url = {https://doi.org/10.1088/1367-2630/ac6b4f},
	volume = {24},
	year = {2022}}

@Article{Hartmann,
  author    = {Hartmann, Richard and Strunz, Walter T.},
  journal   = {Phys. Rev. A},
  title     = {Accuracy assessment of perturbative master equations: Embracing nonpositivity},
  year      = {2020},
  issn      = {2469-9934},
  month     = jan,
  number    = {1},
  pages     = {012103},
  volume    = {101},
  doi       = {10.1103/PhysRevA.101.012103},
  issue     = {1},
  numpages  = {13},
  publisher = {American Physical Society (APS)},
  url       = {https://doi.org/10.1103/PhysRevA.101.012103},
}

@Article{hasenohrl2022interaction,
  author       = {Hasen{\"{o}}hrl, Markus and Wolf, Michael M.},
  journal      = {Ann. Henri Poincaré},
  title        = {“Interaction-Free” Channel Discrimination},
  year         = {2022},
  issn         = {1424-0661},
  month        = apr,
  number       = {9},
  pages        = {3331--3390},
  volume       = {23},
  booktitle    = {Ann. Henri Poincar\'e},
  doi          = {10.1007/s00023-022-01175-z},
  organization = {Springer},
  publisher    = {Springer Science and Business Media LLC},
  url          = {https://doi.org/10.1007/s00023-022-01175-z},
}

@Article{havel2003robust,
  author     = {Havel, Timothy F.},
  journal    = {J. Math. Phys.},
  title      = {Robust procedures for converting among {Lindblad}, {Kraus} and matrix representations of quantum dynamical semigroups},
  year       = {2003},
  issn       = {1089-7658},
  month      = feb,
  number     = {2},
  pages      = {534--557},
  volume     = {44},
  doi        = {10.1063/1.1518555},
  eprint     = {https://aip.scitation.org/doi/pdf/10.1063/1.1518555},
  publisher  = {AIP Publishing},
  url        = {https://doi.org/10.1063/1.1518555},
}

@Article{heib2025bounding,
  author    = {Heib, Tim and Lageyre, Paul and Ferreri, Alessandro and Wilhelm, Frank K. and Paraoanu, G. S. and Burgarth, Daniel and Schell, Andreas W. and Edward Bruschi, David},
  journal   = {J. Phys. A: Math. Theor.},
  title     = {Bounding the rotating wave approximation for coupled harmonic oscillators},
  year      = {2025},
  issn      = {1751-8121},
  month     = apr,
  number    = {17},
  pages     = {175304},
  volume    = {58},
  doi       = {10.1088/1751-8121/adcd16},
  publisher = {IOP Publishing},
  url       = {https://doi.org/10.1088/1751-8121/adcd16},
}

@article{kato1953integration,
	author = {Kato, Tosio},
	doi = {10.2969/jmsj/00520208},
	issn = {0025-5645},
	journal = {J. Math. Soc. Jpn.},
	month = jul,
	number = {2},
	pages = {208--234},
	publisher = {Mathematical Society of Japan (Project Euclid)},
	title = {Integration of the equation of evolution in a {Banach} space},
	url = {https://doi.org/10.2969/jmsj/00520208},
	volume = {5},
	year = {1953}}

@book{kato2013perturbation,
	address = {Berlin},
	author = {Kato, Tosio},
	doi = {10.1007/978-3-642-66282-9},
	edition = {Second},
	isbn = {9783642662829},
	issn = {1431-0821},
	publisher = {Springer},
	title = {Perturbation Theory for Linear Operators},
	url = {https://doi.org/10.1007/978-3-642-66282-9},
	year = {1995}}

@Article{Kohler,
  author    = {Kohler, Sigmund and Dittrich, Thomas and H\"anggi, Peter},
  journal   = {Phys. Rev. E},
  title     = {Floquet-{Markovian} description of the parametrically driven, dissipative harmonic quantum oscillator},
  year      = {1997},
  issn      = {1095-3787},
  month     = jan,
  number    = {1},
  pages     = {300--313},
  volume    = {55},
  doi       = {10.1103/PhysRevE.55.300},
  issue     = {1},
  numpages  = {0},
  publisher = {American Physical Society},
  url       = {https://doi.org/10.1103/PhysRevE.55.300},
}

@Article{lindblad1976generators,
  author        = {Lindblad, Goran},
  journal       = {Commun. Math. Phys.},
  title         = {On the generators of quantum dynamical semigroups},
  year          = {1976},
  issn          = {1432-0916},
  month         = jun,
  number        = {2},
  pages         = {119--130},
  volume        = {48},
  doi           = {10.1007/BF01608499},
  publisher     = {Springer},
  url           = {https://doi.org/10.1007/BF01608499},
}

@Article{Nazir,
  author    = {Stokes, Adam and Nazir, Ahsan},
  journal   = {Nat. Commun.},
  title     = {Gauge ambiguities imply {Jaynes-Cummings} physics remains valid in ultrastrong coupling QED},
  year      = {2019},
  issn      = {2041-1723},
  month     = jan,
  number    = {1},
  pages     = {499},
  volume    = {10},
  doi       = {10.1038/s41467-018-08101-0},
  publisher = {Springer Science and Business Media LLC},
  url       = {https://doi.org/10.1038/s41467-018-08101-0},
}

@Book{nielsen2001quantum,
  author     = {Nielsen, Michael A. and Chuang, Isaac L.},
  publisher  = {Cambridge University Press},
  title      = {Quantum Computation and Quantum Information},
  year       = {2010},
  address    = {Cambridge},
  edition    = {{10th anniversary}},
  isbn       = {978-1-107-00217-3},
  month      = jun,
  doi        = {10.1017/CBO9780511976667},
  pagetotal  = {676},
  ppn_gvk    = {1618949691},
  url        = {https://doi.org/10.1017/CBO9780511976667},
}

@Article{redfield1957theory,
  author        = {Redfield, A. G.},
  journal       = {IBM J. Res. Dev.},
  title         = {On the Theory of Relaxation Processes},
  year          = {1957},
  issn          = {0018-8646},
  month         = jan,
  number        = {1},
  pages         = {19--31},
  volume        = {1},
  address       = {USA},
  doi           = {10.1147/rd.11.0019},
  issue_date    = {January 1957},
  numpages      = {13},
  publisher     = {IBM Corp.},
  url           = {https://doi.org/10.1147/rd.11.0019},
}

@book{reed1975ii,
	address = {San Diego},
	author = {Reed, Michael and Simon, Barry},
	isbn = {9780125850025},
	number = {Fourier analysis, self-adjointness},
	pagetotal = {361},
	ppn_gvk = {1404536809},
	publisher = {Academic Press},
	title = {Methods of Modern Mathematical Physics II: {Fourier} Analysis, Self-Adjointness},
	url = {https://www.elsevier.com/books/ii-fourier-analysis-self-adjointness/reed/978-0-08-092537-0},
	volume = {2},
	year = {1975}}

@Article{richter2024quantifying,
  author    = {Richter, Leonhard and Burgarth, Daniel and Lonigro, Davide},
  journal   = {J. Phys. A: Math. Theor.},
  title     = {Quantifying the rotating-wave approximation of the {Dicke} model},
  year      = {2026},
  issn      = {1751-8121},
  month     = feb,
  number    = {7},
  pages     = {075203},
  volume    = {59},
  doi       = {10.1088/1751-8121/ae42a3},
  publisher = {IOP Publishing},
  url       = {https://doi.org/10.1088/1751-8121/ae42a3},
}

@Article{shirley1965solution,
  author    = {Shirley, Jon H.},
  journal   = {Phys. Rev.},
  title     = {Solution of the {Schr{\"{o}}dinger} Equation with a {Hamiltonian} Periodic in Time},
  year      = {1965},
  issn      = {0031-899X},
  month     = may,
  number    = {4B},
  pages     = {B979--B987},
  volume    = {138},
  doi       = {10.1103/PhysRev.138.B979},
  publisher = {APS},
  url       = {https://doi.org/10.1103/PhysRev.138.B979},
}

@Article{Trushechkin,
  author    = {Trushechkin, Anton},
  journal   = {Phys. Rev. A},
  title     = {Unified {Gorini-Kossakowski-Lindblad-Sudarshan} quantum master equation beyond the secular approximation},
  year      = {2021},
  issn      = {2469-9934},
  month     = jun,
  number    = {6},
  pages     = {062226},
  volume    = {103},
  doi       = {10.1103/PhysRevA.103.062226},
  issue     = {6},
  numpages  = {12},
  publisher = {American Physical Society (APS)},
  url       = {https://doi.org/10.1103/PhysRevA.103.062226},
}

@Article{vanhove99,
  author    = {Facchi, Paolo and Pascazio, Saverio},
  journal   = {Physica A},
  title     = {Deviations from exponential law and {Van Hove's} ``$\lambda^2t$'' limit},
  year      = {1999},
  issn      = {0378-4371},
  month     = sep,
  number    = {1–2},
  pages     = {133--146},
  volume    = {271},
  doi       = {10.1016/S0378-4371(99)00209-5},
  publisher = {Elsevier BV},
  url       = {https://doi.org/10.1016/S0378-4371(99)00209-5},
}

@book{vanKampen,
	address = {Amsterdam},
	author = {Van Kampen, N. G.},
	doi = {10.1016/B978-0-444-52965-7.X5000-4},
	edition = {Third},
	isbn = {9780444529657},
	publisher = {Elsevier},
	title = {Stochastic Processes in Physics and Chemistry},
	url = {https://doi.org/10.1016/B978-0-444-52965-7.X5000-4},
	year = {2007}}

@Article{vH1955,
  author    = {Van Hove, L{\'{e}}on},
  journal   = {Physica},
  title     = {Energy corrections and persistent perturbation effects in continuous spectra},
  year      = {1955},
  issn      = {0031-8914},
  month     = jan,
  number    = {6–10},
  pages     = {901--923},
  volume    = {21},
  doi       = {10.1016/S0031-8914(55)92832-9},
  publisher = {Elsevier BV},
  url       = {https://doi.org/10.1016/S0031-8914(55)92832-9},
}

@book{watrous2018theory,
	address = {Cambridge},
	author = {Watrous, John},
	doi = {10.1017/9781316848142},
	isbn = {9781107180567},
	month = may,
	publisher = {Cambridge University Press},
	title = {The Theory of Quantum Information},
	url = {https://doi.org/10.1017/9781316848142},
	year = {2018}}

@Article{wolf2008dividing,
  author     = {Wolf, Michael M. and Cirac, J. Ignacio},
  journal    = {Commun. Math. Phys.},
  title      = {Dividing Quantum Channels},
  year       = {2008},
  issn       = {1432-0916},
  month      = feb,
  number     = {1},
  pages      = {147--168},
  volume     = {279},
  date       = {2008/04/01},
  doi        = {10.1007/s00220-008-0411-y},
  id         = {Wolf2008},
  isbn       = {1432-0916},
  publisher  = {Springer Science and Business Media LLC},
  url        = {https://doi.org/10.1007/s00220-008-0411-y},
}

@article{wolf2010inverse,
	author = {Wolf, Michael M. and Perez-Garcia, David},
	copyright = {arXiv.org perpetual, non-exclusive license},
	doi = {10.48550/arXiv.1005.4545},
	journal = {arXiv:1005.4545 [quant-ph]},
	month = may,
	publisher = {arXiv},
	title = {The inverse eigenvalue problem for quantum channels},
	url = {https://doi.org/10.48550/arXiv.1005.4545},
	year = {2010}}

@misc{wolf2020quantum,
	author = {Wolf, Michael M.},
	note = {{URL:} \href{https://mediatum.ub.tum.de/node?id=1701036}{https://mediatum.ub.tum.de/node?id=1701036}},
	title = {Quantum Channels \& Operations: Guided Tour},
	url = {https://mediatum.ub.tum.de/node?id=1701036},
	year = {2012}}

@article{wood2011tensor,
	author = {Wood, Christopher J. and Biamonte, Jacob D. and Cory, David G.},
	doi = {10.26421/QIC15.9-10},
	issn = {1533-7146},
	journal = {Quant. Inf. Comp.},
	month = jul,
	number = {9-10},
	pages = {0759--0811},
	publisher = {Rinton Press},
	title = {Tensor networks and graphical calculus for open quantum systems},
	url = {https://doi.org/10.26421/QIC15.9-10},
	volume = {15},
	year = {2015}}
\end{document}